\documentclass[11pt,a4paper]{article}
\usepackage{amssymb}
\usepackage{amsmath}
\usepackage{amsfonts}
\usepackage{bbm}
\usepackage{amsthm}
\usepackage{mathrsfs}
\usepackage{hyperref}
\usepackage{marvosym}
\usepackage{color}
\usepackage[margin=2.5cm]{geometry}
\usepackage[all,cmtip]{xy}
\usepackage[utf8]{inputenc}
\usepackage{graphicx}
\usepackage{varwidth}

\usepackage{upgreek}
\usepackage{rotating}

\usepackage{mathtools}

\makeatletter
\DeclareRobustCommand\widecheck[1]{{\mathpalette\@widecheck{#1}}}
\def\@widecheck#1#2{%
    \setbox\z@\hbox{\m@th$#1#2$}%
    \setbox\tw@\hbox{\m@th$#1%
       \widehat{%
          \vrule\@width\z@\@height\ht\z@
          \vrule\@height\z@\@width\wd\z@}$}%
    \dp\tw@-\ht\z@
    \@tempdima\ht\z@ \advance\@tempdima2\ht\tw@ \divide\@tempdima\thr@@
    \setbox\tw@\hbox{%
       \raise\@tempdima\hbox{\scalebox{1}[-1]{\lower\@tempdima\box
\tw@}}}%
    {\ooalign{\box\tw@ \cr \box\z@}}}
\makeatother



\definecolor{darkred}{rgb}{0.8,0.1,0.1}
\hypersetup{
     colorlinks=false,         
     linkcolor=darkred,
     citecolor=blue,
}

\theoremstyle{plain}
\newtheorem{theo}{Theorem}[section]
\newtheorem{lem}[theo]{Lemma}
\newtheorem{propo}[theo]{Proposition}
\newtheorem{cor}[theo]{Corollary}

\theoremstyle{definition}
\newtheorem{defi}[theo]{Definition}

\newenvironment{ex}
  {\pushQED{\qed}\exx}
  {\popQED\endexx}

\newenvironment{rem}
  {\pushQED{\qed}\remm}
  {\popQED\endremm}

\numberwithin{equation}{section}

\def\nn{\nonumber}

\def\bbR{\mathbb{R}}

\def\bbZ{\mathbb{Z}}

\def\Hom{\mathrm{Hom}}

\def\Imm{\mathrm{Im}}

\def\PSh{\mathrm{PSh}}

\def\id{\mathrm{id}}

\def\dd{\mathrm{d}}

\def\dim{\mathrm{dim}}
\def\1{\mathbbm{1}}

\def\op{\mathrm{op}}

\def\pr{\mathrm{pr}}

\def\Loc{\mathsf{Loc}}
\def\Man{\mathsf{Man}}

\def\Set{\mathsf{Set}}

\def\CC{\mathsf{C}}

\def\HH{\mathsf{H}}

\def\Grpd{\mathsf{Grpd}}
\def\Grp{\mathsf{Group}}

\def\ad{\mathrm{ad}}

\def\BG{\mathrm{B}G}
\def\BGcon{\mathrm{B}G_\mathrm{con}}
\def\GCon{G\mathbf{Con}}
\def\bfOmega{\mathbf{\Omega}}
\def\GBun{G\mathbf{Bun}}
\def\GSol{G\mathbf{Sol}}
\def\GData{G\mathbf{Data}}
\def\g{\mathfrak{g}}
\def\ra{\triangleleft}
\def\ver{\mathrm{vert}}
\def\data{\mathrm{data}}

\def\AAA{\mathbf{A}}
\def\PPP{\mathbf{P}}
\def\bfomega{\boldsymbol{\omega}}
\def\EEE{\mathbf{E}}
\def\hhh{\mathbf{h}}

\def\YM{\mathbf{YM}}
\def\FF{\mathbf{F}}

\def\holim{\mathrm{holim}}

\def\hocolim{\mathrm{hocolim}}

\def\sk{\vspace{2mm}}


\title{%
The stack of Yang-Mills fields on Lorentzian manifolds
}

\author{%
Marco Benini$^{1,2,a}$, Alexander Schenkel$^{3,b}$ and Urs Schreiber$^{4,c}$\vspace{3mm}\\
{\small ${}^1$ Institut f\"ur Mathematik, Universit\"at Potsdam,}\\
{\small Karl-Liebknecht-Str.~24-25, 14476 Potsdam, Germany.}\vspace{2mm}\\
{\small ${}^2$ Fachbereich Mathematik, Universit\"at Hamburg,}\\
{\small Bundesstr.~55, 20146 Hamburg, Germany.}\vspace{2mm}\\
{\small ${}^3$ School of Mathematical Sciences, University of Nottingham,}\\
{\small University Park, Nottingham NG7 2RD, United Kingdom.}\vspace{2mm}\\
{\small ${}^4$ Mathematics Institute of the Academy,}\\ 
{\small \v{Z}itna 25, 115 67 Praha 1, Czech Republic.}\vspace{3mm}\\
{\footnotesize 
$^a$\texttt{mbenini87@gmail.com}, $^b$\texttt{alexander.schenkel@nottingham.ac.uk}, $^c$\texttt{urs.schreiber@gmail.com}}
 }

\date{March 2018}


\begin{document}

\maketitle

\begin{abstract}
\noindent We provide an abstract definition and an explicit construction of the stack of non-Abelian Yang-Mills fields on globally hyperbolic Lorentzian manifolds. We also formulate a stacky version of the Yang-Mills Cauchy problem and show that its well-posedness is equivalent to a whole family of parametrized PDE problems. Our work is based on the homotopy theoretical approach to stacks proposed in [S.~Hollander, Israel J.\ Math.\ {\bf 163}, 93-124 (2008)], which we shall extend by further constructions that are relevant for our purposes. In particular, we will clarify the concretification of mapping stacks to classifying stacks such as $\BGcon$. 
\end{abstract}

\paragraph*{Keywords:} Yang-Mills theory, globally hyperbolic Lorentzian manifolds, Cauchy problem,
stacks, presheaves of groupoids, homotopical algebra, model categories
\paragraph*{MSC 2010:} 70S15, 18F20, 18G55

\tableofcontents




\section{\label{sec:intro}Introduction and summary}
Understanding quantum Yang-Mills theory is one of the most important 
and challenging open problems in mathematical physics. While approaching 
this problem in a fully non-perturbative fashion seems to be out of reach
within the near future, there are recent developments in classical and 
quantum field theory which make it plausible that quantum Yang-Mills theory
could relatively soon be constructed within a perturbative approach
that treats the coupling constant non-perturbatively, but Planck's constant $\hbar$ 
as a formal deformation parameter. See \cite{Cahiers,syntvar,Collini} and the next paragraph for further details. 
The advantage of such an approach 
compared to standard perturbative (algebraic) quantum field theory, 
see e.g.\ \cite{Kasia} for a recent monograph, is that by treating 
the coupling constant non-perturbatively the resulting quantum field theory
is sensitive to the global geometry of the field configuration spaces.
This is particularly interesting and rich in gauge theories, where the global geometry of
the space of gauge fields encodes various topological features such as
characteristic classes of the underlying principal bundles, holonomy groups 
and other topological charges.
\sk

Loosely speaking, the construction of a quantum field theory within this non-standard 
perturbative approach consists of the following three steps:
1.)~Understand the smooth structure and global geometry of
the space of solutions of the field equation of interest.
2.)~Use the approach of \cite{Zuckerman} to equip the solution space 
with a symplectic form and construct a corresponding Poisson algebra of smooth functions on it.
(The latter should be interpreted as the Poisson algebra of classical observables of our field theory.)
3.)~Employ suitable techniques from formal deformation quantization 
to quantize this Poisson algebra.
Even though stating these three steps in a loose language is very simple,
performing them rigorously is quite technical and challenging.
The reason behind this is that the configuration and solution spaces 
of field theories are typically infinite-dimensional, hence the standard techniques of 
differential geometry do not apply. In our opinion, the most
elegant and powerful method to study the smooth spaces appearing in field theories
is offered by sheaf topos techniques: In this approach a smooth space
$X$ is defined by coherently specifying {\em all} smooth maps $U \to X$ from
{\em all} finite-dimensional manifolds $U$ to $X$. More precisely,
this means that $X$ is a (pre)sheaf on a suitable site $\CC$,
which we may choose as the site of all finite-dimensional manifolds $U$
that are diffeomorphic to some $\bbR^n$, $n\in\bbZ_{\geq 0}$.
In particular, the maps from $U=\bbR^0$ determine the points of $X$ and the maps from
$U = \bbR^1$ the smooth curves in $X$. Employing such techniques, the first two steps of the program
sketched above have been successfully solved in \cite{Cahiers} and \cite{syntvar}
for the case of non-linear field theories without gauge symmetry. Concerning step 3.),
Collini \cite{Collini} obtained very interesting and promising results showing that Fedosov's construction
of a $\star$-product applies to the Poisson algebra of $\Phi^4$-theory in $4$ spacetime dimensions,
leading to a formal deformation quantization of this theory that is non-perturbative 
in the coupling constant. Collini's approach is different from ours as it
describes the infinite-dimensional spaces of fields and solutions by 
locally-convex manifolds, which are less flexible.
It would be an interesting problem to reformulate and generalize 
his results in our more elegant and powerful sheaf theoretic approach \cite{Cahiers,syntvar}.
\sk

The goal of this paper is to address and solve step 1.)~of the program sketched above for the case
of Yang-Mills theory with a possibly non-Abelian structure group $G$ on globally hyperbolic Lorentzian manifolds.
A crucial observation is that, due to the presence of gauge symmetries, the sheaf topos approach 
of \cite{Cahiers} and \cite{syntvar} is no longer sufficient and has to be generalized to
``higher sheaves'' ({\em stacks}) and the ``higher toposes'' they form.
We refer to \cite{Schreiber} for an overview of recent developments at the interface of higher topos theory
and mathematical physics, and also to \cite{Eggertsson} for a gentle introduction to the role of stacks in gauge theory.
See also \cite{StackyCS} for a formulation of Chern-Simons theory in this framework.
The basic idea behind this is as follows: The collection of all gauge fields on a manifold $M$
naturally forms a groupoid and not a set. The objects $(A,P)$ of this groupoid are principal $G$-bundles
$P$ over $M$ equipped with a connection $A$ and the morphisms 
$h : (A,P) \to (A^\prime,P^\prime)$
are gauge transformations, i.e.\ bundle isomorphisms preserving the connections.
There are two important points we would like to emphasize: 
i)~This groupoid picture is essential to capture all the topological
charge sectors of the gauge field (i.e.\ non-isomorphic principal bundles).
In particular, it is intrinsically non-perturbative as one does not have to fix 
a particular topological charge sector to perturb around.
ii)~Taking the quotient of the groupoid of gauge fields, i.e.\ forming ``gauge orbits'',
one loses crucial information that is encoded in the automorphism groups of objects of 
the groupoid, i.e.\ the stabilizer groups of bundles with connections. 
This eventually would destroy the important descent properties (i.e.\ gluing connections 
up to a gauge transformation) that are enjoyed by the groupoids of gauge fields, but 
not by their corresponding sets of ``gauge orbits''. 
It turns out that gauge fields on a manifold $M$ do not only form a groupoid but even a smooth groupoid. 
The latter may be described by groupoid-valued presheaves on our site $\CC$,
i.e.\ objects of the category $\HH := \PSh(\CC,\Grpd)$.
Following the seminal work by K.~Brown, J.~F.~Jardine and others, 
in a series of papers \cite{Hollander, Hollander2, Hollander3} Hollander 
developed the abstract theory of presheaves of groupoids
by using techniques from model category theory/homotopical algebra, 
see e.g.\ \cite{Dwyer} for a concise introduction.
One of her main insights was that the usual theory of {\em stacks} \cite{DM,Giraud} 
can be formalized very elegantly and efficiently in this framework by employing homotopical 
techniques. In short, stacks can be identified as those presheaves of groupoids satisfying a
notion of descent, which can be phrased in purely model categorical terms. 
With these techniques and developments in mind, we now can state more precisely the two main 
problems we address in this paper: 
{\bf (I)}~Understand and describe the stack of Yang-Mills fields 
in the framework of \cite{Hollander, Hollander2, Hollander3}.
{\bf (II)}~Understand what it means for the stacky version of the Yang-Mills Cauchy problem to be well-posed.
\sk

The present paper is part of a longer term research program of two of us (M.B.\ and A.S.)
on {\em homotopical algebraic quantum field theory}. The aim of this program is to develop
a novel and powerful framework for quantum field theory on Lorentzian manifolds
that combines ideas from locally covariant quantum field theory \cite{Brunetti}
with homotopical algebra \cite{Hovey, Dwyer}. This is essential to capture structural properties of quantum 
gauge theories that are lost at the level of gauge invariant observables. 
In previous works, we could confirm for toy-models 
that our homotopical framework is suitable to perform local-to-global 
constructions for gauge field observables \cite{hocolim} and
we constructed a class of toy-examples of homotopical algebraic quantum field
theories describing a combination of classical gauge fields and quantized matter fields \cite{horan}.
Based on the results of the present paper, we will be able to address steps 2.)\ and 3.)\
of the program outlined above for gauge theories and in particular for Yang-Mills theory.
We expect that this will allow us to obtain first examples of homotopical algebraic quantum field theories
which describe quantized gauge fields.
\sk

The outline of the remainder of this paper is as follows:
In Section \ref{sec:prelim} we provide a rather self-contained
introduction to presheaves of groupoids and their model category structures.
This should allow readers without much experience with this subject
to understand our statements and constructions.
In particular, we review the main results of \cite{Hollander}
which show that there are two model structures on the category $\HH = \PSh(\CC,\Grpd)$, 
called the global and the local model structure.
The local model structure, which is obtained by localizing the global one,
is crucial for detecting stacks in a purely model categorical fashion 
as those are precisely the {\em fibrant} objects for this model structure. 
We then provide many examples of stacks that are important in gauge theory,
including the stack represented by a manifold and some relevant classifying stacks, e.g.\
$\BGcon$ which classifies principal $G$-bundles with connections.
This section is concluded by explaining homotopy fiber products for stacks 
and (derived) mapping stacks, which are homotopically meaningful constructions on 
stacks that will be frequently used in our work.
\sk

In Section \ref{sec:gaugefields} we construct and explicitly describe the stack 
of gauge fields $\GCon(M)$ on a manifold $M$. Our main guiding principle 
is the following expectation of how the stack $\GCon(M)$ is supposed to look: 
Via the functor of points perspective, the groupoid $\GCon(M)(U)$ obtained by evaluating 
the stack $\GCon(M)$ on an object $U$ in $\CC$ should be interpreted as 
the groupoid of smooth maps $U \to \GCon(M)$. Because $\GCon(M)$ is supposed to
describe principal $G$-bundles with connections, any such smooth map will describe a smoothly 
$U$-parametrized family of principal $G$-bundles with connections on $M$, and 
the corresponding morphisms are smoothly $U$-parametrized gauge transformations. 
We shall obtain a precise and intrinsic definition of $\GCon(M)$ by concretifying 
the mapping stack from (a suitable cofibrant replacement of) the manifold $M$ 
to the classifying stack $\BGcon$. Our concretification prescription (cf.\ Definition \ref{defi:diffconcret}) improves 
the one originally proposed in \cite{Fiorenza,Schreiber}. In fact, as we explain in
more detail in Appendix \ref{app:concretification}, the original concretification does 
not produce the desired result (sketched above) for the stack of gauge fields, 
while our improved concretification fixes this issue.
We then show that assigning to connections their curvatures 
may be understood as a natural morphism $\FF_M : \GCon(M)\to \bfOmega^2(M,\ad(G))$
between concretified mapping stacks. 
\sk

Section \ref{sec:YM} is devoted to formalizing the Yang-Mills equation on globally 
hyperbolic Lorentzian manifolds $M$ in our model categorical framework. 
After providing a brief review of some basic aspects of
globally hyperbolic Lorentzian manifolds, we shall show that, similarly to the curvature, 
the Yang-Mills operator may be formalized as a natural morphism
$\YM_M : \GCon(M) \to \bfOmega^1(M,\ad(G))$ between concretified mapping stacks.
This then allows us to define abstractly the stack of solutions to the Yang-Mills equation $\GSol(M)$
by a suitable homotopy fiber product of stacks (cf.\ Definition \ref{defi:solstack}).
We shall explicitly work out the Yang-Mills stack $\GSol(M)$ and give 
a simple presentation up to weak equivalence in $\HH$.
This solves our problem {\bf (I)} listed above.
\sk

Problem {\bf (II)} is then addressed in Section \ref{sec:Cauchy}.
After introducing the stack of initial data $\GData(\Sigma)$ 
on a Cauchy surface $\Sigma\subseteq M$ and the morphism of stacks 
 $\data_\Sigma : \GSol(M) \to \GData(\Sigma)$ that assigns to
solutions of the Yang-Mills equation their initial data,
we formalize a notion of well-posedness for the stacky Yang-Mills Cauchy problem 
in the language of model categories (cf.\ Definition \ref{def:stackycauchypbl}).
Explicitly, we say that the stacky Cauchy problem is well-posed
if $\data_\Sigma $ is a weak equivalence in the local model structure on $\HH$.
We will then unravel this abstract condition and obtain that well-posedness of
the stacky Cauchy problem is equivalent to well-posedness 
of a whole family of smoothly $U$-parametrized Cauchy problems (cf.\ Proposition \ref{propo:cauchy}).
A particular member of this family 
(given by the trivial parametrization by a point $U=\bbR^0$) is the ordinary Yang-Mills Cauchy problem, 
which is known to be well-posed in dimension $m=2,3,4$ \cite{CSYM,CBYM}. 
To the best of our knowledge, smoothly $U$-parametrized Cauchy problems
of the kind we obtain in this work have not been studied in the PDE theory literature yet.
Because of their crucial role in understanding Yang-Mills theory, 
we believe that such problems deserve the attention of the PDE community. 
It is also interesting to notice that Yang-Mills theory, which is of primary interest to physics, 
provides a natural bridge connecting two seemingly distant branches of pure mathematics, 
namely homotopical algebra and PDE theory. 
\sk

In Section \ref{sec:Lorenz} we make the interesting observation that 
gauge fixings may be understood in our framework as weakly equivalent descriptions 
of the same stack. For the sake of simplifying our arguments,
we focus on the particular example of Lorenz gauge fixing, which is 
often used in applications to turn the Yang-Mills equation into a hyperbolic PDE.
We define a stack $\GSol_{\mathrm{g.f.}}(M)$ of gauge-fixed Yang-Mills fields,
which comes together with a morphism $\GSol_{\mathrm{g.f.}}(M) \to \GSol(M)$
to the stack of all Yang-Mills fields. Provided certain smoothly $U$-parametrized PDE 
problems admit a solution (cf.\ Proposition \ref{propo:fixing}), this morphism is a weak equivalence in $\HH$,
which means that the gauge-fixed Yang-Mills stack $\GSol_{\mathrm{g.f.}}(M)$ is
an equivalent description of $\GSol(M)$. 
\sk

The paper contains four appendices. In the first three appendices
we work out some relevant aspects of the model category $\HH$ of presheaves of groupoids
that were not discussed by Hollander in her series a papers \cite{Hollander, Hollander2, Hollander3}.
In Appendix \ref{app:monoidal} we show that $\HH$ is
a monoidal model category, which is essential for our construction of mapping stacks. In Appendix \ref{app:cofibrep} we
obtain functorial cofibrant replacements
of manifolds in $\HH$, which are needed for computing our mapping stacks explicitly.
In Appendix \ref{app:truncation} we develop very explicit techniques to compute fibrant
replacements in the $(-1)$-truncation (cf.\ \cite{Barwick,Rezk,ToenVezzosi} and also \cite{Lurie})
of the canonical model structure on over-categories $\HH/K$,
which are crucial for the concretification of our mapping stacks.
The last Appendix \ref{app:concretification} compares our concretification prescription
with the original one proposed in \cite{Fiorenza,Schreiber}. In particular, we show that the latter
does not lead to the desired stack of gauge fields, i.e.\ the stack describing smoothly parametrized
principal $G$-bundles with connections, which was our motivation to develop and propose
an improved concretification prescription in Definition \ref{defi:diffconcret}.


\section{\label{sec:prelim}Preliminaries}
We fix our notations and review some aspects of the theory of presheaves
of groupoids which are needed for our work. Our main reference for this section
is \cite{Hollander} and references therein. A good introduction
to model categories is \cite{Dwyer}, see also \cite{Hovey,Hirschhorn} for more details.
We shall use presheaves of groupoids as a model for stacks which are, loosely speaking,
generalized smooth spaces whose `points' may have non-trivial automorphisms.

\subsection{\label{subsec:Grpd}Groupoids}
Recall that a groupoid $G$ is a (small) category in which every morphism is an isomorphism.
For $G$ a groupoid, we denote its objects by symbols like $x$
and its morphisms by symbols like $g : x\to x^\prime$. A morphism $F : G\to H$ between 
two groupoids $G$ and $H$ is a functor between their underlying categories. 
We denote the category of groupoids by $\Grpd$.
\sk

The category $\Grpd$ is closed symmetric monoidal: The product $G\times H$ of two groupoids
is the groupoid whose object (morphism) set is the product of the object (morphism) sets
of $G$ and $H$. The monoidal unit is the groupoid $\{\ast\}$ with only one object and its identity morphism.
The functor $G\times (-) : \Grpd \to \Grpd$ has a right adjoint functor, which we denote by
$\Grpd(G,-) : \Grpd \to \Grpd$. We call $\Grpd(G,H)$ the  internal hom-groupoid
from $G$ to $H$. Explicitly, the objects of $\Grpd(G,H)$ are all
functors $F : G\to H$ and the morphisms from $F: G\to H$ to $F^\prime : G\to H$
are all natural transformations $\eta : F\to F^\prime$. It is easy to see that
a morphism in $\Grpd(G,H)$ may be equivalently described by a functor
\begin{flalign}\label{eqn:homgrpd}
U : G\times \Delta^1\longrightarrow H~,
\end{flalign}
where $\Delta^1$ is the groupoid with two objects, say $0$ and $1$, and a unique isomorphism $0\to 1$ 
between them. (The source and target of the morphism $U$ is obtained by restricting
$U$ to the objects $0$ and $1$ in $\Delta^1$, and their identity morphisms, and the natural transformation $\eta$ 
is obtained by evaluating $U$ on the morphism $0\to 1$ in $\Delta^1$.)
This latter perspective on morphisms in $\Grpd(G,H)$ will be useful later when we discuss presheaves of groupoids.
\sk

The category $\Grpd$ can be equipped with a model structure, i.e.\ one can do
homotopy theory with groupoids. For a proof of the theorem below we refer to \cite[Section~6]{Strickland}.
 Recall that a {\em model category}
is a complete and cocomplete category with three distinguished classes of morphisms
-- called {\em fibrations}, {\em cofibrations} and {\em weak equivalences} -- that have to satisfy a list of axioms,
see e.g.\ \cite{Dwyer} or \cite{Hovey,Hirschhorn}.
\begin{theo}\label{theo:groupoids}
Define a morphism $F : G\to H$ in $\Grpd$ to be
\begin{itemize}
\item[(i)] a weak equivalence if it is fully faithful and essentially surjective;

\item[(ii)] a fibration if for each object $x$ in $G$ and each morphism 
$h : F(x)\to y$ in $H$ there exists a morphism $g : x \to x^\prime$
in $G$ such that $F(g)=h$; and

\item[(iii)] a cofibration if it is injective on objects.
\end{itemize}
With these choices $\Grpd$ is a model category.
\end{theo}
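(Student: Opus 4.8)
The plan is to verify the model category axioms directly, using the fact that $\Grpd$ is equivalent (as a category) to a full subcategory of simplicial sets via the nerve, but it is cleaner here to argue intrinsically. First I would record the easy structural facts: $\Grpd$ is complete and cocomplete, since limits are computed as for categories and colimits of groupoids exist (free groupoid on a category quotient construction, or simply because $\Grpd$ is a reflective/coreflective situation inside $\Cat$). Next I would check the two-out-of-three property and retract closure of the three classes. Two-out-of-three for weak equivalences follows because ``fully faithful and essentially surjective'' is exactly the condition for $F$ to be an equivalence of categories, and equivalences compose and satisfy two-out-of-three in the evident way; retract closure is a short diagram chase on the hom-set bijections and on essential surjectivity. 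For fibrations and cofibrations, retract closure is immediate from the defining lifting/injectivity conditions.

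The heart of the proof is the two lifting axioms and the two factorization axioms. For lifting, I would first identify the acyclic (co)fibrations: a cofibration that is a weak equivalence is an injective-on-objects equivalence, and a fibration that is a weak equivalence is an ``isofibration'' that is also an equivalence. Then I would prove the two lifting properties by a hands-on construction: given a commutative square with a cofibration $i\colon A\to B$ on the left and a fibration $p\colon X\to Y$ on the right, with one of them acyclic, build the diagonal lift object-by-object. On objects of $A$ the lift is forced; on objects of $B\setminus A$ use essential surjectivity (of $i$ if $i$ is acyclic, giving a preferred isomorphism into the image of $A$) or the fibration lifting property of $p$ to choose images and simultaneously choose the comparison isomorphisms witnessing commutativity; then extend to morphisms using fullness/faithfulness. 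This is the step I expect to be the main obstacle, because one must make compatible choices of both the object assignments and the structure isomorphisms so that the resulting assignment is actually a functor and makes both triangles commute strictly --- the bookkeeping of coherence isomorphisms is where all the real work sits.

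For the factorization axioms I would use the standard mapping-path-groupoid and mapping-cylinder constructions transported from $\Cat$/$\sSet$. Concretely, to factor $F\colon G\to H$ as a cofibration followed by an acyclic fibration, replace $H$ by the groupoid $P_F$ whose objects are pairs $(x,h)$ with $x$ an object of $G$ and $h\colon F(x)\to y$ a morphism in $H$; the map $G\to P_F$ sending $x\mapsto(x,\id_{F(x)})$ is injective on objects (a cofibration) and $P_F\to H$, $(x,h)\mapsto y$, is an equivalence and an isofibration (hence an acyclic fibration). For the other factorization, dualize: factor $F$ through a mapping cylinder so that $G\to \mathrm{Cyl}(F)$ is an acyclic cofibration and $\mathrm{Cyl}(F)\to H$ is a fibration; one checks the first map is injective on objects and an equivalence, and the second satisfies the path-lifting condition. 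With completeness/cocompleteness, two-out-of-three, retracts, the two lifting axioms, and the two factorizations all in hand, the axioms of a model category are satisfied, which completes the proof. Throughout, the only genuinely delicate point remains the simultaneous choice of objects and witnessing isomorphisms in the lifting arguments; everything else is routine verification.
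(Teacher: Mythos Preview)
The paper does not prove this theorem; it simply refers to \cite[Section~6]{Strickland}. Your overall strategy---verifying the model-category axioms directly via explicit lifts and factorizations---is exactly the standard route and is what one finds in that reference.

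There is, however, a concrete error in your factorization step: you have the two constructions swapped. The mapping-path groupoid $P_F$ (objects $(x,h)$ with $x$ in $G$ and $h\colon F(x)\to y$ in $H$) does \emph{not} yield a (cofibration, acyclic fibration) factorization. The map $P_F\to H$, $(x,h)\mapsto y$, is an isofibration but is not essentially surjective when $H$ has a connected component disjoint from the image of $F$; hence it is not a weak equivalence, contrary to your claim. What $P_F$ actually gives is the \emph{other} factorization: $G\to P_F$, $x\mapsto(x,\id_{F(x)})$, is injective on objects and an equivalence (every $(x,h)$ is isomorphic to $(x,\id_{F(x)})$), so this is the (acyclic cofibration, fibration) factorization. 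Dually, the mapping cylinder $\mathrm{Cyl}(F)$ with object set $\mathrm{ob}(G)\sqcup\mathrm{ob}(H)$ and hom-sets pulled back from $H$ has $G\to\mathrm{Cyl}(F)$ merely a cofibration (not an equivalence unless $F$ already is), while $\mathrm{Cyl}(F)\to H$ is surjective on objects and fully faithful, i.e.\ an acyclic fibration; so $\mathrm{Cyl}(F)$ supplies the (cofibration, acyclic fibration) factorization. Once you interchange the roles of the two constructions, the rest of your outline goes through.
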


In the following, we will need a simple and tractable 
model for the {\em homotopy limit} $\holim_{\Grpd}^{}$
of a cosimplicial diagram in $\Grpd$. See e.g.\ \cite{Dwyer}
for a brief introduction to homotopy limits and colimits.
Such a model was found by Hollander in \cite{Hollander} in terms of the descent category.
\begin{propo}\label{propo:descentcat}
Let $G^\bullet$ be any cosimplicial diagram in $\Grpd$, i.e.\
\begin{flalign}
\xymatrix@C=1.1pc{
G^\bullet := \bigg(
G^0  \ar@<0.5ex>[r]^-{\dd^0}\ar@<-0.5ex>[r]_-{\dd^1}~&~
G^1 \ar@<1ex>[r]\ar@<-1ex>[r]\ar[r]~&~ G^2
\ar@<0.5ex>[r]\ar@<-0.5ex>[r]\ar@<1.5ex>[r]\ar@<-1.5ex>[r]~&~    \cdots\bigg)
}~,
\end{flalign}
where $G^n$ are groupoids, for all $n \in\bbZ_{\geq 0}$, and
we suppressed as usual the codegeneracy maps $\mathrm{s}^i$ in this cosimplicial diagram.
Then the homotopy limit $\holim_{\Grpd}^{} G^\bullet$ is the groupoid whose
\begin{itemize}
\item objects are pairs $(x , h )$, where $x$ is an object in $G^0$ 
and $h :  \dd^1(x) \to \dd^0(x) $ is a morphism in $G^1$, such that
$\mathrm{s}^0(h) = \id_{x}$ and $\dd^0(h)\circ \dd^2(h) = \dd^1(h)$ in $G^2$; and

\item morphisms $ g : (x,h)\to (x^\prime,h^\prime)$ are morphisms $g : x\to x^\prime$
in $G^0$, such that the diagram
\begin{flalign}
\xymatrix{
\ar[d]_-{h}\dd^1(x) \ar[rr]^-{\dd^1(g)}&& \dd^1(x^\prime)\ar[d]^-{h^\prime}\\
\dd^0(x) \ar[rr]_-{\dd^0(g)}&&\dd^0(x^\prime)
}
\end{flalign}
in $G^1$ commutes. 
\end{itemize}
\end{propo}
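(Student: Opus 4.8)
The plan is to reduce the computation of $\holim_{\Grpd}^{}G^\bullet$ to the standard Bousfield--Kan formula and then carry out that formula explicitly in the model category $\Grpd$ from Theorem \ref{theo:groupoids}. Recall that for a cosimplicial object $G^\bullet$ in a simplicial model category the homotopy limit is computed as the totalization $\mathrm{Tot}$ of the cosimplicial object $n \mapsto (G^n)^{\Delta^n}$, or equivalently as an end $\int_{n\in\Delta} \Grpd(\Delta^n, G^n)$, provided each $G^n$ is fibrant --- which here is automatic, since every object of $\Grpd$ is fibrant by Theorem \ref{theo:groupoids}(ii). Since the cosimplicial directions beyond $n=2$ impose no further data or relations (a standard coskeleton/matching-object argument: $\Grpd$ is a $1$-category, so the relevant cosimplicial replacement is $2$-coskeletal), only $G^0,G^1,G^2$ and the coface/codegeneracy maps among them contribute. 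So the first step is to set up the end/totalization formula and truncate it to the $2$-skeleton of $\Delta$.

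Second, I would unravel what an object and a morphism of this end-groupoid are. An object is a compatible family of functors $x_n : \Delta^n \to G^n$ (one for each $[n]\in\Delta$), intertwining the cofaces and codegeneracies; but using the cosimplicial identities this family is determined by $x := x_0(0)\in G^0$ together with the image of the non-degenerate $1$-simplex of $\Delta^1$ under $x_1$, which is a morphism $h : \dd^1(x)\to\dd^0(x)$ in $G^1$. The codegeneracy constraint $\mathrm{s}^0$ forces $\mathrm{s}^0(h)=\id_x$, and the $2$-cocycle constraint coming from $x_2 : \Delta^2\to G^2$ (functoriality of $x_2$ on the boundary triangle of $\Delta^2$, whose edges are pulled back to $\dd^i(h)$) forces exactly $\dd^0(h)\circ\dd^2(h)=\dd^1(h)$. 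Here one has to be a little careful with the simplicial orientation conventions --- which vertex of $\Delta^1$ is the source, and in which order the three edges of $\Delta^2$ compose --- but this is bookkeeping, not a real difficulty. Similarly a morphism in the end is a compatible family of natural transformations $\eta_n$ between the $x_n$'s and the $x'_n$'s; using the description \eqref{eqn:homgrpd} of morphisms in internal hom-groupoids as functors out of $\Delta^1$, such a family is pinned down by $g := \eta_0$ evaluated on $0\in\Delta^1$, i.e.\ a morphism $g:x\to x'$ in $G^0$, and naturality of $\eta_1$ applied to the non-degenerate edge of $\Delta^1$ gives precisely the commuting square $h'\circ\dd^1(g)=\dd^0(g)\circ h$ in $G^1$. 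One then checks that composition and identities of these morphisms are induced from $G^0$, so the end is indeed the groupoid described in the statement.

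Third, to be safe one should verify independently --- rather than quoting Bousfield--Kan --- that the groupoid $D$ so constructed really does model the homotopy limit, i.e.\ that $D$ is weakly equivalent in $\Grpd$ to $\holim_{\Grpd}^{}G^\bullet$ for \emph{any} (not necessarily Reedy-fibrant) $G^\bullet$. The clean way is: show $D$ is invariant (up to weak equivalence in $\Grpd$) under objectwise weak equivalences $G^\bullet \to \widetilde{G}^\bullet$ of cosimplicial diagrams --- this is a direct check that a pointwise equivalence induces a fully faithful, essentially surjective functor on descent groupoids, using that weak equivalences in $\Grpd$ are the fully faithful essentially surjective functors of Theorem \ref{theo:groupoids}(i) --- and then compare with a Reedy-fibrant replacement, for which $D$ agrees with the end-formula homotopy limit by the first two steps. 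Alternatively one simply cites \cite{Hollander} for this identification, since the descent groupoid is exactly Hollander's model. I expect the main obstacle to be precisely this homotopy-invariance / well-definedness check (making sure the explicit $2$-truncated formula is insensitive to Reedy fibrancy), together with getting all the simplicial sign/orientation conventions consistent so that the cocycle condition comes out as $\dd^0(h)\circ\dd^2(h)=\dd^1(h)$ rather than some permuted version; the extraction of the data $(x,h)$ and of the morphisms $g$ from the end is routine once the conventions are fixed.
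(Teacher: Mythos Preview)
The paper does not actually prove this proposition: it is stated immediately after the sentence ``Such a model was found by Hollander in \cite{Hollander} in terms of the descent category,'' and no proof is given. So there is nothing to compare your argument against beyond the citation.

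Your proof plan is sound and is essentially the argument Hollander gives: realize $\holim_{\Grpd}G^\bullet$ as the Bousfield--Kan totalization $\int_{[n]\in\Delta}(G^n)^{\Delta^n}$, use that every groupoid is fibrant, observe that because $\Grpd$ is a $1$-category the totalization is determined by its $2$-truncation, and unwind the data. Your third step (checking homotopy invariance of the explicit descent groupoid under levelwise equivalences, to avoid assuming Reedy fibrancy) is exactly the subtle point, and you have identified it correctly. In short, your plan reproduces the cited result rather than diverging from it; the paper itself simply quotes Hollander.
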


\subsection{\label{subsec:H}Presheaves of groupoids and stacks}
Let $\CC$ be the category with objects given by all 
(finite-dimensional and paracompact) manifolds $U$ that are 
diffeomorphic to a Cartesian space $\bbR^n$, $n\in \bbZ_{\geq 0}$, 
and morphisms given by all smooth maps $\ell : U\to U^\prime$.
Notice that $U$ and $U^\prime$ are allowed to have different dimensions.
We equip $\CC$ with the structure of a site by declaring
a family of morphisms
\begin{flalign}
\big\{\ell_i : U_i \longrightarrow U\big\}
\end{flalign}
in $\CC$ to be a covering family whenever $\{U_i\subseteq U\}$ is a good open cover of $U$
and $\ell_i : U_i \to U$ are the canonical inclusions. As usual, we denote
intersections of the $U_i$ by $U_{i_1\dots i_n} := U_{i_1}\cap \cdots \cap U_{i_n}$.
By definition of good open cover, these intersections are either empty 
or open subsets diffeomorphic to some $\bbR^n$, i.e.\ objects in $\CC$.
We note that our site $\CC$ is equivalent to the site $\mathsf{Cart}$ of Cartesian spaces 
used in \cite{Fiorenza,Schreiber}. We however prefer to work with $\CC$ instead of
$\mathsf{Cart}$, because covering families $\{\ell_i : U_i = \bbR^m \to U=\bbR^m\}$
in $\mathsf{Cart}$ are less intuitive as the $\ell_i$ are in general {\em not} subset inclusions.
In practice, see e.g.\ the examples in Section \ref{subsec:examples}, 
this would complicate the notations for cocycle conditions,
where, instead of the familiar restrictions $\vert_{U_i}$ of functions or forms 
to $U_i\subseteq U$ (as it is the case for the site $\CC$), 
pullbacks along more general smooth maps $\ell_i$ would appear.
\sk

Let us denote by
\begin{flalign}
\HH := \PSh(\CC,\Grpd)
\end{flalign}
the category of presheaves on $\CC$ with values in $\Grpd$.
(At this point our site structure on $\CC$ does not yet play a role, but it will enter later
when we introduce a model structure on $\HH$.)
An object in $\HH$ is a functor $X : \CC^\op \to \Grpd$
from the opposite category of $\CC$ to the category of groupoids.
A morphism in $\HH$ is a natural transformation $f : X \to Y$
between functors $X,Y : \CC^\op \to \Grpd$.
\sk

Let us recall the fully faithful Yoneda embedding 
\begin{flalign}\label{eqn:Yoneda}
\underline{(-)} : \CC\longrightarrow \HH~.
\end{flalign}
It assigns to an object $U$ in $\CC$ the functor 
$\underline{U} : \CC^\op\to \Grpd$ that acts on objects as
\begin{subequations}
\begin{flalign}
\underline{U}(U^\prime) := \Hom_{\CC}^{}(U^\prime,U)~,
\end{flalign} 
where $\Hom^{}_{\CC}$ are the morphism sets in $\CC$ which we regard 
as groupoids with just identity morphisms, and on 
morphisms $\ell^\prime : U^\prime\to U^{\prime\prime}$ as
\begin{flalign}
\underline{U}(\ell^\prime) : \underline{U}(U^{\prime\prime}) \longrightarrow \underline{U}(U^\prime)~,~~
\big(\ell : U^{\prime\prime}\to U\big)\longmapsto \big(\ell\circ \ell^\prime : U^{\prime} \to U\big)~.
\end{flalign}
\end{subequations}
To a morphism $\ell : U\to U^{\prime}$ in $\CC$ the Yoneda
embedding assigns the morphism $\underline{\ell} : \underline{U}\to \underline{U^\prime}$
in $\HH$ whose stages are
\begin{flalign}
\underline{\ell} : \underline{U}(U^{\prime\prime})\longrightarrow \underline{U^\prime}(U^{\prime\prime})
~,~~\big(\ell^\prime : U^{\prime\prime}\to U\big)\longmapsto \big(\ell\circ \ell^\prime : U^{\prime\prime} \to U^{\prime}\big)~,
\end{flalign}
for all objects $U^{\prime\prime}$ in $\CC$.
\sk

Given two objects $X$ and $Y$ in $\HH$, their product
$X\times Y$ in $\HH$ is given by the functor $X\times Y : \CC^\op\to \Grpd$ that acts on objects as
\begin{subequations}
\begin{flalign}
(X\times Y) (U) := X(U)\times Y(U)~,
\end{flalign}
where on the right-hand side the product is the one in $\Grpd$,
and on morphisms $\ell : U\to U^\prime$ as
\begin{flalign}
(X\times Y) (\ell) :=X(\ell)\times Y(\ell) :  (X\times Y)(U^\prime) \longrightarrow (X\times Y)(U)~.
\end{flalign}
\end{subequations}
The product $\times$ equips $\HH$ with the structure of a symmetric monoidal category.
The unit object is the constant presheaf of groupoids defined by the groupoid $\{\ast\}$,
i.e.\ the functor $\CC^\op \to\Grpd$ that assigns $U\mapsto \{\ast\}$ and $(\ell: U\to U^\prime)\mapsto
\id_{\{\ast\}}$. In the following we denote groupoids and their corresponding constant
presheaves of groupoids by the same symbols. Explicitly, when $G$ is a groupoid
we also denote by $G$ the presheaf of groupoids specified by $U\mapsto G$ and $
(\ell : U\to U^\prime)\mapsto \id_{G}$. For example, we denote the unit object in $\HH$
simply by $\{\ast\}$.
\sk

The symmetric monoidal category $\HH$ has internal homs, i.e.\ it is closed symmetric monoidal.
To describe those explicitly, we first have to introduce mapping groupoids
$\Grpd(X,Y)$ between two objects $X$ and $Y$ in $\HH$.
Recalling \eqref{eqn:homgrpd}, we define $\Grpd(X,Y)$ to be the groupoid
with objects given by all  $\HH$-morphisms
\begin{subequations}
\begin{flalign}
f : X\longrightarrow Y
\end{flalign}
and morphisms given by all $\HH$-morphisms
\begin{flalign}
u : X\times \Delta^1\longrightarrow Y~,
\end{flalign}
\end{subequations}
where $\Delta^1$ is the groupoid from \eqref{eqn:homgrpd} regarded as a constant presheaf of groupoids.
The internal hom-object $Y^X$ in $\HH$ is then given by the functor
$Y^X : \CC^\op \to \Grpd$ that acts on objects as
\begin{flalign}
Y^X(U) := \Grpd(\underline{U} \times X,Y)~.
\end{flalign}
On morphisms $\ell : U\to U^\prime$ in $\CC$, we define
\begin{flalign}
Y^X(\ell) : Y^X(U^\prime)\longrightarrow Y^X(U)~
\end{flalign}
as follows: It is the functor that assigns to an object
$f : \underline{U}^\prime\times X\to Y$ of the groupoid $ Y^X(U^\prime)$
the object $Y^X(\ell)(f) : \underline{U}\times X\to Y$ of the groupoid
$Y^X(U)$ that is defined by the composition
$Y^X(\ell)(f) := f \circ (\underline{\ell}\times \id_X)$.
To a morphism $u : \underline{U^\prime}\times X\times \Delta^1\to Y$
of the groupoid $ Y^X(U^\prime)$ it assigns the morphism $Y^X(\ell)(u) : 
\underline{U}\times X\times\Delta^1\to Y$ of the groupoid
$Y^X(U)$ that is defined by the composition
$Y^X(\ell)(u):=u \circ (\underline{\ell}\times\id_X\times \id_{\Delta^1})$.
For the terminal object $\bbR^0$ in $\CC$, we have that $Y^X(\bbR^0) = \Grpd(X,Y)$ is the mapping groupoid.
\sk

Our category $\HH$ can be equipped with (at least) two model structures.
To define them, let us recall that a 
morphism $f : X\to Y$ in $\HH$ is said to have the {\em left lifting property}
with respect to a morphism $f^\prime : X^\prime\to Y^\prime$ in $\HH$
if all commutative squares of the form
\begin{flalign}
\xymatrix{
\ar[d]_-{f}\ar[rr] X && \ar[d]^-{f^\prime}X^\prime\\
\ar@{-->}[rru]_-{f^{\prime\prime}}Y \ar[rr] && Y^\prime
}
\end{flalign}
admit a lift $f^{\prime\prime}$, i.e.\ the two triangles commute.
Vice versa, one says that a morphism $f^\prime : X^\prime\to Y^\prime$ in $\HH$
has the {\em right lifting property} with respect to a morphism $f : X\to Y$ in $\HH$
if all commutative squares of the form above admit a lift $f^{\prime\prime}$.
\begin{lem}[Global model structure on $\HH$ \cite{Hollander}]\label{lem:globmod}
Define a morphism $f : X\to Y$ in $\HH$ to be
\begin{itemize}
\item[(i)] a weak equivalence if each stage $f : X(U)\to Y(U)$ is a weak equivalence in $\Grpd$;

\item[(ii)] a fibration if each stage $f : X(U)\to Y(U)$ is a fibration in $\Grpd$;

\item[(iii)] a cofibration if it has the left lifting property with respect to all acyclic fibrations (i.e.\ all morphisms
in $\HH$ that are both fibrations and weak equivalences).
\end{itemize}
With these choices $\HH$ becomes a model category. This is called the global
model structure on $\HH$.
\end{lem}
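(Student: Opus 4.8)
The plan is to verify the model category axioms for $\HH$ by transferring the model structure on $\Grpd$ (Theorem \ref{theo:groupoids}) to the presheaf category pointwise (``injectively''), using the standard machinery for cofibrantly generated model categories. First I would observe that $\HH = \PSh(\CC,\Grpd)$ is complete and cocomplete because limits and colimits of presheaves are computed stagewise in $\Grpd$, and $\Grpd$ itself is complete and cocomplete. The two-out-of-three property and retract stability for weak equivalences and fibrations are immediate from the corresponding properties in $\Grpd$, since these classes are defined stagewise; retract stability of cofibrations follows formally from their definition via the left lifting property. The lifting axiom that acyclic cofibrations lift against fibrations, and cofibrations lift against acyclic fibrations, splits into two halves: one half is the \emph{definition} of cofibration, so it is automatic; the other half is the real content and is where I expect the main work.

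The key step is therefore the two factorization axioms together with the nontrivial lifting property. My approach would be to use the small object argument. Since $\Grpd$ is cofibrantly generated (by Theorem \ref{theo:groupoids} together with the standard fact that $\Grpd$ is a cofibrantly generated model category, with generating cofibrations $I$ and generating acyclic cofibrations $J$ built from the boundary inclusions into $\Delta^1$ and the point), one forms the sets $I_{\CC} := \{\,\underline{U}\otimes i : i\in I,\ U\in\CC\,\}$ and $J_{\CC} := \{\,\underline{U}\otimes j : j\in J,\ U\in\CC\,\}$ of morphisms in $\HH$, where $\underline{U}\otimes(-)$ denotes the ``free presheaf on $U$'' construction left adjoint to evaluation at $U$, i.e.\ $(\underline{U}\otimes G)(U') = \coprod_{\Hom_\CC(U',U)} G$. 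By adjunction, a morphism in $\HH$ has the right lifting property with respect to $J_{\CC}$ (resp.\ $I_{\CC}$) if and only if each of its stages has the right lifting property with respect to $J$ (resp.\ $I$) in $\Grpd$, i.e.\ if and only if it is a stagewise fibration (resp.\ stagewise acyclic fibration), which is exactly a fibration (resp.\ acyclic fibration) in the proposed global model structure on $\HH$. Because $\Grpd$ is locally presentable, the domains of $I$ and $J$ are small, hence so are the domains of $I_{\CC}$ and $J_{\CC}$ in $\HH$; the small object argument then yields the two functorial factorizations: every morphism factors as an $I_{\CC}$-cell complex followed by an $I_{\CC}$-injective (a stagewise acyclic fibration), and as a $J_{\CC}$-cell complex followed by a $J_{\CC}$-injective (a stagewise fibration).

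It then remains to identify the left classes produced by these factorizations with cofibrations and acyclic cofibrations. By the retract argument, the $I_{\CC}$-cofibrations (retracts of $I_{\CC}$-cell complexes) coincide with the maps having the left lifting property against all stagewise acyclic fibrations, i.e.\ with the cofibrations of Lemma \ref{lem:globmod}(iii); this simultaneously gives the cofibration/acyclic-fibration factorization and the corresponding lifting property. For the other factorization one must check that $J_{\CC}$-cofibrations are precisely the acyclic cofibrations. One inclusion is routine: each generator $\underline{U}\otimes j$ is a stagewise weak equivalence (since $\underline{U}\otimes(-)$ applied to an acyclic cofibration in $\Grpd$ is, stagewise, a coproduct of copies of an acyclic cofibration, and weak equivalences in $\Grpd$ are closed under coproducts) and a cofibration, so any transfinite composite of pushouts of such maps is an acyclic cofibration. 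The reverse inclusion — that every acyclic cofibration is a $J_{\CC}$-cofibration — is the subtle point; the standard way to get it is to show that a map which is simultaneously a cofibration and a stagewise weak equivalence has the left lifting property against all fibrations. I would obtain this by factoring such a map $f$ as a $J_{\CC}$-cofibration followed by a stagewise fibration, noting the fibration is then a stagewise weak equivalence by two-out-of-three, hence a stagewise acyclic fibration, against which $f$ (being a cofibration) lifts; the usual retract argument then exhibits $f$ as a retract of the $J_{\CC}$-cofibration, completing the proof. The hard part is thus the acyclic-cofibration identification, and the clean route is via this ``lifting $\Rightarrow$ retract'' argument rather than via a direct combinatorial analysis of $J_{\CC}$-cell complexes. (Since this model structure is Hollander's, one may alternatively simply cite \cite{Hollander}; the sketch above records why the axioms hold.)
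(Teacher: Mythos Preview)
The paper does not give its own proof of this lemma; it simply records the statement and cites \cite{Hollander}. Your proposal is therefore not being compared against a proof in the paper but against the standard argument that underlies the cited reference.

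Your sketch is essentially the correct standard construction of the \emph{projective} model structure on a diagram category (Hirschhorn/Hovey): generating sets $I_\CC$, $J_\CC$ obtained by tensoring representables with the generating (acyclic) cofibrations of $\Grpd$, the adjunction identifying $I_\CC$- and $J_\CC$-injectives with stagewise acyclic fibrations and stagewise fibrations, the small object argument for factorizations, and the retract argument for the remaining lifting/identification. Two small remarks. First, your parenthetical ``injectively'' is a slip: what you describe, and what the lemma states, is the \emph{projective} model structure (fibrations and weak equivalences stagewise, cofibrations by LLP); in the injective structure it is the cofibrations that are stagewise. Second, the step ``transfinite composites of pushouts of the generators are acyclic cofibrations'' deserves one more sentence of justification: colimits in $\HH$ are computed stagewise, and stagewise each generator $\underline{U}\otimes j$ is a coproduct of copies of an acyclic cofibration in $\Grpd$, hence itself an acyclic cofibration there; since acyclic cofibrations in $\Grpd$ (characterized by LLP against fibrations) are closed under coproducts, pushouts and transfinite composites, the stagewise weak equivalence property persists through the cell complex construction. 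With those clarifications your argument is sound and matches the approach in the cited reference.
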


The global model structure on $\HH = \PSh(\CC,\Grpd)$ 
is not yet the correct one as it does not encode any information about our site structure on $\CC$.
It was shown in \cite{Hollander} that one can localize the global model structure
to obtain what is called the local model structure on $\HH$. See also \cite{Hirschhorn}
or \cite[Section 5]{Dugger} for details on localizations of model categories.
The set of $\HH$-morphisms with respect to which one localizes is given by
\begin{flalign}\label{eqn:locmorph}
S := \big\{ \hocolim_{\HH}^{}\, \underline{U_{\bullet}} \longrightarrow \underline{U} \, : \, \{U_i\subseteq U\} \text{ good open cover}\, \big\}~.
\end{flalign}
As before, $\underline{U}$ is the object in $\HH$ that is represented by the object 
$U$ in $\CC$ via the Yoneda embedding.
By $\underline{U_{\bullet}} $ we denote the simplicial diagram in $\HH$ given by 
\begin{flalign}\label{eqn:simplicaldiag}
\underline{U_\bullet} := \bigg( 
\xymatrix@C=1.1pc{
\coprod\limits_{i} \underline{U_i} ~&~ \ar@<0.5ex>[l]\ar@<-0.5ex>[l] \coprod\limits_{ij} \underline{U_{ij}} ~&~\ar@<1ex>[l]\ar@<-1ex>[l]\ar[l] \coprod\limits_{ijk}\underline{U_{ijk}}
~&~ \ar@<0.5ex>[l]\ar@<-0.5ex>[l]\ar@<1.5ex>[l]\ar@<-1.5ex>[l]   \cdots
}  \bigg)~,
\end{flalign}
where $\coprod$ denotes the coproduct in $\HH$ and we suppressed as usual the degeneracy maps. 
Moreover, $\hocolim_{\HH}^{}\, \underline{U_{\bullet}} $ denotes the homotopy
colimit of the simplicial diagram \eqref{eqn:simplicaldiag} in $\HH$.
\sk

To describe the weak equivalences in the local model structure
on $\HH$ it is useful to assign sheaves of homotopy groups to 
presheaves of groupoids, cf.\ \cite[Section 5]{Hollander}.
\begin{defi}\label{def:sheavesofhomgroups}
Let $X : \CC^\op\to\Grpd$ be an object in $\HH$.
\begin{itemize}
\item[a)] Then $\pi_0 (X) : \CC^\op\to\Set$ is the presheaf of sets defined by
$\pi_0 (X)(U) := \pi_0(X(U))$, for all objects $U$ in $\CC$. (Here $\pi_0 (H)$ denotes the set 
of isomorphism classes of objects of a groupoid $H$.) 

\item[b)] Given any object $x$ in $X(U)$, then $\pi_1(X,x) : (\CC/U)^\op \to \Grp$ is the presheaf of groups
on the over-category $\CC/U$ defined by
\begin{flalign}
\pi_1(X,x)\big(\ell  \big) := \pi_1\big(X(U^\prime), X(\ell)(x)\big)~,
\end{flalign}
for all objects $\ell : U^\prime\to U$ in  $\CC/U$. (Here $\pi_1(H,y)$ denotes the automorphism group
of an object $y$ of a groupoid $H$.)
\end{itemize}
The sheafifications of $\pi_0 (X)$ and $\pi_1(X,x)$, for all objects $x$ in $X(U)$
and all objects $U$ in $\CC$, are called the {\em sheaves of homotopy groups} 
associated to an object $X$ in $\HH$.
\end{defi}

The following theorem summarizes the relevant aspects of the local model structure on $\HH$.
\begin{theo}[Local model structure on $\HH$ \cite{Hollander}]\label{theo:localH}
There exists a model structure on $\HH$ which is the localization of the global model structure
on $\HH$ (cf.\ Lemma \ref{lem:globmod}) with respect to the set of morphisms \eqref{eqn:locmorph}.
It is called the local model structure on $\HH$.  The following holds true:
\begin{itemize}
\item[(i)] A morphism $f : X\to Y$ in $\HH$ is a weak equivalence in 
the local model structure if and only if it induces an isomorphism on the associated sheaves of homotopy groups.

\item[(ii)] A morphism $f : X\to Y$ in $\HH$ is a cofibration in 
the local model structure if and only if it is a cofibration in the global model structure.

\item[(iii)] A morphism $f : X\to Y$ in $\HH$ is a fibration in 
the local model structure if and only if it has the right lifting property with respect
to all acyclic cofibrations in the local model structure. See also \cite[Proposition 4.2]{Hollander3}
or Proposition \ref{propo:localfibration} for a more explicit characterization.

\item[(iv)] An object $X$ in $\HH$ is fibrant in the local model structure, 
i.e.\ the canonical morphism $X\to\{\ast\}$ to the terminal object in $\HH$
is a fibration, if and only if for all good open covers $\{U_i\subseteq U\}$ the canonical morphism
\begin{flalign}\label{eqn:stack}
\xymatrix@C=1.1pc{
X(U) ~\ar[rr] &&~ \holim_{\Grpd}^{}\bigg(
\prod\limits_{i} X(U_i)  \ar@<0.5ex>[r]\ar@<-0.5ex>[r]~&~
\prod\limits_{ij} X(U_{ij}) \ar@<1ex>[r]\ar@<-1ex>[r]\ar[r]~&~ \prod\limits_{ijk} X(U_{ijk})
\ar@<0.5ex>[r]\ar@<-0.5ex>[r]\ar@<1.5ex>[r]\ar@<-1.5ex>[r]~&~    \cdots\bigg)
}
\end{flalign}
is a weak equivalence in $\Grpd$, where $\holim_{\Grpd}^{}$ denotes the homotopy 
limit of cosimplicial diagrams in $\Grpd$.

\item[(v)] A morphism $f : X\to Y$ in $\HH$ between two fibrant objects $X$ and $Y$ 
in the local model structure is a weak equivalence in the local model structure if and only if
each stage $f : X(U)\to Y(U)$ is a weak equivalence in $\Grpd$.
\end{itemize}
\end{theo}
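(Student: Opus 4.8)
The plan is to obtain the asserted model structure as the left Bousfield localization of the global model structure of Lemma~\ref{lem:globmod} at the set $S$ of morphisms in \eqref{eqn:locmorph}, and then to make the resulting notions of weak equivalence and fibrant object explicit. First I would record that $\HH$ with the global model structure satisfies the hypotheses needed to form such a localization (left properness and combinatoriality, inherited from the corresponding properties of $\Grpd$, in which, by Theorem~\ref{theo:groupoids}, every object is both fibrant and cofibrant), so that the general machinery of \cite{Hirschhorn} (see also \cite{Dugger}) produces a model structure $\HH_S$ on the same underlying category, whose weak equivalences are the $S$-local equivalences and whose cofibrations are the global cofibrations; this is~(ii). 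Item~(iii) is then just the general characterization of fibrations in any model category by the right lifting property against the acyclic cofibrations, here the \emph{local} ones. For~(v) I would use that every object of $\HH$ is globally fibrant (since every groupoid is fibrant in $\Grpd$), so a morphism between locally fibrant objects is in particular a morphism between globally fibrant objects; the general fact \cite{Hirschhorn} that an $S$-local equivalence between $S$-local objects is already a weak equivalence of the unlocalized structure, combined with Lemma~\ref{lem:globmod}(i), then yields exactly~(v).

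For~(iv) I would compute the $S$-local objects directly. An object $X$ is locally fibrant precisely when it is $S$-local, i.e.\ when for every good open cover $\{U_i\subseteq U\}$ the corresponding morphism in $S$ induces a weak equivalence on derived mapping groupoids into $X$. Using that $X$ is globally fibrant, the derived mapping groupoid out of the representable $\underline{U}$ is computed by the enriched Yoneda lemma as $X(U)$, while the derived mapping groupoid out of $\hocolim_\HH \underline{U_\bullet}$ is, since derived mapping groupoids turn homotopy colimits of the source into homotopy limits and $\underline{(-)}$ preserves coproducts, the homotopy limit of the cosimplicial groupoid $n \mapsto \prod_{i_0\cdots i_n} X(U_{i_0\cdots i_n})$. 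Hence $X$ is $S$-local exactly when the canonical comparison map \eqref{eqn:stack} is a weak equivalence in $\Grpd$ for every good open cover, and Proposition~\ref{propo:descentcat} makes the target of \eqref{eqn:stack} explicit as the descent groupoid. This establishes~(iv).

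The technical heart, and the step I expect to be the main obstacle, is~(i): identifying the $S$-local equivalences with the morphisms inducing isomorphisms on the sheaves of homotopy groups of Definition~\ref{def:sheavesofhomgroups}. Here I would follow Hollander \cite{Hollander} (building on Jardine) and compare $\HH_S$ with the Jardine-type local model structure on $\HH$, whose cofibrations are the global cofibrations and whose weak equivalences are by definition the $\pi_\ast$-isomorphisms $W$. One checks that $W$ contains the global weak equivalences (sheafification preserves isomorphisms) and that $S\subseteq W$ --- the latter being the assertion that a good open cover computes the sheafification, so that each $\hocolim_\HH \underline{U_\bullet}\to\underline{U}$ is an isomorphism on sheaves of homotopy groups --- and one invokes the (non-formal but established) existence of the model structure with cofibrations and weak equivalences $(\text{global cofibrations},\,W)$. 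Since a left Bousfield localization of $\HH$ is pinned down by its class of local, i.e.\ fibrant, objects, it then suffices to match fibrant objects: carrying out the computation of step~(iv) with $W$ in place of $S$ shows that the $W$-fibrant objects are again precisely the stacks, hence coincide with the $S$-local objects. Therefore this model structure equals $\HH_S$, which gives $W=(S\text{-local equivalences})$ and proves~(i). The delicate inputs here are the inclusion $S\subseteq W$ and the descent characterization of $W$-fibrant objects; everything else is formal localization theory.
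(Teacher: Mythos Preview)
The paper does not supply its own proof of this theorem; it is quoted from \cite{Hollander} (with the explicit fibration criterion deferred to \cite{Hollander3} and Proposition~\ref{propo:localfibration}), so there is nothing in the paper to compare against at the level of argument. Your sketch is a faithful outline of Hollander's strategy: existence via left Bousfield localization, (ii)--(iii) formal, (iv) by identifying $S$-local objects through the derived mapping-space characterization together with Yoneda and the exchange of homotopy colimits of the source for homotopy limits, (v) from the general fact that local equivalences between local objects are global equivalences, and (i) by matching fibrant objects with the Jardine-type structure whose weak equivalences are the $\pi_\ast$-isomorphisms.

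One small point of care in your argument for (iv): to invoke ``derived mapping groupoids turn homotopy colimits of the source into homotopy limits'' you are implicitly using that the global model structure on $\HH$ is simplicial (equivalently, $\Grpd$-enriched in a compatible way) and that $\hocolim_{\HH}\underline{U_\bullet}$ is cofibrant, or else that you have replaced it cofibrantly before mapping out. This is true here --- Hollander verifies the simplicial enrichment and the relevant (co)fibrancy --- but it is worth making explicit, since without it the passage from $\Grpd(\hocolim\,\underline{U_\bullet},X)$ to $\holim\,\Grpd(\underline{U_\bullet},X)$ is not automatic. Apart from that, your identification of (i) as the substantive step and your plan for it (show $S\subseteq W$, then match fibrant objects to conclude the two localizations agree) is exactly the route taken in \cite{Hollander}.
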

\begin{rem}\label{rem:simplenotation}
To simplify notations, we denote the cosimplicial diagram of groupoids 
in \eqref{eqn:stack} also by
\begin{flalign}
X(U_{\bullet}) := \bigg( 
\xymatrix@C=1.1pc{\prod\limits_{i} X(U_i)  \ar@<0.5ex>[r]\ar@<-0.5ex>[r]~&~
\prod\limits_{ij} X(U_{ij}) \ar@<1ex>[r]\ar@<-1ex>[r]\ar[r]~&~ \prod\limits_{ijk} X(U_{ijk})
\ar@<0.5ex>[r]\ar@<-0.5ex>[r]\ar@<1.5ex>[r]\ar@<-1.5ex>[r]~&~    \cdots }\bigg)~.
\end{flalign}
Then the homotopy limit in \eqref{eqn:stack} simply reads as $\holim_{\Grpd}^{}\,X(U_{\bullet})$.
\end{rem}

Unless otherwise stated, we will always work with the local model structure
on $\HH$. The reason for this will be explained below.
Hence, by fibration, cofibration and weak equivalence in $\HH$ 
we always mean the ones in the local model structure on $\HH$.
\sk

Of particular relevance for us will be the fibrant objects in $\HH$ (in the local model structure).
It was shown by Hollander \cite{Hollander} that the fibrant condition \eqref{eqn:stack} captures 
the notion of descent for stacks. In particular, fibrant presheaves of groupoids
provide us with an equivalent, but simpler, model for stacks than the traditional models
based on lax presheaves of groupoids or categories fibered in groupoids, 
see e.g.\  \cite{DM,Giraud}. In our work we hence shall use the following definition of a stack
\cite{Hollander}.
\begin{defi}
A {\em stack} is a fibrant object $X$ in the local model structure on $\HH$. More concretely,
a stack is a presheaf of groupoids $X : \CC^\op\to \Grpd$ such that the canonical morphism
\eqref{eqn:stack} is a weak equivalence in $\Grpd$, for all good open covers $\{U_i\subseteq U\}$.
\end{defi}
We will later need the following standard 
\begin{lem}\label{lem:fibrationlocfibrant}
Let $X$ and $Y$ be fibrant objects in the local model structure on $\HH$, i.e.\ stacks.
Then an $\HH$-morphism $f: X \to Y$ is a fibration in the local model structure
if and only if it is a fibration in the global model structure, i.e.\  a stage-wise
fibration in $\Grpd$.
\end{lem}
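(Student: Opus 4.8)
The plan is to deduce this from the general theory of left Bousfield localizations, using the fact that the local model structure on $\HH$ is the localization of the global one at the set $S$ of \eqref{eqn:locmorph}, together with Theorem \ref{theo:localH}. First I would recall the two easy directions of the general localization machinery: a left Bousfield localization has the \emph{same} cofibrations as the original model structure (this is Theorem \ref{theo:localH}(ii)), it has \emph{more} weak equivalences, and hence \emph{fewer} fibrations. Consequently, every fibration in the local model structure is automatically a fibration in the global model structure, so one implication ("local fibration $\Rightarrow$ global fibration") holds for \emph{any} $\HH$-morphism, with no fibrancy hypothesis needed. This is the trivial half.

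For the nontrivial half, suppose $f : X \to Y$ is a global fibration between two locally fibrant objects (stacks). I would show $f$ is a local fibration by verifying the right lifting property against all acyclic cofibrations in the local model structure (which by Theorem \ref{theo:localH}(iii) is the defining property). So let $i : A \to B$ be a local acyclic cofibration; by Theorem \ref{theo:localH}(ii) it is a global cofibration, and we must produce a lift in any square with $i$ on the left and $f$ on the right. The standard argument is to factor $i$, in the \emph{global} model structure, as a global acyclic cofibration $j : A \to A'$ followed by a global fibration $p : A' \to B$; since $f$ is a global fibration, $f$ has the right lifting property against $j$, so it suffices to handle the square with $p$ on the left. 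Now I would argue that $p$ is actually a global \emph{acyclic} fibration: indeed $i = p \circ j$ is a local weak equivalence and $j$ is a global hence local weak equivalence, so $p$ is a local weak equivalence by two-out-of-three; and $p$ is a global fibration between objects $A'$ and $B$ — here is where I use that $A$ is locally fibrant ($A'$ is then locally fibrant since $j$ is a global, hence local, acyclic cofibration and local fibrant objects are closed under such, or more directly since $A \to A'$ is a local acyclic cofibration into... wait, one must be careful: we need $A'$ and $B$ locally fibrant). In fact $B$ need not be assumed fibrant, so instead I would use Theorem \ref{theo:localH}(v) more cleverly, or take a cleaner route below.

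A cleaner route, which I expect to be the actual proof: use the characterization of local fibrations with locally fibrant codomain. Since $Y$ is a stack, a morphism into $Y$ being a local fibration should be detectable more explicitly (via \cite[Proposition 4.2]{Hollander3} / Proposition \ref{propo:localfibration}); the point is that for a global fibration $f : X \to Y$ with $Y$ fibrant, the object $X$ is also globally fibrant, and one checks that $f$ being a global fibration plus $Y$ (and hence $X$) being a stack forces $f$ to satisfy the explicit local-fibration criterion, because the descent/homotopy-limit conditions that a local fibration must satisfy relative to its base are automatically inherited from those of $X$ and $Y$. Concretely, I would invoke the general model-categorical fact: in a left Bousfield localization $L_S\mathcal{M}$ of $\mathcal{M}$, a morphism $f : X \to Y$ between $S$-local (= locally fibrant) objects is an $L_S\mathcal{M}$-fibration if and only if it is an $\mathcal{M}$-fibration. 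This is a standard statement — see e.g.\ \cite[Proposition 3.3.16]{Hirschhorn} — and applying it with $\mathcal{M} = \HH_{\mathrm{global}}$ gives exactly the lemma.

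\textbf{Main obstacle.} The essential subtlety, and the step I would be most careful about, is the direction "global fibration between stacks $\Rightarrow$ local fibration": one cannot simply quote "fewer fibrations" (that gives the wrong inclusion), and one genuinely needs the local-fibrancy of \emph{both} source and target. The cleanest justification is the cited general fact about fibrations between fibrant objects in a Bousfield localization; if one wants a self-contained argument, the lifting-property computation sketched in the second paragraph works, but requires the auxiliary observation that a global fibration between globally fibrant objects whose target is $S$-local has $S$-local source (true because in $\HH$ all objects are globally fibrant, so this is vacuous) — and then the two-out-of-three factorization argument closes the gap. So the real content is just careful bookkeeping with the localization axioms rather than any new construction.
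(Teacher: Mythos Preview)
Your proposal is correct and, in its ``cleaner route,'' takes exactly the same approach as the paper: the paper's proof is nothing more than an invocation of \cite[Proposition 3.3.16]{Hirschhorn} applied to the global model structure on $\HH$ and its localization at \eqref{eqn:locmorph}. Your initial detour through an explicit lifting-property argument is unnecessary (and, as you noticed yourself, gets tangled over the fibrancy of the intermediate object), but you correctly arrive at the one-line citation that constitutes the actual proof.
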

\begin{proof}
This is \cite[Proposition 3.3.16]{Hirschhorn} applied to  
$\HH$ with the global model structure (cf.\ Lemma \ref{lem:globmod})
and to its localization with respect to \eqref{eqn:locmorph},
i.e.\ the local model structure of Theorem \ref{theo:localH}.
\end{proof}

\subsection{\label{subsec:examples}Examples of stacks}
We collect some well-known examples of stacks that will play a major role in our work.
We also refer to \cite{Fiorenza2,Fiorenza,Schreiber} for a description of some 
of these stacks in the language of $\infty$-stacks. 
All of our stacks are motivated by the structures arising in gauge theories.

\begin{ex}[Manifolds]\label{ex:manifolds}
To any (finite-dimensional and paracompact) manifold $M$ we may assign
an object $\underline{M}$ in $\HH$. It is given by the functor $\underline{M} : \CC^\op\to\Grpd$
that acts on objects as
\begin{flalign}
\underline{M}(U) := C^\infty(U,M)~,
\end{flalign}
where $C^\infty(U,M)$ is the set of smooth maps regarded as a groupoid with just identity morphisms,
and on morphisms $\ell : U\to U^\prime$ as
\begin{flalign}
\underline{M}(\ell) : \underline{M}(U^\prime)\longrightarrow \underline{M}(U)~,~~\big( \rho : U^\prime\to M \big)\longmapsto
\big(\rho\circ \ell : U \to M\big)~.
\end{flalign}
Note the similarity to the Yoneda embedding \eqref{eqn:Yoneda}, which is our motivation to use
the same notation by an underline. Notice further that, when $M = U$ 
is diffeomorphic to some $\bbR^n$, 
then $\underline{M}$ coincides with the Yoneda embedding $\underline{U}$ of the object $U$ in $\CC$.
\sk

Let us confirm that $\underline{M}$ is a stack, i.e.\ that \eqref{eqn:stack} is 
a weak equivalence in $\Grpd$ for all good open covers $\{U_i\subseteq U\}$.
In the notation of Remark \ref{rem:simplenotation}, we have to compute the
homotopy limit $\holim_{\Grpd}^{} \, \underline{M}(U_\bullet)$ of the cosimplicial diagram
$\underline{M}(U_\bullet)$ in $\Grpd$.
Using Proposition \ref{propo:descentcat}, we find that 
$\holim_{\Grpd}^{} \, \underline{M}(U_\bullet)$ is the groupoid whose
objects are families $\{ \rho_i\in C^\infty(U_i,M)\}$ satisfying
$\rho_i\vert_{U_{ij}}^{} = \rho_{j}\vert_{U_{ij}}^{}$, for all $i,j$, and whose morphisms are just the identities.
The canonical morphism $\underline{M}(U) \to \holim_{\Grpd}^{} \, \underline{M}(U_\bullet)$
assigns to $\rho\in C^\infty(U,M)$ the family $\{\rho\vert_{U_i}^{}\}$, hence it is an isomorphism 
(and thus also a weak equivalence) because $C^\infty(-,M)$ is a sheaf on $\CC$.\sk

Denoting by $\Man$ the category of (finite-dimensional and paracompact) manifolds,
it is easy to see that our constructions above define a fully faithful functor $\underline{(-)} : \Man\to \HH$
that takes values in stacks. Hence, manifolds can be equivalently described in terms of stacks.
\end{ex}

\begin{ex}[Classifying stack of principal $G$-bundles]\label{ex:BG}
Let $G$ be a Lie group. The object $\BG$ in $\HH$ is defined by the following functor
$\BG : \CC^\op\to \Grpd$: To an object $U$ in $\CC$ it assigns the groupoid $\BG(U)$ that has just one object, say $\ast$,
with automorphisms $C^\infty(U,G)$ given by the set of smooth maps from $U$ to the Lie group $G$.
The composition of morphisms in $\BG(U)$ is given by the opposite point-wise product $g^\prime\circ g := g\, g^\prime$
of $g,g^\prime \in C^\infty(U,G)$. (We take the opposite product because we will be 
interested later in group representations of $C^\infty(U,G)$ from the right.)
Moreover, the identity morphism in $\BG(U)$ is the constant map  $e\in C^\infty(U,G)$ from $U$
to the unit element in $G$.
To any morphism $\ell : U\to U^\prime$ in $\CC$ the functor $\BG$ assigns
the groupoid morphism  $\BG(\ell) : \BG(U^\prime)\to\BG(U)$ that acts on objects as $\ast\mapsto \ast$
and on morphisms as $C^\infty(U^\prime,G)\ni g \mapsto \ell^\ast g := g \circ\ell \in C^\infty(U,G)$. 
\sk

Let us confirm that $\BG$ is a stack. Let $\{U_i\subseteq U\}$ be any good open cover.
Using Proposition \ref{propo:descentcat}, we find that $\holim_{\Grpd}^{}\,\BG(U_\bullet)$
is the groupoid whose objects are families $\{g_{ij}\in C^\infty(U_{ij},G)\}$
satisfying $g_{ii}=e\in C^\infty(U_{i},G)$, for all $i$,
and the cocycle condition $g_{ij}\vert_{U_{ijk}}^{}~g_{jk}\vert_{U_{ijk}}^{} = g_{ik}\vert_{U_{ijk}}^{}$, for all $i,j,k$.
The morphisms of $\holim_{\Grpd}^{}\,\BG(U_\bullet)$ from $\{g_{ij}\}$ to $\{g^\prime_{ij}\}$
are families $\{h_i\in C^\infty(U_i,G)\}$  satisfying
$g^{\prime}_{ij} = h_i^{-1}\vert_{U_{ij}}^{} ~ g_{ij}~h_j\vert_{U_{ij}}^{}$, for all $i,j$.
The canonical morphism $\BG(U) \to \holim_{\Grpd}^{}\,\BG(U_\bullet)$
assigns to the object $\ast$ in $\BG(U)$ the object $\{ g_{ij} =e \in C^\infty(U_{ij},G)  \}$, i.e.\ the trivial cocycle,
and to a morphism $g\in C^\infty(U,G)$ in $\BG(U)$ the isomorphism
$\{ g\vert_{U_i}^{} \in C^\infty(U_i,G)\} $ of the trivial cocycle.
The canonical morphism is a weak equivalence in $\Grpd$ 
(cf.\ Theorem \ref{theo:groupoids}) because of the following reasons:
1.)~It is fully faithful because $C^\infty(-,G)$ is a sheaf on $\CC$.
2.)~It is essentially surjective because all cocycles are trivializable on manifolds diffeomorphic to $\bbR^n$ 
($U$ in this case), i.e.\ one can find $\{h_i\in C^\infty(U_i,G)\}$ 
such that $g_{ij} = h_i^{-1}\vert_{U_{ij}}^{} ~h_j\vert_{U_{ij}}^{}$, for all $i,j$. 
Hence, we have shown that $\BG$ is a stack.
It is called the {\em classifying stack of principal $G$-bundles}, see also \cite[Section 3.2]{Fiorenza2}.
\sk

Our classifying stack $\BG$ is a smooth and stacky analog of the usual classifying space
of a topological group $G$ \cite[Section 3]{Segal}, 
which is the topological space obtained as the 
geometric realization of the simplicial topological space
\begin{flalign}\label{eqn:BGtopological}
\xymatrix@C=1.1pc{
\{\ast\} ~&~ \ar@<0.5ex>[l]\ar@<-0.5ex>[l] G ~&~\ar@<1ex>[l]\ar@<-1ex>[l]\ar[l] G^2
~&~ \ar@<0.5ex>[l]\ar@<-0.5ex>[l]\ar@<1.5ex>[l]\ar@<-1.5ex>[l]   \cdots
}  ~.
\end{flalign}
Notice that \eqref{eqn:BGtopological} may be obtained by equipping the
nerve of the groupoid $\BG(\{\ast\})$ with the topologies induced by
$G$. 
\end{ex}

\begin{ex}[Classifying stack of principal $G$-bundles with connections]\label{ex:BGcon}
Let again $G$ be a Lie group, which we assume to be a matrix Lie group in order to simplify some formulas below.
Let $\g$ denote the Lie algebra of $G$.
The object $\BGcon$ in $\HH$ is defined by the following functor $\BGcon : \CC^\op \to \Grpd$:
To an object $U$ in $\CC$ it assigns the groupoid $\BGcon(U)$ whose set of objects is $\Omega^1(U,\g)$, i.e.\  the
set of Lie algebra valued $1$-forms on $U$, and whose set of morphisms is $\Omega^1(U,\g)\times C^\infty(U,G)$.
The source and target of a morphism $(A,g)\in \Omega^1(U,\g)\times C^\infty(U,G)$ is as follows
\begin{flalign}
\xymatrix@C=1.1pc{
A~\ar[rr]^-{(A,g)} &&~ A\ra g := g^{-1}\,A\,g + g^{-1}\,\dd g
}~, 
\end{flalign}
where $\dd$ denotes the de Rham differential and $A\ra g$ is the usual right action 
of gauge transformations $g\in C^\infty(U,G)$ on gauge fields $A\in \Omega^1(U,\g)$. 
The identity morphism is $(A,e) : A\to A$ and composition of two 
composable morphisms in $\BGcon(U)$ is given by
\begin{flalign}
\xymatrix@C=1.1pc{
\ar@/_1pc/[rrrr]_-{(A,g\,g^\prime)}  A~\ar[rr]^-{(A,g)} &&~ A\ra g ~ \ar[rr]^-{(A\ra g,g^\prime)} && ~ A\ra(g\,g^\prime)
}~. 
\end{flalign}
To any morphism $\ell : U\to U^\prime$ in $\CC$ the functor $\BGcon$ assigns
the groupoid morphism $\BGcon(\ell) : \BGcon(U^\prime)\to \BGcon(U)$ that acts on objects
as $A\mapsto \ell^\ast A$ (i.e.\ via pullback of differential forms) and on morphisms
as $(A,g) \mapsto (\ell^\ast A , \ell^\ast g)$.
\sk

Let us confirm that $\BGcon$ is a stack. Let $\{U_i\subseteq U\}$ be any good open cover.
Using Proposition \ref{propo:descentcat}, we find that $\holim_{\Grpd}^{}\,\BGcon(U_\bullet)$
is the groupoid whose objects are pairs of families $(\{A_i\in\Omega^1(U_i,\g) \} , \{g_{ij}\in C^\infty(U_{ij},G)\})$,
such that $A_j\vert_{U_{ij}}^{} = A_i\vert_{U_{ij}}^{} \ra g_{ij}$, for all $i,j$,
$g_{ii}=e$, for all $i$, and $g_{ij}\vert_{U_{ijk}}^{}~g_{jk}\vert_{U_{ijk}}^{} = g_{ik}\vert_{U_{ijk}}^{}$, for all $i,j,k$.
The morphisms of  $\holim_{\Grpd}^{}\,\BGcon(U_\bullet)$ from 
$(\{A_i \} , \{g_{ij}\})$ to $(\{A^\prime_i \} , \{g^\prime_{ij}\})$ are families $\{h_i\in C^\infty(U_i,G)\}$
satisfying $A_i^\prime = A_i\ra h_i$, for all $i$, and $g_{ij}^{\prime} = h_i^{-1}\vert_{U_{ij}}^{}~g_{ij}~h_j\vert_{U_{ij}}^{}$,
for all $i,j$. The canonical morphism $\BGcon(U) \to \holim_{\Grpd}^{}\,\BGcon(U_\bullet)$
assigns to an object $A\in\Omega^1(U,\g)$ of $\BGcon(U)$ the pair of families
$( \{ A\vert_{U_i}^{} \}, \{ g_{ij} =e \} )$, with the trivial cocycle, and to a morphism
$(A,g)\in\Omega^1(U,\g)\times C^\infty(U,G)$ in $\BGcon(U)$ the family $\{ g\vert_{U_i}^{}\}$.
With a similar argument as in Example \ref{ex:BG} we find that the canonical morphism
is a weak equivalence in $\Grpd$ and hence that $\BGcon$ is a stack. It is called
the {\em classifying stack of principal $G$-bundles with connections}, see also \cite[Section 3.2]{Fiorenza2}.
\end{ex}

\begin{ex}[Classifying stack of principal $G$-bundles with adjoint bundle valued $p$-forms]\label{ex:Liealgforms}
Let $G$ be a matrix Lie group and $\g$ its Lie algebra.
For $p\in\bbZ_{\geq 0}$, we define the object $\BG_{\Omega^p_{\ad}}$ in $\HH$
by the following functor $\BG_{\Omega^p_{\ad}} : \CC^\op\to \Grpd$:
To an object $U$ in $\CC$ it assigns the groupoid $\BG_{\Omega^p_{\ad}} (U)$ whose set
of objects is the set of $\g$-valued $p$-forms $\Omega^p(U,\g)$ and whose set of 
morphisms is $\Omega^p(U,\g)\times C^\infty(U,G)$. The source
and target of a morphism $(\omega,g)\in \Omega^p(U,\g)\times C^\infty(U,G)$ is, similarly to Example \ref{ex:BGcon},
given by
\begin{flalign}
\xymatrix@C=1.1pc{
\omega~\ar[rr]^-{(\omega,g)} &&~ \ad_g(\omega) := g^{-1}\,\omega\,g 
}~, 
\end{flalign}
where $\ad_g(\omega)$ is the usual right adjoint 
action of gauge transformations $g\in C^\infty(U,G)$ on $\g$-valued $p$-forms $\omega\in\Omega^p(U,\g)$.
The identity morphism is $(\omega,e) : \omega\to\omega$ and the composition of two composable
morphisms in $\BG_{\Omega^p_{\ad}}(U)$ is given by
\begin{flalign}
\xymatrix@C=2pc{
\ar@/_1pc/[rrrr]_-{(\omega , g\,g^\prime)} \omega~\ar[rr]^-{(\omega , g)} &&~ \ad_{g}(\omega) ~ \ar[rr]^-{(\ad_g(\omega) , g^\prime)} && ~ \ad_{g\,g^\prime}(\omega)
}~. 
\end{flalign}
To any morphism $\ell : U\to U^\prime$ in $\CC$ the functor $\BG_{\Omega^p_{\ad}}$ assigns
the groupoid morphism $\BG_{\Omega^p_{\ad}}(\ell) : \BG_{\Omega^p_{\ad}}(U^\prime)\to \BG_{\Omega^p_{\ad}}(U)$ 
that acts on objects and 
morphisms  via pullback, i.e.\  $\omega \mapsto \ell^\ast \omega$ 
and $(\omega ,g) \mapsto (\ell^\ast \omega , \ell^\ast g)$. The fact that $\BG_{\Omega^p_{\ad}}$ is a stack
can be proven analogously to Examples \ref{ex:BG} and \ref{ex:BGcon}. We call it the {\em classifying stack
of principal $G$-bundles with adjoint bundle valued $p$-forms}. We will later use 
$\BG_{\Omega^p_{\ad}}$ to describe, for example, the curvatures of connections.
\end{ex}

\subsection{Fiber product of stacks and mapping stacks}
Our constructions in this paper will use a variety of techniques 
to produce new stacks out of old ones, e.g.\ out of the stacks described in Section \ref{subsec:examples}.
Of particular relevance for us will be fiber products of stacks and mapping stacks. 
\sk

The following observation is standard in homotopy theory, however it 
is crucial to understand the definitions below:
Given a pullback diagram
\begin{flalign}\label{eqn:pullback}
\xymatrix{
X\ar[r]^-{f_1} & Z &\ar[l]_-{f_2} Y
}
\end{flalign}
in $\HH$, we may of course compute the fiber product $X\times_Z^{} Y$
in the usual way by taking the limit of this diagram. (Recall that $\HH$ is complete, hence all
limits exist in $\HH$.) The problem with this construction is that it does not preserve weak equivalences in $\HH$,
i.e.\ replacing the pullback diagram by a weakly equivalent one in general leads to a fiber product that
is not weakly equivalent to $X\times_Z^{} Y$. These problems are avoided by replacing 
the limit with the {\em homotopy limit}, which is a derived functor of the limit functor \cite{Dwyer}.
We denote the homotopy limit of the diagram \eqref{eqn:pullback} by
$X\times_Z^{h} Y$ and call it the {\em homotopy fiber product} in $\HH$. 
A similar problem arises when we naively use the internal hom-object
$Y^X$ in $\HH$ in order to describe an ``object of mappings'' from $X$ to $Y$;
replacing $X$ and $Y$ by weakly equivalent objects in $\HH$ in general
leads to an internal hom-object that is not weakly equivalent to $Y^X$.
These problems are avoided by replacing the internal-hom functor with its derived functor.
\sk

We now give rather explicit models for the homotopy fiber product and the derived internal hom.
The following model for the homotopy fiber product in $\HH$ was obtained in \cite{Hollander2}.
\begin{propo}\label{propo:homfibprod}
The homotopy fiber product $X\times_Z^{h} Y$ in $\HH$, i.e.\ the 
homotopy limit of the pullback diagram \eqref{eqn:pullback}, 
is the presheaf of groupoids defined as follows: For all objects $U$ in $\CC$,
\begin{itemize}
\item the objects of $(X\times_Z^{h} Y)(U)$ are triples $(x,y,k)$, where $x$ is an object in $X(U)$,
$y$ is an object in $Y(U)$ and $k : f_1(x)\to f_2 (y)$ is a morphism in $Z(U)$; and

\item the morphisms of $(X\times_Z^{h} Y)(U)$ from $(x,y,k)$ to $(x^\prime,y^\prime,k^\prime)$
are pairs $(g,h)$, where $g: x\to x^\prime$ is a morphism in $X(U)$ and
$h : y\to y^\prime$ is a morphism in $Y(U)$, such that the diagram
\begin{flalign}
\xymatrix{
\ar[d]_-{k} f_1(x) \ar[rr]^-{f_1(g)} && f_1(x^\prime)\ar[d]^-{k^\prime}\\
f_2(y) \ar[rr]_-{f_2(h)} && f_2(y^\prime)
}
\end{flalign}
in $Z(U)$ commutes.
\end{itemize}
If $X$, $ Y$ and $Z$ in \eqref{eqn:pullback} are stacks, then $X\times_Z^{h} Y$ is a stack too.
\end{propo}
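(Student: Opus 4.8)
The plan is to prove the two assertions of Proposition \ref{propo:homfibprod} separately: first, that the explicitly described presheaf of groupoids is indeed a model for the homotopy limit of \eqref{eqn:pullback}; second, that this model lands in stacks whenever $X$, $Y$, $Z$ do. For the first assertion, I would recall that the homotopy limit of a pullback diagram can be computed by replacing the diagram with a weakly equivalent one in which the two maps to $Z$ are fibrations (then the ordinary pullback already computes the homotopy limit). A standard and convenient replacement is the \emph{mapping path space} trick: factor $f_1 : X \to Z$ through an acyclic cofibration followed by a fibration. Concretely, in $\Grpd$ one has the path groupoid $Z^{\Delta^1}$ (functors $\Delta^1 \to Z$), with evaluation maps $\mathrm{ev}_0, \mathrm{ev}_1 : Z^{\Delta^1} \to Z$; the map $(\mathrm{ev}_0,\mathrm{ev}_1) : Z^{\Delta^1}\to Z\times Z$ is a fibration and the diagonal $Z\to Z^{\Delta^1}$ is a weak equivalence. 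Performing this stage-wise in $\HH$ (which is legitimate since the relevant factorizations in $\Grpd$ are functorial, or one may simply invoke the global model structure of Lemma \ref{lem:globmod}), the homotopy pullback becomes the ordinary pullback of $X \to Z \leftarrow Z^{\Delta^1} \to Z \leftarrow Y$. Unwinding this ordinary pullback stage by stage and using the description \eqref{eqn:homgrpd} of morphisms in internal hom-groupoids of $\Grpd$, one reads off exactly the triples $(x,y,k)$ and pairs $(g,h)$ in the statement: the object $k$ is the object of $Z(U)^{\Delta^1}$, i.e.\ a morphism $f_1(x)\to f_2(y)$ in $Z(U)$, and the morphism data $(g,h)$ together with the compatibility square is precisely a morphism in $Z(U)^{\Delta^1}$ over $(g',h')$ restricting to $f_1(g)$ and $f_2(h)$ at the endpoints. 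This identification is routine but needs to be spelled out carefully at the level of stages, which is the bulk of the verification.

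For the second assertion — that $X\times_Z^h Y$ is a stack when $X$, $Y$, $Z$ are — the cleanest argument is purely model-categorical: homotopy limits of diagrams of fibrant objects are fibrant. More precisely, in any model category the class of fibrant objects is closed under homotopy limits of all small diagrams; since stacks are exactly the fibrant objects of the local model structure on $\HH$ (Theorem \ref{theo:localH}(iv)), and $X\times_Z^h Y$ is by construction a homotopy limit of the diagram \eqref{eqn:pullback} consisting of fibrant objects, it is fibrant, hence a stack. Alternatively, and more in the spirit of making the statement self-contained, one can verify the descent condition \eqref{eqn:stack} for $X\times_Z^h Y$ directly: for a good open cover $\{U_i\subseteq U\}$ one must show the canonical map $(X\times_Z^h Y)(U)\to \holim_{\Grpd}(X\times_Z^h Y)(U_\bullet)$ is a weak equivalence in $\Grpd$. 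Using the explicit description, $(X\times_Z^h Y)(U_\bullet)$ is levelwise a homotopy pullback in $\Grpd$ of $X(U_\bullet)$, $Y(U_\bullet)$ over $Z(U_\bullet)$, and since $\holim_{\Grpd}$ commutes with homotopy pullbacks (homotopy limits commute with homotopy limits), $\holim_{\Grpd}(X\times_Z^h Y)(U_\bullet)$ is the homotopy pullback of $\holim_{\Grpd} X(U_\bullet)$, $\holim_{\Grpd} Y(U_\bullet)$ over $\holim_{\Grpd} Z(U_\bullet)$. The stack hypotheses on $X$, $Y$, $Z$ say the maps $X(U)\to \holim_{\Grpd} X(U_\bullet)$ etc.\ are weak equivalences, and one concludes by the fact that homotopy pullbacks in $\Grpd$ preserve weak equivalences in each slot.

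The main obstacle I anticipate is not conceptual but organizational: carrying out the identification in the first assertion with enough care that the reader sees \emph{why} the homotopy limit produces the morphism $k : f_1(x)\to f_2(y)$ (an extra piece of data compared to the strict pullback) and the commuting-square condition on $(g,h)$ — i.e.\ making precise the passage from the abstract homotopy limit to the concrete descent-style formula, and checking it is independent of the chosen fibrant replacement. I would lean on the general theory (the explicit model for homotopy limits of cosimplicial/finite diagrams in $\Grpd$ via the descent groupoid, Proposition \ref{propo:descentcat}, specialized to the cosimplicial replacement of the cospan) rather than re-deriving everything, and cite \cite{Hollander2} where this model was obtained. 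For the second part, I would prefer the one-line model-categorical argument (fibrant objects are closed under homotopy limits) and relegate the direct descent verification to a remark, since the former is cleaner and the latter risks duplicating work already implicit in the first assertion.
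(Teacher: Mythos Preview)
The paper does not give its own proof of this proposition; it is simply stated with a citation to \cite{Hollander2}. Your outline is essentially the standard argument one would supply, and the path-space identification for the first assertion is correct and routine.

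One point deserves care in your treatment of the second assertion. The slogan ``fibrant objects are closed under homotopy limits'' is true only in the sense that the homotopy limit, as an object of the homotopy category, admits a fibrant representative; it does \emph{not} say that every concrete model for the homotopy limit is automatically fibrant. Since the proposition asserts that \emph{this specific} presheaf of groupoids satisfies descent, argument~(a) as you phrase it is not quite enough on its own---you would still need to know that the explicit path-space model coincides with (or is weakly equivalent to, and then argue further) a fibrant model. Your argument~(b), by contrast, verifies the descent condition \eqref{eqn:stack} directly for the explicit model and is fully rigorous: homotopy limits commute with homotopy limits, and homotopy pullbacks in $\Grpd$ are invariant under levelwise weak equivalence. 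I would therefore promote~(b) to the main argument rather than relegating it to a remark, and drop or qualify~(a).
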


Our model for the derived internal hom-functor is the standard one resulting from 
the theory of derived functors, see \cite{Dwyer} and also \cite{Hovey} for details. 
For the theory of derived functors to apply to our present situation, 
it is essential that $\HH$ (with the local model structure)
is a monoidal model category, see Appendix \ref{app:monoidal} for a proof.
\begin{propo}\label{propo:derivedhoms}
The derived internal hom-functor in $\HH$ is 
\begin{flalign}
R(-)^{Q(-)} : \HH^\op\times \HH\longrightarrow \HH~,
\end{flalign} 
where $Q : \HH\to \HH$ is any cofibrant replacement
functor and $R :\HH\to\HH$ is any fibrant replacement functor.
The following holds true:
\begin{itemize}
\item[(i)]
If $X$ is any object in $\HH$ and $Y$ is a stack, then $R(Y)^{Q(X)}$ 
is weakly equivalent to the object  $Y^{Q(X)}$ in $\HH$.

\item[(ii)] If $X$ is a cofibrant object in $\HH$ and $Y$ is a stack, then $R(Y)^{Q(X)}$ 
is weakly equivalent to the ordinary internal hom-object  $Y^X$ in $\HH$.

\item[(iii)] If $X$ is any object in $\HH$ and $Y$ is a stack,
then $Y^{Q(X)}$  is a stack too. (This follows from \cite[Remark 4.2.3]{Hovey}.)

\end{itemize}
\end{propo}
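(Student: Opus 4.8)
The plan is to establish the three claims of Proposition~\ref{propo:derivedhoms} by invoking the general machinery of derived functors in a monoidal model category, together with the elementary fact that stacks are exactly the fibrant objects in the local model structure. First I would recall, citing Appendix~\ref{app:monoidal}, that $(\HH,\times,\{\ast\})$ with the local model structure is a (symmetric) monoidal model category, so that the internal-hom bifunctor $(-)^{(-)} : \HH^\op\times\HH\to\HH$ is a Quillen bifunctor in the appropriate variance. Concretely this means that for $X$ cofibrant and $Y$ fibrant the object $Y^X$ is fibrant, that $(-)^X$ sends acyclic fibrations between fibrant objects to acyclic fibrations, and that $Y^{(-)}$ sends (acyclic) cofibrations between cofibrant objects to (acyclic) fibrations. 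From the standard theory of derived functors of a Quillen bifunctor (see \cite{Dwyer,Hovey}), it then follows that $R(-)^{Q(-)}$ is a well-defined total right derived bifunctor: for fixed cofibrant $Q(X)$ the functor $R(-)^{Q(X)}$ computes the right derived functor of $(-)^{Q(X)}$, and for fixed fibrant $R(Y)$ the functor $R(Y)^{Q(-)}$ computes the right derived functor of $R(Y)^{(-)}$.

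With this in place, part~(ii) is essentially immediate: if $X$ is already cofibrant, then $Q(X)$ is weakly equivalent to $X$ via a weak equivalence between cofibrant objects, and since $Y$ is a stack (hence fibrant) the functor $Y^{(-)}$, being right Quillen in that slot, sends this weak equivalence between cofibrant objects to a weak equivalence $Y^{Q(X)}\simeq Y^{X}$ (Ken Brown's lemma). Part~(i) then follows from part~(ii) applied to the cofibrant object $Q(X)$ in place of $X$: we get $R(Y)^{Q(Q(X))}\simeq Y^{Q(Q(X))}$, and one further simplifies using that $Q(X)$ is cofibrant so $Q(Q(X))\simeq Q(X)$ and that $Y$ is fibrant so $R(Y)\simeq Y$, each replacement being a weak equivalence in the relevant slot to which the right Quillen functor applies; chaining these weak equivalences gives $R(Y)^{Q(X)}\simeq Y^{Q(X)}$. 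Alternatively, and perhaps more cleanly, one observes directly that when $Y$ is fibrant the fibrant replacement $R(Y)\to$ or $Y\to R(Y)$ is a weak equivalence between fibrant objects, and $(-)^{Q(X)}$ with $Q(X)$ cofibrant sends weak equivalences between fibrant objects to weak equivalences, yielding $R(Y)^{Q(X)}\simeq Y^{Q(X)}$ at once.

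For part~(iii), the statement that $Y^{Q(X)}$ is itself a stack whenever $Y$ is a stack and $X$ is arbitrary, I would argue that $Y$ fibrant and $Q(X)$ cofibrant force $Y^{Q(X)}$ to be fibrant by the Quillen bifunctor adjointness condition: the defining pushout-product axiom for the monoidal model structure is equivalent, by adjunction, to the assertion that $(-)^{(-)}$ lands in fibrant objects when fed a cofibrant object and a fibrant object. As the parenthetical remark in the statement indicates, this is exactly \cite[Remark~4.2.3]{Hovey}, so it suffices to cite it after having verified (in Appendix~\ref{app:monoidal}) that $\HH$ is a monoidal model category and recalling the characterisation of stacks as local-fibrant objects from Theorem~\ref{theo:localH}(iv). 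The main obstacle, such as it is, is bookkeeping rather than mathematics: one must be careful about \emph{which} variable is being derived and in \emph{which} variance, since the internal hom is contravariant in its exponent; in particular one should check that $Q$ is being applied to the exponent (the contravariant slot, where we need a \emph{cofibrant} replacement) and $R$ to the base (the covariant slot, where we need a \emph{fibrant} replacement), which is what makes $R(-)^{Q(-)}$ the correct total right derived bifunctor. Once that indexing is pinned down, all three parts reduce to Ken Brown's lemma applied to a right Quillen functor in one variable, plus the cited remark for fibrancy.
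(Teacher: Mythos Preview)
Your proposal is correct and is precisely the standard argument from the theory of monoidal model categories. The paper itself does not supply a proof of this proposition: it states the result and refers to the general theory of derived functors (\cite{Dwyer,Hovey}) together with the fact, established in Appendix~\ref{app:monoidal}, that $\HH$ is a monoidal model category; the only explicit justification given is the parenthetical citation of \cite[Remark~4.2.3]{Hovey} for item~(iii). Your write-up simply unpacks what those references contain---Ken Brown's lemma applied in each slot of the Quillen bifunctor, with the correct bookkeeping of variance---so there is nothing to compare beyond saying that your argument is exactly what the paper's citations point to.
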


\begin{rem}
In our work we shall only need derived internal hom-objects $R(Y)^{Q(X)}$ 
for the case where $Y$ is a stack and $X=\underline{M}$ is the stack represented by
a manifold $M$, cf.\ Example \ref{ex:manifolds}. We develop in Appendix \ref{app:cofibrep}
a suitable cofibrant replacement functor $\widecheck{(-)} : \Man_{\hookrightarrow}\to \HH$
on the category $\Man_{\hookrightarrow}$ of (finite-dimensional and paracompact) manifolds
with morphisms given by open embeddings. Proposition \ref{propo:derivedhoms}
then implies that we can compute, up to a weak equivalence, 
such derived internal hom-objects by $Y^{\widecheck{M}}$.
\end{rem}


\section{\label{sec:gaugefields}Gauge fields on a manifold}
Let $G$ be a matrix Lie group and $M$ a (finite-dimensional and paracompact) manifold.
The goal of this section is to construct and study the stack $\GCon(M)$ 
of principal $G$-bundles with connections on $M$, which we shall also call the 
{\em stack of gauge fields} on $M$. Let us recall our main guiding principle:
The stack $\GCon(M)$ is supposed to describe smoothly parametrized
families of principal $G$-bundles with connections on $M$, which is motivated
by the functor of points perspective. More precisely, this means that evaluating 
the stack $\GCon(M)$ on an object $U$ in $\CC$, we would like to 
discover (up to weak equivalence) the groupoid describing smoothly 
$U$-parametrized principal $G$-bundles with connections on $M$
and their smoothly $U$-parametrized gauge transformations.
We shall obtain the stack $\GCon(M)$ by an intrinsic construction in $\HH$
that is given by a concretification of the derived mapping stack 
from $M$ to the classifying stack $\BGcon$ of principal $G$-bundles with connections 
(cf.\ Example \ref{ex:BGcon}). Using Proposition \ref{propo:derivedhoms},
we may compute (up to a weak equivalence) the derived mapping stack in
terms of the ordinary internal-hom ${\BGcon}_{~}^{\widecheck{M}}$, where
$\widecheck{M}$ is the canonical cofibrant replacement of $M$ (cf.\ Appendix \ref{app:cofibrep}).
It is important to emphasize that our concretification prescription improves the one 
originally proposed in \cite{Fiorenza,Schreiber}, which fails to give the desired result, 
i.e.\ a stack describing smoothly parametrized families of principal $G$-bundles with connections, 
together with smoothly parametrized gauge transformations, see 
Appendix \ref{app:concretification} for details. 
\sk

By construction, our model ${\BGcon}_{~}^{\widecheck{(-)}} : \Man_{\hookrightarrow}^\op \to \HH$
for the derived mapping stacks is functorial on the category of (finite-dimensional and paracompact) 
manifolds with morphisms given by open embeddings. The same holds true
for the stacks of gauge fields, i.e.\ we will obtain a functor 
$\GCon : \Man_{\hookrightarrow}^\op \to \HH$.
This is an advantage compared to the usual approach to construct the stack of gauge fields by 
using a (necessarily non-functorial) {\em good} open cover to obtain a cofibrant 
replacement for $M$, see e.g.\ \cite{Fiorenza2}.

\subsection{\label{subsec:mapstack}Mapping stack}
We now compute the object ${\BGcon}_{~}^{\widecheck{M}}$ in $\HH$, 
which by Proposition \ref{propo:derivedhoms} is a weakly equivalent model for
the derived internal-hom object from $M$ to $\BGcon$. 
Its underlying functor ${\BGcon}_{~}^{\widecheck{M}} : \CC^\op \to \Grpd$
assigns to an object $U$ in $\CC$ the mapping groupoid
\begin{flalign}\label{eqn:mappingstackU}
{\BGcon}_{~}^{\widecheck{M}}(U) = \Grpd(\underline{U}\times \widecheck{M}, \BGcon)~.
\end{flalign}
By definition, an object in this groupoid is a morphism $f : \underline{U}\times \widecheck{M} \to \BGcon$
in $\HH$ and a morphism in this groupoid 
is a morphism $u: \underline{U}\times \widecheck{M} \times\Delta^1 \to \BGcon$ in $\HH$.
We would like to describe the objects and morphisms of the groupoid \eqref{eqn:mappingstackU}
by more familiar data related to gauge fields and gauge transformations.
To achieve this goal, we have to explicate the stages of the $\HH$-morphisms 
$f$ and $u$. 
\sk

Let us start with $f$ and consider the associated groupoid morphism
$f : \underline{U}(U^\prime)\times \widecheck{M}(U^\prime) \to \BGcon(U^\prime) $
at stage $U^\prime$ in $\CC$. Using the explicit description of $\widecheck{M}(U^\prime) $ 
given in Appendix \ref{app:cofibrep}, we may visualize the objects in 
$ \underline{U}(U^\prime)\times \widecheck{M}(U^\prime)$ by diagrams
\begin{flalign}
(\ell,(V,\nu)) := \Big(\xymatrix{
M & \ar[l]_-{\rho_V}V & \ar[l]_-{\nu}U^\prime  \ar[r]^-{\ell}& U
}\Big)~
\end{flalign}
of smooth maps, where $V$ runs over all open subsets of $M$ that are diffeomorphic to $\bbR^{\dim(M)}$ 
and the last arrow points to the right.
Morphisms in $ \underline{U}(U^\prime)\times \widecheck{M}(U^\prime)$ may be visualized 
by commutative diagrams
\begin{flalign}\label{eqn:mappingstackUmor}
\xymatrix@R=0.1pc{\\ (\ell,(V,\nu),(V^\prime,\nu^\prime)) ~:=~ \Bigg( \\}
\xymatrix@R=0.6pc{
&\ar[ld]_-{\rho_V}~V~&&\\
M ~&& \ar[lu]_-{\nu}\ar[ld]^-{\nu^\prime} ~U^\prime ~ \ar[r]^-{\ell} & ~U\\
&\ar[lu]^-{\rho_{V^\prime}} ~V^\prime~&&
}
\xymatrix@R=0.1pc{\\ \Bigg) \\}
\end{flalign}
of smooth maps.
The groupoid morphism 
$f : \underline{U}(U^\prime)\times \widecheck{M}(U^\prime) \to \BGcon(U^\prime) $
is then given by an assignment
\begin{flalign}
\nn f ~:~\qquad\qquad  (\ell, (V,\nu)) ~~&\longmapsto~~ A_{(\ell, (V,\nu))}\in\Omega^1(U^\prime,\g)~,\\
(\ell,(V,\nu),(V^\prime,\nu^\prime))~~ &\longmapsto~~ g_{(\ell,(V,\nu),(V^\prime,\nu^\prime)) } \in C^\infty(U^\prime,G)~,\label{eqn:mappingstackUfunc}
\end{flalign}
satisfying the following compatibility conditions: 1.)\ For all morphisms $(\ell,(V,\nu),(V^\prime,\nu^\prime))$ 
in $ \underline{U}(U^\prime)\times \widecheck{M}(U^\prime)$,
\begin{subequations}
\begin{flalign}
A_{(\ell, (V^\prime,\nu^\prime))} = A_{(\ell, (V,\nu))}\ra g_{(\ell,(V,\nu),(V^\prime,\nu^\prime)) }~.
\end{flalign}
2.)\ For all  objects $(\ell,(V,\nu))$ in $ \underline{U}(U^\prime)\times \widecheck{M}(U^\prime)$,
$g_{(\ell,(V,\nu),(V,\nu))} = e$. 3.)\ For all composable morphisms $(\ell,(V,\nu),(V^\prime,\nu^\prime))$
and $(\ell,(V^\prime,\nu^\prime),(V^{\prime\prime},\nu^{\prime\prime}))$
in $ \underline{U}(U^\prime)\times \widecheck{M}(U^\prime)$,
\begin{flalign}
g_{(\ell,(V,\nu),(V^\prime,\nu^\prime))}~g_{(\ell,(V^\prime,\nu^\prime),(V^{\prime\prime},\nu^{\prime\prime}))} 
= g_{(\ell,(V,\nu),(V^{\prime\prime},\nu^{\prime\prime}))}~.
\end{flalign}
\end{subequations}
The property of $f : \underline{U}(U^\prime)\times \widecheck{M}(U^\prime) \to \BGcon(U^\prime) $
being the stages of an $\HH$-morphism (i.e.\ natural transformation of functors $\CC^\op\to \Grpd$)
leads to the following coherence conditions:
For all morphisms $\ell^\prime : U^\prime\to U^{\prime\prime}$ in $\CC$
and all objects $(\ell,(V,\nu))$ in $ \underline{U}(U^{\prime\prime})\times \widecheck{M}(U^{\prime\prime})$,
\begin{subequations}
\begin{flalign}
A_{(\ell\circ\ell^\prime,(V,\nu\circ \ell^\prime))} = {\ell^\prime}^\ast A_{(\ell,(V,\nu))}~,
\end{flalign}
and, for all morphisms $\ell^\prime : U^\prime\to U^{\prime\prime}$ in $\CC$
and all morphisms $(\ell,(V,\nu),(V^\prime,\nu^\prime))$ 
in $ \underline{U}(U^{\prime\prime})\times \widecheck{M}(U^{\prime\prime})$,
\begin{flalign}
g_{(\ell\circ\ell^\prime,(V,\nu\circ \ell^\prime),(V^\prime,\nu^\prime\circ \ell^\prime) )}
={\ell^\prime}^\ast g_{(\ell,(V,\nu),(V^\prime,\nu^\prime) )}~.
\end{flalign}
\end{subequations}
These coherences constrain the amount of independent data described by 
\eqref{eqn:mappingstackUfunc}: Given any object $(\ell, (V,\nu))$ 
in $ \underline{U}(U^\prime)\times \widecheck{M}(U^\prime)$,
we notice that there is a factorization
\begin{flalign}
\xymatrix{
M ~ &~\ar[l]_-{\rho_V} V ~&~\ar[l]_-{\pr_V} V \times U~\ar[r]^-{\pr_{U}} & ~U\\
&&U^\prime\ar[u]_-{(\nu,\ell)}&
}
\end{flalign}
which implies that
\begin{flalign}
A_{(\ell,(V,\nu))} = (\nu,\ell)^\ast A_{(\pr_U,(V,\pr_V))}~.
\end{flalign}
In words, this means that the $A$ of the form $A_{(\pr_U,(V,\pr_V))} \in \Omega^1(V\times U,\g)$, 
for all open subsets $V\subseteq M$ diffeomorphic to $\bbR^{\dim(M)}$, determine all the others.
As there are no further coherences between $A$'s of the form $A_{(\pr_U,(V,\pr_V))}$,
it follows that the action of the functor \eqref{eqn:mappingstackUfunc} on objects is uniquely specified by choosing
\begin{flalign}\label{eqn:AV}
A_{V}:= A_{(\pr_U,(V,\pr_V))} \in \Omega^1(V\times U,\g)~,
\end{flalign} 
for all open subsets $V\subseteq M$ diffeomorphic to $\bbR^{\dim(M)}$.
A similar argument applies to the morphisms assigned by \eqref{eqn:mappingstackUfunc},
which are determined by
\begin{flalign}
g_{VV^\prime} := g_{(\pr_U,(V,\pr_V),(V^\prime,\pr_{V^\prime}))} \in C^\infty((V\cap V^\prime) \times U,G)~,
\end{flalign}
for all open subsets $V\subseteq M$ and $V^\prime\subseteq M$ diffeomorphic to $\bbR^{\dim(M)}$,
and the coherences
\begin{flalign}
g_{(\ell,(V,\nu),(V^\prime,\nu^\prime))} = (\nu,\nu^\prime,\ell)^\ast g_{(\pr_U,(V,\pr_V),(V^\prime,\pr_{V^\prime}))} ~,
\end{flalign}
for all morphisms $(\ell,(V,\nu),(V^\prime,\nu^\prime))$ in $ \underline{U}(U^\prime)\times \widecheck{M}(U^\prime)$.
The morphisms of the groupoid \eqref{eqn:mappingstackU} can be analyzed analogously.
In summary, we obtain
\begin{lem}\label{lem:BGconcheckM}
The objects of the groupoid \eqref{eqn:mappingstackU} 
are pairs of families
\begin{flalign}
\Big(\big\{ A_V  \in\Omega^1(V\times U,\g)  \big\},
\big\{ g_{VV^\prime} \in C^\infty((V\cap V^\prime)\times U , G)\big\}\Big)~,
\end{flalign}
where $V\subseteq M$ and $V^\prime \subseteq M$ run over all 
open subsets diffeomorphic to $\bbR^{\dim(M)}$,
which satisfy the following conditions:
\begin{itemize}
\item For all open subsets $V\subseteq M$ and $V^\prime \subseteq M$ diffeomorphic to $\bbR^{\dim(M)}$,
\begin{flalign}
A_{V^\prime}\vert_{(V\cap V^\prime)\times U}^{} ~=~ A_V\vert_{(V\cap V^\prime)\times U}^{} \ra g_{VV^\prime}~.
\end{flalign}
\item For all open subsets $V\subseteq M$ diffeomorphic to $\bbR^{\dim(M)}$, $g_{VV} =e$.
\item For all open subsets $V\subseteq M$, 
$V^\prime \subseteq M$ and $V^{\prime\prime} \subseteq M$ diffeomorphic to $\bbR^{\dim(M)}$,
\begin{flalign}
g_{VV^\prime}\vert_{(V\cap V^\prime\cap V^{\prime\prime})\times U}^{}~
g_{V^\prime V^{\prime\prime}}\vert_{(V\cap V^\prime\cap V^{\prime\prime})\times U}^{}
~=~
g_{VV^{\prime\prime}}\vert_{(V\cap V^\prime\cap V^{\prime\prime})\times U}^{}~.
\end{flalign}
\end{itemize}
The morphisms of the groupoid \eqref{eqn:mappingstackU}
from $(\{A_V\},\{g_{VV^\prime}\})$ to $(\{A_{V}^\prime\}, \{g^\prime_{VV^\prime}\})$
are families
\begin{flalign}
\big\{ h_{V} \in C^\infty(V\times U,G)  \big\}~,
\end{flalign}
where $V\subseteq M$ runs over all open subsets diffeomorphic to $\bbR^{\dim(M)}$,
which satisfy the following conditions:
\begin{itemize}
\item For all open subsets $V\subseteq M$ diffeomorphic to $\bbR^{\dim(M)}$,
\begin{flalign}
A_{V}^\prime ~=~ A_V \ra h_V~.
\end{flalign}
\item For all open subsets $V\subseteq M$ and $V^\prime \subseteq M$ diffeomorphic to $\bbR^{\dim(M)}$,
\begin{flalign}
g_{VV^\prime}^\prime ~=~ h_V^{-1}\vert^{}_{(V\cap V^\prime)\times U}~~g_{VV^\prime}~~h_{V^\prime}\vert^{}_{(V\cap V^\prime)\times U}~.
\end{flalign}
\end{itemize}
\end{lem}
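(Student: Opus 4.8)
The plan is to unravel systematically what it means to be an object or morphism of the mapping groupoid $\Grpd(\underline{U}\times\widecheck{M},\BGcon)$, using the explicit description of the cofibrant replacement $\widecheck{M}$ from Appendix~\ref{app:cofibrep} together with the definition of $\BGcon$ from Example~\ref{ex:BGcon}. First I would recall that an object of this groupoid is an $\HH$-morphism $f:\underline{U}\times\widecheck{M}\to\BGcon$, i.e.\ a natural transformation of presheaves of groupoids, hence is given by its stages $f:\underline{U}(U')\times\widecheck{M}(U')\to\BGcon(U')$ over all test objects $U'$ in $\CC$, subject to naturality in $U'$. Spelling out $\BGcon(U')$, such a stage assigns to each object $(\ell,(V,\nu))$ of the source groupoid a $\g$-valued one-form $A_{(\ell,(V,\nu))}\in\Omega^1(U',\g)$ and to each morphism $(\ell,(V,\nu),(V',\nu'))$ a smooth map $g_{(\ell,(V,\nu),(V',\nu'))}\in C^\infty(U',G)$, and functoriality of this stage forces exactly the three conditions displayed in the excerpt: compatibility of the one-forms under the gauge action $\ra$, normalization $g=e$ on identities, and the cocycle/multiplicativity condition on composites.

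The key reduction step, which I would carry out next, is to use naturality in $U'$ to cut down the infinite amount of data to a manageable list. Given any object $(\ell,(V,\nu))$ at stage $U'$, the pair of maps $(\nu,\ell):U'\to V\times U$ exhibits $(\ell,(V,\nu))$ as the pullback along $(\nu,\ell)$ of the ``universal'' object $(\pr_U,(V,\pr_V))$ at stage $V\times U$; naturality of $f$ then gives $A_{(\ell,(V,\nu))}=(\nu,\ell)^\ast A_{(\pr_U,(V,\pr_V))}$, so setting $A_V:=A_{(\pr_U,(V,\pr_V))}\in\Omega^1(V\times U,\g)$ for each open $V\subseteq M$ diffeomorphic to $\bbR^{\dim(M)}$ determines all the $A$'s, with no remaining constraints among the $A_V$ themselves. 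The identical argument applied to morphisms, using the map $(\nu,\nu',\ell):U'\to(V\cap V')\times U$, reduces the $g$-data to $g_{VV'}:=g_{(\pr_U,(V,\pr_V),(V',\pr_{V'}))}\in C^\infty((V\cap V')\times U,G)$. Then I would simply translate the three functoriality conditions from arbitrary stages to the universal stages $V\times U$, $(V\cap V')\times U$, $(V\cap V'\cap V'')\times U$: restriction to these stages gives exactly the restriction operators $\vert_{(V\cap V')\times U}$ etc.\ appearing in the statement, yielding the three bulleted conditions. Conversely, given data $(\{A_V\},\{g_{VV'}\})$ satisfying these conditions, one defines $f$ at a general stage by the pullback formulas above and checks that this is well defined (independent of the chosen factorization) and functorial — this is the routine verification that the correspondence is a bijection on objects.

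For morphisms of the mapping groupoid, the plan is to run the same machine on $u:\underline{U}\times\widecheck{M}\times\Delta^1\to\BGcon$, recalling from Section~\ref{subsec:Grpd} (equation~\eqref{eqn:homgrpd}) that a morphism in a mapping groupoid is precisely such a $\Delta^1$-parametrized map. Its stages assign to each object $(\ell,(V,\nu))$ a morphism in $\BGcon(U')$, i.e.\ a pair whose second component is some $h\in C^\infty(U',G)$ implementing $A\ra h$ from the source to the target; the same naturality-plus-factorization argument reduces this to $h_V:=h_{(\pr_U,(V,\pr_V))}\in C^\infty(V\times U,G)$, and the commuting-square condition for a morphism of the homotopy-limit-style groupoid (equivalently, naturality of $u$ in the $\Delta^1$ direction against the $g_{VV'}$) produces the two bulleted conditions $A_V'=A_V\ra h_V$ and $g'_{VV'}=h_V^{-1}\vert\,g_{VV'}\,h_{V'}\vert$. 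I expect the only genuinely delicate point to be the bookkeeping in Appendix~\ref{app:cofibrep}'s model of $\widecheck{M}(U')$ — namely checking that the objects of $\underline{U}(U')\times\widecheck{M}(U')$ and their morphisms really are the diagrams drawn in \eqref{eqn:mappingstackUmor}, and that every morphism factors through the universal one as claimed (so that no hidden coherences survive); once that is in hand, the rest is a direct, if lengthy, transcription, so I would state it compactly and relegate the elementary pullback-compatibility checks to a sentence.
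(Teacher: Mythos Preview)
Your proposal is correct and follows essentially the same route as the paper: the discussion preceding the lemma in Section~\ref{subsec:mapstack} carries out exactly the analysis you outline, namely writing out the stages of $f:\underline{U}\times\widecheck{M}\to\BGcon$, using the factorization $(\nu,\ell):U'\to V\times U$ to reduce to the universal data $A_V:=A_{(\pr_U,(V,\pr_V))}$ and $g_{VV'}:=g_{(\pr_U,(V,\pr_V),(V',\pr_{V'}))}$, and then stating that the morphisms are handled analogously. The only difference is presentational: the paper places the argument before the lemma and states the lemma as its summary, whereas you package it as a proof afterward.
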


Moreover, the functor
${\BGcon}_{~}^{\widecheck{M}} : \CC^\op\to\Grpd$
assigns to a morphism $\ell : U\to U^\prime$ in $\CC$  the groupoid morphism 
${\BGcon}_{~}^{\widecheck{M}}(\ell) : {\BGcon}_{~}^{\widecheck{M}}(U^\prime) \to {\BGcon}_{~}^{\widecheck{M}}(U)$
specified by
\begin{flalign}
\nn {\BGcon}_{~}^{\widecheck{M}}(\ell) ~:~ \big(\{A_V\},\{g_{VV^\prime}\}\big) ~&\longmapsto~ \big(\{(\id_V\times \ell)^\ast \,A_V\} , \{(\id_{V\cap V^\prime}\times\ell)^\ast\,g_{VV^\prime}\}\big)~,\\
\{h_{V} \} ~&\longmapsto ~ \{(\id_{V}\times\ell)^\ast\, h_{V}\}~.
\end{flalign}
This completes our description of the mapping stack 
${\BGcon}_{}^{\widecheck{M}}$.
\begin{rem}\label{rem:mappinstackfunctor}
As a consequence of our functorial cofibrant replacement of manifolds (cf.\ Corollary \ref{cor:functorialcofibrepMan}),
the mapping stacks ${\BGcon}_{~}^{\widecheck{(-)}} : \Man_{\hookrightarrow}^{\op}\to \HH$ are functorial
on the category of manifolds with open embeddings. Explicitly, given any open embedding 
$f : M\to M^\prime$, the stages of the associated $\HH$-morphism
${\BGcon}_{~}^{\widecheck{f}} : {\BGcon}_{~}^{\widecheck{M^\prime}} \to {\BGcon}_{~}^{\widecheck{M}}$
are given by
\begin{flalign}
\nn {\BGcon}_{~}^{\widecheck{f}} ~:~\big(\{A_W\},\{g_{WW^\prime}\}\big)~&\longmapsto~\big(\{ (f\vert_V\times \id_U)^\ast \,A_{f(V)}\},\{(f\vert_{V\cap V^\prime}\times \id_U)^\ast \,g_{f(V)f(V^\prime)}\}\big)~,\\
\{h_{W}\} ~&\longmapsto ~\{(f\vert_V\times \id_U)^\ast\, h_{f(V)}\}~,
\end{flalign}
where $W \subseteq M^\prime$, $W^\prime \subseteq M^\prime$, $V \subseteq M$ 
and $V^\prime \subseteq M$ are open subsets diffeomorphic to $\bbR^{\dim(M)}=\bbR^{\dim(M^\prime)}$ 
and $f\vert_V : V\to f(V)$ denotes the restriction of $f: M\to M^\prime$ to $V$ and its image $f(V)$. 
(The smooth map $f\vert_{V \cap V^\prime} : V \cap V^\prime\to f(V) \cap f(V^\prime)$ is defined similarly.)
\end{rem}

\subsection{\label{subsec:concrete}Concretification}
The mapping stack ${\BGcon}_{~}^{\widecheck{M}}$ we described in the previous
subsection (see in particular Lemma \ref{lem:BGconcheckM})
is not yet the correct stack of gauge fields on the manifold $M$.
Even though the groupoid of {\em global} points ${\BGcon}_{~}^{\widecheck{M}}(\bbR^0)$
correctly describes the gauge fields and gauge transformations on $M$,
the smooth structure on ${\BGcon}_{~}^{\widecheck{M}}$, which is encoded
in the groupoids ${\BGcon}_{~}^{\widecheck{M}}(U)$ for all other objects $U$ in $\CC$,
is not the desired one yet. In fact,
the groupoid ${\BGcon}_{~}^{\widecheck{M}}(U)$  describes by construction gauge fields
and gauge transformations on the product manifold $M\times U$, while 
the correct stack of gauge fields on $M$, when evaluated on $U$, should be the groupoid
of smoothly $U$-parametrized gauge fields and gauge transformations on $M$. Hence, 
the problem is that ${\BGcon}_{~}^{\widecheck{M}}(U)$ includes also gauge fields 
along the parameter space $U$ and not only along $M$.
\sk

This problem has already been observed in \cite{Fiorenza,Schreiber}, 
where a solution in terms of {\em concretification} was proposed. The goal of this subsection
is to work out explicitly the concretification of our mapping stack ${\BGcon}_{~}^{\widecheck{M}}$. 
This is achieved by using the results on fibrant replacements in the $(-1)$-truncation 
of the model structure on over-categories $\HH/K$
that we develop in Appendix \ref{app:truncation}. As we explain in more detail in Appendix \ref{app:concretification}, 
the original concretification prescription of \cite{Fiorenza,Schreiber} fails to produce the desired result, i.e.\ a stack
which describes smoothly parametrized families of principal $G$-bundles with connections, 
together with smoothly parametrized
gauge transformations. Hence, we propose an improved concretification prescription 
that is valid for the case of interest in this paper, 
namely principal bundles and connections on manifolds $M$.
A general concretification prescription for the $\infty$-stacks of higher bundles and connections
is beyond the scope of this paper and will be developed elsewhere.
\sk

Crucial for concretification is existence of the following Quillen adjunction
\begin{flalign}\label{eqn:flatsharpadj}
\xymatrix{
\flat : \HH ~\ar@<0.5ex>[r]&\ar@<0.5ex>[l]  ~\HH : \sharp
}~.
\end{flalign}
The left adjoint functor $\flat : \HH \to \HH$
assigns to an object $X$ in $\HH$
the object given by the following presheaf of groupoids $\flat X : \CC^\op\to\Grpd$:
To any object $U$ in $\CC$, it assigns $\flat X(U) = X(\bbR^0)$, i.e.\ the groupoid of global points of $X$,
and to any $\CC$-morphism $\ell : U\to U^\prime$ it assigns the identity morphism
$\flat X(\ell) = \id_{X(\bbR^0)}$. The action of $\flat$ on morphisms in $\HH$ is the obvious one.
Loosely speaking, $\flat X$ is something like a `discrete space' 
as it forgot all the smooth structure on $X$.
The right adjoint functor $\sharp : \HH\to \HH$
assigns to an object $X$ in $\HH$ the object $\sharp X$ defined as follows:
To any object $U$ in $\CC$, it assigns the groupoid
\begin{flalign}
\sharp X(U) = \prod_{p\in U} X(\bbR^0)~,
\end{flalign}
where the product goes over all points $p\in U$,
and to any $\CC$-morphism $\ell : U\to U^\prime$ it assigns the groupoid morphism
defined by universality of products and the commutative diagrams
\begin{flalign}
\xymatrix{
\ar[dr]_-{\pr_{\ell(p)}}\sharp X(U^\prime) \ar[rr]^-{\sharp X(\ell)} && \sharp X(U) \ar[dl]^-{\pr_{p}}\\
&X(\bbR^0)&
}~
\end{flalign}
for all points $p\in U$, where $\pr$ denote the projection $\Grpd$-morphisms 
associated to the products.
The action of $\sharp$ on morphisms in $\HH$ is the obvious one.
It is easy to prove that $\flat \dashv \sharp$ is a Quillen adjunction by using the explicit characterization
of fibrations in $\HH$ (in the local model structure) given by \cite[Proposition 4.2]{Hollander3},
see also Proposition \ref{propo:localfibration}.
The conceptual interpretation of $\sharp X$ is as follows:
For any object $U$ in $\CC$, there exist isomorphisms of groupoids
\begin{flalign}
\sharp X(U) \simeq \Grpd(\underline{U}, \sharp X) \simeq \Grpd(\flat \underline{U},X)~,
\end{flalign}
where we make use of the Yoneda lemma and the adjunction property of $\flat \dashv \sharp$. This shows that,
loosely speaking, the groupoid $\sharp X(U)$ is given by `evaluating' $X$ on 
the discrete space $\flat \underline{U}$. The key idea behind concretification 
is, again loosely speaking, to make use of the passage 
to discrete spaces $\flat \underline{U}$ to avoid gauge fields along the parameter spaces $U$.
\sk

Let us now focus on our explicit example. Consider the object $\sharp ({\BGcon}_{~}^{\widecheck{M}})$ 
in $\HH$, which is a stack because ${\BGcon}_{~}^{\widecheck{M}}$ is a stack and $\sharp$ 
is a right Quillen functor. The objects (respectively morphisms) of the groupoid  $\sharp ({\BGcon}_{~}^{\widecheck{M}})(U)$  
describe by construction families of gauge fields (respectively gauge transformations)
on $M$ that are labeled by the points  $p\in U$. In particular, there appear no gauge fields 
along the parameter space $U$. 
Unfortunately, there is no smoothness requirement on such families.
To solve this problem, consider the canonical $\HH$-morphism
\begin{subequations}\label{eqn:zetaall}
\begin{flalign}\label{eqn:zeta}
\zeta_{\mathrm{con}} \,:\, {\BGcon}_{~}^{\widecheck{M}}~\longrightarrow~ \sharp \big({\BGcon}_{~}^{\widecheck{M}}\big)~.
\end{flalign}
Explicitly, the stages 
$\zeta_{\mathrm{con}} : {\BGcon}_{~}^{\widecheck{M}}(U) \to \sharp ({\BGcon}_{~}^{\widecheck{M}})(U)$
are the following groupoid morphisms:
To any object $(\{A_V\},\{g_{VV^\prime}\})$ of the source groupoid (cf.\ Lemma \ref{lem:BGconcheckM}), 
it assigns
\begin{flalign}\label{eqn:zetaobjects}
\zeta_{\mathrm{con}}\big(\{A_V\},\{g_{VV^\prime}\}\big) = 
\Big( \big\{ (\id_V\times p)^\ast \, A_{V} \big\}  ,\big\{ (\id_{V\cap V^\prime} \times p)^\ast \,g_{VV^\prime} \big\}\Big)~,
\end{flalign}
where we regarded points $p\in U$ as $\CC$-morphisms $p : \bbR^0\to U$. 
To any morphism $\{h_V\}$ of the source groupoid (cf.\ Lemma \ref{lem:BGconcheckM}), 
it assigns
\begin{flalign}\label{eqn:zetamorphisms}
\zeta_{\mathrm{con}}\big(\{h_V\}\big) = \big\{ (\id_V\times p)^\ast\, h_{V} \big\}~.
\end{flalign}
\end{subequations}
Notice that the naive image of this groupoid morphism would solve our problem: It describes 
families of gauge fields and gauge transformations on $M$ 
which are smoothly parametrized by $U$
because they arise as pullbacks along the point embeddings 
$\id_M\times p : M\times \bbR^0 \to M\times U$ of smooth
gauge fields and gauge transformations on $M\times U$.
Unfortunately, the stage-wise naive image of $\zeta_{\mathrm{con}}$ is in general not a homotopically 
meaningful construction in the sense that it does not preserve weak equivalences.
We thus have to solve our problem in a more educated manner to ensure that its solution
is homotopically meaningful.
\sk

For this let us consider the following pullback diagram 
\begin{flalign}\label{eqn:pullbackconcret}
\xymatrix{
\sharp\big({\BGcon}_{~}^{\widecheck{M}}\big) \ar[rr]^-{\sharp (\mathrm{forget}^{\widecheck{M}})} ~&&~ \sharp\big({\BG}_{~}^{\widecheck{M}}\big) ~&&~\ar[ll]_-{\zeta} {\BG}_{~}^{\widecheck{M}}
}
\end{flalign}
in $\HH$. Here ${\BG}_{~}^{\widecheck{M}}$ is the mapping stack from $\widecheck{M}$ to the classifying
stack $\BG$ of principal $G$-bundles (cf.\ Example \ref{ex:BG}), $\zeta$ is a canonical $\HH$-morphism 
similar to \eqref{eqn:zetaall} and $\mathrm{forget} : {\BGcon} \to {\BG}$ 
is the $\HH$-morphism that forgets the connections. (The mapping stack ${\BG}_{~}^{\widecheck{M}}$ is
similar to ${\BGcon}_{~}^{\widecheck{M}}$, cf.\ Lemma \ref{lem:BGconcheckM},
however without the connection data.) Note that there exists a canonical $\HH$-morphism
\begin{flalign}
\xymatrix{
 {\BGcon}_{~}^{\widecheck{M}}\ar[r]~&~ \sharp\big({\BGcon}_{~}^{\widecheck{M}}\big)\times^h_{ \sharp({\BG}_{~}^{\widecheck{M}})} {\BG}_{~}^{\widecheck{M}}
 }
\end{flalign}
to the homotopy fiber product associated to  \eqref{eqn:pullbackconcret}.
We shall use the following abstract
\begin{defi}\label{defi:diffconcret}
The {\em differential concretification} of the mapping stack $ {\BGcon}_{~}^{\widecheck{M}}$,
which we will also call the {\em stack of gauge fields} on $M$, is defined by
\begin{flalign}
\GCon(M) ~:=~\Imm_1\Big(\xymatrix{
 {\BGcon}_{~}^{\widecheck{M}}\ar[r]~&~ \sharp\big({\BGcon}_{~}^{\widecheck{M}}\big)\times^h_{ \sharp({\BG}_{~}^{\widecheck{M}})} {\BG}_{~}^{\widecheck{M}}
 }\Big) ~,
\end{flalign}
where $\Imm_1$ denotes the $1$-image, 
i.e.\ the fibrant replacement in the $(-1)$-truncation of the canonical model structure on the over-category
$\HH\big/\sharp({\BGcon}_{~}^{\widecheck{M}})\times^h_{ \sharp({\BG}_{~}^{\widecheck{M}})} {\BG}_{~}^{\widecheck{M}}$,
see Appendix \ref{app:truncation}. 
\end{defi}

We shall now explicitly compute $\GCon(M)$ in order to confirm that our abstract definition
leads to the desired stack of gauge fields on $M$, i.e.\ the stack describing smoothly parametrized
gauge fields and gauge transformations on $M$. It is practically very convenient 
to compute instead of $\GCon(M)$ defined in Definition \ref{defi:diffconcret}
a weakly equivalent object of $\HH$ that has a simpler and more familiar explicit description.
\sk

As a first step towards a simplified description of $\GCon(M)$, 
we notice that we actually do not have to compute the homotopy fiber
product in Definition \ref{defi:diffconcret} by using the explicit construction 
of Proposition \ref{propo:homfibprod}. The reason is as follows:
Using Proposition \ref{propo:localfibration} it is easy to prove that the morphism 
$\mathrm{forget} : {\BGcon} \to {\BG}$ is a fibration (in the local model structure on $\HH$).
Because $(-)^{\widecheck{M}} : \HH\to \HH$ and $\sharp : \HH\to \HH$ are right Quillen functors,
it follows that the right-pointing arrow in the pullback diagram \eqref{eqn:pullbackconcret} is 
a fibration too. Using that $\HH$ is a right proper model category (cf.\ \cite[Corollary 5.8]{Hollander}),
the statement in \cite[Corollary 13.3.8]{Hirschhorn} implies that the canonical $\HH$-morphism
\begin{flalign}
\xymatrix{
P ~:=~  \sharp\big({\BGcon}_{~}^{\widecheck{M}}\big)\times^{~}_{ \sharp({\BG}_{~}^{\widecheck{M}})} {\BG}_{~}^{\widecheck{M}} ~~\ar[rr]  &&~~\sharp\big({\BGcon}_{~}^{\widecheck{M}}\big)\times^h_{ \sharp({\BG}_{~}^{\widecheck{M}})} {\BG}_{~}^{\widecheck{M}}
}
\end{flalign}
from the {\em ordinary} fiber product to the homotopy fiber product is a weak equivalence.
Hence, we may replace the homotopy fiber product in  Definition \ref{defi:diffconcret} 
by the ordinary fiber product $P$ in order to find a weakly equivalent description of $\GCon(M)$. 
The ordinary fiber product $P$ is much easier to compute: For any object $U$ in $\CC$,
the groupoid $P(U)$ has as objects all pairs of families
\begin{flalign}
\Big(\big \{A_{p;\,V} \in \Omega^1(V,\g)\big\} , \big\{g_{VV^\prime} \in C^\infty((V\cap V^\prime)\times U,G)\big\}\Big)~,
\end{flalign}
where $V \subseteq M$, $V^\prime \subseteq M$ run over all open subsets diffeomorphic to $\bbR^{\dim(M)}$ and $p\in U$ runs over all points of $U$,
such that $(\{A_{p;\,V} \} , \{ (\id_{V\cap V^\prime} \times p)^\ast\, g_{VV^\prime}\})$
is an object in $ \sharp({\BGcon}_{~}^{\widecheck{M}})(U)$.
The morphisms in $P(U)$ from $( \{A_{p;\,V}\} , \{g_{VV^\prime} \})$
to $( \{A^\prime_{p;\,V}\} , \{g^\prime_{VV^\prime} \})$
are all families
\begin{flalign}
\big\{h_{V} \in C^\infty(V \times U,G)\big\}~,
\end{flalign}
where $V \subseteq M$ runs over all open subsets diffeomorphic to $\bbR^{\dim(M)}$,
such that $\{(\id_V\times p)^\ast\, h_{V}\}$ is a morphism
in $ \sharp({\BGcon}_{~}^{\widecheck{M}})(U)$
from $(\{A_{p;\,V} \} , \{ (\id_{V\cap V^\prime} \times p)^\ast\, g_{VV^\prime}\})$
to $(\{A^\prime_{p;\,V} \} , \{ (\id_{V\cap V^\prime} \times p)^\ast\, g^\prime_{VV^\prime}\})$.
For a morphism $\ell : U\to U^\prime$ in $\CC$, the associated
$\Grpd$-morphism $P(\ell ) : P(U^\prime) \to P(U)$
is given by
\begin{flalign}
\nn P(\ell) ~:~ \big( \big\{A_{p^\prime;\,V}\big\} , \big\{g_{VV^\prime} \big\}\big) ~ &\longmapsto ~
 \big( \big\{  \,A_{\ell(p);\,V} \big\} , \big\{(\id_{V\cap V^\prime}\times \ell)^\ast\,g_{VV^\prime} \big\}\big)~,\\
 \big\{h_{V} \big\} ~&\longmapsto ~ \big\{(\id_V\times \ell)^\ast\,h_{V}\big\}~.
\end{flalign}
As a side remark, notice that $P(U)$ has the desired morphisms, however 
the gauge fields are not smoothly $U$-parametrized yet.
\sk

As a further simplification in our explicit description of $\GCon(M)$, 
we may combine Proposition \ref{propo:easyfibrepH/K} and Proposition \ref{propo:basechange}
to compute (again up to weak equivalence) the $1$-image in 
Definition \ref{defi:diffconcret} by the full image sub-presheaf of groupoids corresponding 
to the canonical morphism $ {\BGcon}_{~}^{\widecheck{M}}\to P$ to the {\em ordinary} fiber product $P$.
This defines the following object in $\HH$, which is weakly equivalent to $\GCon(M)$ given
in Definition \ref{defi:diffconcret}.
\begin{propo}\label{propo:GCon}
Up to weak equivalence in $\HH$, the stack of gauge fields $\GCon(M)$
defined in Definition \ref{defi:diffconcret} has the following explicit description:
For all objects $U$ in $\CC$, the objects of the groupoid $\GCon(M)(U)$ are pairs of families
\begin{flalign}
\big(\AAA,\PPP\big) := \Big(\big\{A_{V} \in \Omega^{1,0}(V\times U,\g)\big\},\big\{ g_{VV^\prime}\in C^\infty((V\cap V^\prime)\times U,G) \big\}\Big)~,
\end{flalign}
where  $V\subseteq M$ and $V^\prime \subseteq M$ run over all 
open subsets diffeomorphic to $\bbR^{\dim(M)}$ 
and $\Omega^{1,0}$ denotes the vertical $1$-forms on $V\times U\to U$,
which satisfy the following  conditions: 
\begin{itemize}
\item For all open subsets $V\subseteq M$ and $V^\prime \subseteq M$ diffeomorphic to $\bbR^{\dim(M)}$,
\begin{flalign}
A_{V^\prime}\vert_{(V\cap V^\prime)\times U}^{}~=~
A_{V}\vert_{(V\cap V^\prime)\times U}^{} \ra^{\ver} g_{VV^\prime}~,
\end{flalign}
where $A \ra^{\ver} g := g^{-1}\, A\, g + g^{-1}\, \dd^{\ver} g$ 
is the vertical action of gauge transformations that is
defined by the vertical de Rham differential $\dd^{\ver}$  on $V\times U\to U$.

\item  For all open subsets $V\subseteq M$ diffeomorphic to $\bbR^{\dim(M)}$, $g_{VV} = e$.

\item  For all open subsets $V\subseteq M$, $V^\prime \subseteq M$ 
and $V^{\prime\prime} \subseteq M$ diffeomorphic to $\bbR^{\dim(M)}$,
\begin{flalign}
g_{VV^\prime}\vert_{(V\cap V^\prime\cap V^{\prime\prime})\times U}^{}
~~g_{V^\prime V^{\prime\prime}}\vert_{(V\cap V^\prime\cap V^{\prime\prime})\times U}^{}
~=~
g_{VV^{\prime\prime}}\vert_{(V\cap V^\prime\cap V^{\prime\prime})\times U}^{}~.
\end{flalign}
\end{itemize}
The morphisms of the groupoid $\GCon(M)(U)$ from $(\AAA,\PPP) =(\{A_V\},\{g_{VV^\prime}\})$ 
to  $(\AAA^\prime,\PPP^\prime) = (\{A^\prime_V\},\{g^\prime_{VV^\prime}\})$ are families
\begin{flalign}
\hhh:= \big\{h_{V} \in C^\infty(V\times U,G)\big\} \,:\, \big(\AAA,\PPP\big)~\longrightarrow ~\big(\AAA^\prime,\PPP^\prime\big)~,
\end{flalign}
where $V\subseteq M$ runs over all open subsets diffeomorphic to $\bbR^{\dim(M)}$,
which satisfy the following conditions:
\begin{itemize}
\item For all open subsets $V\subseteq M$ diffeomorphic to $\bbR^{\dim(M)}$,
\begin{flalign}
A_{V}^\prime ~=~ A_V \ra^{\ver} h_V~.
\end{flalign}

\item For all open subsets $V\subseteq M$ and $V^\prime \subseteq M$ diffeomorphic to $\bbR^{\dim(M)}$,
\begin{flalign}
g_{VV^\prime}^\prime ~=~ h_V^{-1}\vert^{}_{(V\cap V^\prime)\times U}~~g_{VV^\prime}~~h_{V^\prime}\vert^{}_{(V\cap V^\prime)\times U}~.
\end{flalign}
\end{itemize}
For all morphisms $\ell : U\to U^\prime$ in $\CC$,  the groupoid morphism 
$\GCon(M)(\ell) : \GCon(M)(U^\prime) \to \GCon(M)(U)$ is given by
\begin{flalign}
\nn\GCon(M)(\ell)~:~ \big(\AAA,\PPP\big) ~&\longmapsto~ \big(\ell^\ast\AAA,\ell^\ast\PPP\big) := \big(\{(\id_V\times \ell)^\ast \,A_V\} , \{(\id_{V\cap V^\prime}\times\ell)^\ast\,g_{VV^\prime}\}\big)~,\\
\hhh ~&\longmapsto ~ \ell^\ast\hhh := \{(\id_{V}\times\ell)^\ast\, h_{V}\}~.
\end{flalign}
\end{propo}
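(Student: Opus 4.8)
The plan is to make explicit the two reductions indicated in the paragraphs preceding the statement and then to compute the resulting full image sub-presheaf of groupoids. By those paragraphs, $\GCon(M)$ is weakly equivalent to the full image sub-presheaf $Z\subseteq P$ of the canonical $\HH$-morphism $\phi\colon {\BGcon}_{~}^{\widecheck{M}}\to P$ onto the \emph{ordinary} fiber product $P$ associated with \eqref{eqn:pullbackconcret} (this uses Proposition \ref{propo:easyfibrepH/K}, Proposition \ref{propo:basechange}, right properness of $\HH$ and \cite[Corollary 13.3.8]{Hirschhorn}), so it suffices to identify $Z$ with the presheaf of groupoids described in the statement. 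First I would write $\phi$ explicitly: combining Lemma \ref{lem:BGconcheckM} with the description of $P$ given above, at stage $U$ the functor $\phi\colon{\BGcon}_{~}^{\widecheck{M}}(U)\to P(U)$ sends an object $(\{A_V\},\{g_{VV^\prime}\})$ with $A_V\in\Omega^1(V\times U,\g)$ to $\big(\{(\id_V\times p)^\ast A_V\}_{p\in U},\{g_{VV^\prime}\}\big)$ and a morphism $\{h_V\}$ to $\{h_V\}$.

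The key elementary observation will be that, for a point $p\colon\bbR^0\to U$, one has $(\id_V\times p)^\ast A_V=(\id_V\times p)^\ast A_V^{\ver}$, where $A_V^{\ver}\in\Omega^{1,0}(V\times U,\g)$ denotes the vertical component of $A_V$ for $V\times U\to U$; the horizontal component is killed by restriction to the slices $V\times\{p\}$. Hence the action of $\phi_U$ on objects retains only the vertical components $\{A_V^{\ver}\}$ together with the cocycle $\{g_{VV^\prime}\}$, and---decomposing the identity $A_{V^\prime}\vert=A_V\vert\ra g_{VV^\prime}$ into vertical and horizontal parts---the vertical components automatically satisfy $A_{V^\prime}^{\ver}\vert=A_V^{\ver}\vert\ra^{\ver}g_{VV^\prime}$. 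Next I would identify $Z(U)$, the full subgroupoid of $P(U)$ on objects isomorphic to one in the image of $\phi_U$. If an object of $P(U)$ is isomorphic in $P(U)$ to $\phi_U(\{A_V\},\{g_{VV^\prime}\})$ via a gauge transformation $\{h_V\}$, then it has the form $\big(\{(\id_V\times p)^\ast B_V^{\ver}\}_p,\{h_V^{-1}g_{VV^\prime}h_{V^\prime}\}\big)$ with $B_V^{\ver}:=A_V^{\ver}\ra^{\ver}h_V$; since a vertical $1$-form is determined by its restrictions to all slices, this data amounts to the single family $\{B_V^{\ver}\}$, which satisfies the vertical cocycle identity and the cocycle conditions of the statement. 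Conversely, given $(\{B_V^{\ver}\},\{g_{VV^\prime}\})$ satisfying those conditions, I would show it lies in the strict image of $\phi_U$: picking any connection $\{C_V\}$ on the principal $G$-bundle over $M\times U$ determined by $\{g_{VV^\prime}\}$ (which exists because $M\times U$ is paracompact) and setting $A_V:=B_V^{\ver}+C_V^{\mathrm{hor}}$, one checks by separating $A_{V^\prime}\vert=A_V\vert\ra g_{VV^\prime}$ into vertical and horizontal parts---the vertical identity holding by hypothesis on the $B_V^{\ver}$ and the horizontal one by the cocycle property of $\{C_V\}$---that $(\{A_V\},\{g_{VV^\prime}\})\in{\BGcon}_{~}^{\widecheck{M}}(U)$ and that $\phi_U$ of it equals $\big(\{(\id_V\times p)^\ast B_V^{\ver}\}_p,\{g_{VV^\prime}\}\big)$.

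Putting this together, the objects of $Z(U)$ are exactly the pairs $(\AAA,\PPP)$ in the statement. Since $Z(U)$ is by construction the \emph{full} subgroupoid of $P(U)$ on these objects, its morphisms are all $P(U)$-morphisms between them, i.e.\ the families $\{h_V\in C^\infty(V\times U,G)\}$ with $A_{p;V}\ra(\id_V\times p)^\ast h_V=A^\prime_{p;V}$ for all $p$ (equivalently $A_V^\prime=A_V\ra^{\ver}h_V$) and $g^\prime_{VV^\prime}=h_V^{-1}g_{VV^\prime}h_{V^\prime}$, and the structure maps $Z(\ell)$ are the restrictions of $P(\ell)$, acting as asserted. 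This yields the claimed presheaf of groupoids, hence $\GCon(M)$ is weakly equivalent to it. The main obstacle I anticipate is the lifting step---extending a vertically compatible family $(\{B_V^{\ver}\},\{g_{VV^\prime}\})$ to a genuine (full) connection on $M\times U$---which is precisely where paracompactness of $M\times U$ and the existence of connections on principal bundles enter; everything else is bookkeeping with the vertical/horizontal bigrading of forms on $V\times U\to U$ and with the definition of the full image from Appendix \ref{app:truncation}.
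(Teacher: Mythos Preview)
Your proposal is correct and follows the same strategy the paper outlines in the paragraphs preceding Proposition \ref{propo:GCon}: reduce the homotopy fiber product to the ordinary one $P$ (via right properness and the fact that the right-pointing arrow is a fibration), replace the $1$-image by the full image (via Propositions \ref{propo:easyfibrepH/K} and \ref{propo:basechange}), and then compute the full image sub-presheaf explicitly. The paper does not spell out this last computation; your argument using the vertical/horizontal decomposition of forms and the existence of connections on principal bundles over the paracompact base $M\times U$ is an appropriate way to carry it out.

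One terminological point worth tightening: you describe $Z(U)$ as ``the full subgroupoid of $P(U)$ on objects \emph{isomorphic to} one in the image of $\phi_U$'', but the paper's full image $\Imm(\phi)(U)$ is by definition the full subgroupoid on objects \emph{literally in} the image. Your own computation in fact shows these coincide (every object of the displayed form lies in the strict image via your connection-lifting argument), so the result is unaffected, but it would be cleaner to state from the outset that you are computing the strict image and to organize the two steps as: (i) any object in the image of $\phi_U$ has the form $(\{B_V^{\ver}\},\{g_{VV'}\})$; (ii) any such pair lies in the image.
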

\begin{defi}\label{defi:Ugaugefield}
Let $U$ be an object in $\CC$. We call an object $(\AAA,\PPP)$ 
of the groupoid $\GCon(M)(U)$ a {\em smoothly $U$-parametrized gauge field} on $M$.
More precisely, we call $\PPP$ a {\em smoothly $U$-parametrized principal $G$-bundle} on $M$ and
$\AAA$ a {\em smoothly $U$-parametrized connection} on $\PPP$.
A morphism $\hhh :(\AAA,\PPP)\to(\AAA^\prime,\PPP^\prime)$ 
of the groupoid  $\GCon(M)(U)$ is called a {\em smoothly $U$-parametrized gauge transformation}.
\end{defi}
\begin{rem}\label{rem:concmappinstackfunctor}
Similarly to Remark \ref{rem:mappinstackfunctor},
our explicit model for the stack of gauge fields 
presented in Proposition \ref{propo:GCon}
provides a functor $\GCon : \Man_{\hookrightarrow}^{\op}\to \HH$
on the category of manifolds with open embeddings. Given any open embedding 
$f : M\to M^\prime$, we also use the simplified notation 
\begin{flalign}
\nn \GCon(f)\big( \AAA,\PPP \big)&:= \big(f^\ast\AAA,f^\ast\PPP\big) := \big(\{ (f\vert_V\times \id_U)^\ast \,A_{f(V)}\},\{(f\vert_{V\cap V^\prime}\times \id_U)^\ast \,g_{f(V)f(V^\prime)}\}\big)~,\\
 \GCon(f)\big(\hhh\big) &:= f^\ast\hhh := \{(f\vert_V\times \id_U)^\ast\, h_{f(V)}\}~,
\end{flalign}
for any stage $U$ in $\CC$ of the $\HH$-morphism $\GCon(f) : \GCon(M^\prime) \to \GCon(M)$.
\end{rem}
\begin{rem}\label{rem:GConisstack}
The abstract description of $\GCon(M)$ provided in Definition \ref{defi:diffconcret}
is automatically a stack, i.e.\ a fibrant object in $\HH$. This is a consequence of the following facts:
1.)\ All three objects entering the homotopy fiber product are stacks, hence the homotopy fiber product
is a stack too, see Proposition \ref{propo:homfibprod}. 
2.)\ $\GCon(M)$ is constructed by a fibrant replacement in the 
$(-1)$-truncation of the corresponding over-category 
(see Appendix \ref{app:truncation} for the relevant terminology), hence 
the induced $\HH$-morphism from $\GCon(M)$ to the homotopy fiber product is 
a fibration in $\HH$. (Recall that fibrant objects in the $(-1)$-truncation 
of the over-category, i.e.\ $S_{-1}$-local objects, are by Definition \ref{defi:S-1local} 
in particular fibrations in $\HH$.) Combining
1.)\ and 2.), it follows that $\GCon(M)$ is a stack because the morphism $\GCon(M) \to \{\ast\}$ to
the terminal object factorizes over the homotopy fiber product in terms of two 
fibrations in $\HH$, and thus is a fibration too.
\sk

This unfortunately {\em does not} automatically imply that our weakly equivalent simplified
description of $\GCon(M)$ given in Proposition \ref{propo:GCon} is a stack too.
One can, however, verify explicitly that the stack condition \eqref{eqn:stack} 
holds true for the presheaf of groupoids $\GCon(M)$ presented in Proposition \ref{propo:GCon}.
This is a straightforward, but rather tedious, calculation using Proposition \ref{propo:descentcat} 
to compute the relevant homotopy limits. As this calculation is not very 
instructive, we shall not write it out in full detail and just mention that it uses arguments similar to those in
Example \ref{ex:BGcon}. (In particular, it uses that all cocycles on manifolds diffeomorphic to 
some $\bbR^n$ may be trivialized, and that $G$-valued functions and $\g$-valued forms are sheaves.)
\end{rem}

We introduce the following
notation for the mapping stack $\BG^{\widecheck{M}}$ in 
order to match the notations established 
in Proposition \ref{propo:GCon} and Definition \ref{defi:Ugaugefield}.
\begin{defi}\label{defi:GBun}
We call $\GBun(M):= \BG^{\widecheck{M}}$ the {\em stack of principal $G$-bundles} on $M$.
For any object $U$ in $\CC$, we denote the objects of $\GBun(M)(U)$ by symbols
like $\PPP := \{g_{VV^\prime}\}$ and the morphisms of $\GBun(M)(U)$ by symbols
like $\hhh :=\{h_V\} : \PPP\to \PPP^\prime$. 
\end{defi}

\subsection{\label{subsec:curvature}Curvature morphism}
Recalling the definition of the classifying stack $\BG_{\Omega^2_{\ad}}$ of 
principal $G$-bundles with adjoint bundle valued $2$-forms (cf.\ Example \ref{ex:Liealgforms}), 
we consider the canonical curvature $\HH$-morphism 
\begin{flalign}
F:\BGcon \longrightarrow \BG_{\Omega^2_{\ad}}~,
\end{flalign}
whose stages $F: \BGcon(U) \to \BG_{\Omega^2_{\ad}}(U)$ are
the groupoid morphisms given by
\begin{flalign}
\nn F\,:\qquad A~&\longmapsto~ F(A) := \dd A + A \wedge A~,\\
 (A,g)~&\longmapsto ~(F(A),g)~.
\end{flalign}
Given any manifold $M$, we have an induced $\HH$-morphism 
$F^{\widecheck{M}}:{\BGcon}^{\widecheck{M}} \to {\BG_{\Omega^2_{\ad}}}^{\widecheck{M}}$ 
between mapping stacks. We shall now construct the
counterpart of this morphism on concretified mapping stacks.
This will later be used to formalize the Yang-Mills equation on $\GCon(M)$. 
\sk

As a first step, we explicitly describe the mapping stack
from $\widecheck{M}$ to $\BG_{\Omega^p_{\ad}}$, $p\in\bbZ_{\geq0}$, following the lines
of Section \ref{subsec:mapstack}. Up to weak equivalence in $\HH$, the associated presheaf of groupoids
\begin{flalign}
{\BG_{\Omega^p_{\ad}}}_{~}^{\widecheck{M}} : \CC^\op \longrightarrow \Grpd~
\end{flalign} 
has a description similar to Lemma \ref{lem:BGconcheckM}: For any object $U$ in $\CC$,
the objects of the groupoid ${\BG_{\Omega^p_{\ad}}}^{\widecheck{M}}(U)$ are pairs of families 
\begin{flalign}
\Big(\big\{ \omega_V  \in\Omega^p(V\times U,\g)  \big\},
\big\{ g_{VV^\prime} \in C^\infty((V\cap V^\prime)\times U , G)\big\}\Big)~,
\end{flalign}
where $V\subseteq M$ and $V^\prime \subseteq M$ run over all 
open subsets diffeomorphic to $\bbR^{\dim(M)}$, which satisfy
\begin{subequations}\label{eqn:formcomp}
\begin{flalign}
\omega_{V^\prime}\vert_{(V\cap V^\prime)\times U}^{} ~&=~ 
\ad_{g_{VV^\prime}} \big(\omega_V\vert_{(V\cap V^\prime)\times U}^{}\big)~, \\
 g_{VV} ~&=~e~, \\
g_{VV^\prime}\vert_{(V\cap V^\prime\cap V^{\prime\prime})\times U}^{}~~
g_{V^\prime V^{\prime\prime}}\vert_{(V\cap V^\prime\cap V^{\prime\prime})\times U}^{}
~&=~ g_{VV^{\prime\prime}}\vert_{(V\cap V^\prime\cap V^{\prime\prime})\times U}^{}~,
\end{flalign}
\end{subequations}
for all open subsets $V\subseteq M$, $V^\prime \subseteq M$ and $V^{\prime\prime} \subseteq M$ 
diffeomorphic to $\bbR^{\dim(M)}$. 
The morphisms of the groupoid ${\BG_{\Omega^p_{\ad}}}^{\widecheck{M}}(U)$ from 
$(\{\omega_V\},\{g_{VV^\prime}\})$ to $(\{\omega_{V}^\prime\}, \{g^\prime_{VV^\prime}\})$ 
are families
\begin{flalign}
\big\{ h_{V} \in C^\infty(V\times U,G)  \big\}~,
\end{flalign}
where $V\subseteq M$ runs over all open subsets  diffeomorphic to $\bbR^{\dim(M)}$,
which satisfy
\begin{subequations}\label{eqn:formcomp2}
\begin{flalign}
\omega_{V}^\prime ~&=~ \ad_{h_V} (\omega_V) ~,\\
g_{VV^\prime}^\prime ~&=~ h_V^{-1}\vert^{}_{(V\cap V^\prime)\times U}~~g_{VV^\prime}~~h_{V^\prime}\vert^{}_{(V\cap V^\prime)\times U}~,
\end{flalign} 
\end{subequations}
for all open subsets $V\subseteq M$ and $V^\prime \subseteq M$ diffeomorphic to $\bbR^{\dim(M)}$.
Moreover, the functor ${\BG_{\Omega^p_{\ad}}}^{\widecheck{M}}: \CC^\op \to \Grpd$ 
assigns to a morphism $\ell: U \to U^\prime$ in $\CC$ the groupoid morphism 
${\BG_{\Omega^p_{\ad}}}^{\widecheck{M}}(\ell): {\BG_{\Omega^p_{\ad}}}^{\widecheck{M}}(U^\prime) 
\to {\BG_{\Omega^p_{\ad}}}^{\widecheck{M}}(U)$ specified by
\begin{flalign}
\nn {\BG_{\Omega^p_{\ad}}}_{~}^{\widecheck{M}}(\ell) ~:~ \big(\{\omega_V\},\{g_{VV^\prime}\}\big) ~&\longmapsto~ \big(\{(\id_V\times \ell)^\ast \,\omega_V\} , \{(\id_{V\cap V^\prime}\times\ell)^\ast\,g_{VV^\prime}\}\big)~,\\
\{h_{V} \} ~&\longmapsto ~ \{(\id_{V}\times\ell)^\ast\, h_{V}\}~.
\end{flalign}

The stages of the $\HH$-morphism 
\begin{flalign}
F^{\widecheck{M}} \, :\, {\BGcon}^{\widecheck{M}} ~\longrightarrow~ {\BG_{\Omega^2_{\ad}}}_{~}^{\widecheck{M}}~
\end{flalign}
are by construction the groupoid morphisms from 
${\BGcon}^{\widecheck{M}}(U) = \Grpd(\underline{U}\times \widecheck{M}, \BGcon)$ 
to ${\BG_{\Omega^2_{\ad}}}^{\widecheck{M}}(U) = \Grpd(\underline{U}\times \widecheck{M}, \BG_{\Omega^2_{\ad}})$ 
resulting from post-composition by $F: \BGcon \to \BG_{\Omega^2_{\ad}}$. 
More explicitly, using the description of ${\BGcon}_{~}^{\widecheck{M}}(U)$ 
given in Lemma \ref{lem:BGconcheckM} and the one of ${\BG_{\Omega^2_{\ad}}}^{\widecheck{M}}$ provided above, 
we find that $F^{\widecheck{M}}:{\BGcon}^{\widecheck{M}}(U) \to 
{\BG_{\Omega^2_{\ad}}}^{\widecheck{M}}(U)$ is the groupoid morphism 
\begin{flalign}
\nn F^{\widecheck{M}} ~:~ \big(\{A_V\},\{g_{VV^\prime}\}\big) ~&\longmapsto~ \big(\{F(A_V)\} , \{g_{VV^\prime}\}\big)~,\\
\{h_{V} \} ~&\longmapsto ~ \{h_{V}\}~,
\end{flalign}
for all objects $U$ in $\CC$.
It is easy to confirm directly  that 
$F^{\widecheck{M}}: {\BGcon}^{\widecheck{M}} \to {\BG_{\Omega^2_{\ad}}}^{\widecheck{M}}$ 
is a natural transformation between functors $\CC^\op \to \Grpd$. 
\sk

Our concretification prescription for ${\BGcon}_{~}^{\widecheck{M}}$ (cf.\ Definition \ref{defi:diffconcret})
can be also applied to the mapping stacks 
${\BG_{\Omega^p_{\ad}}}^{\widecheck{M}}$, $p\in\bbZ_{\geq 0}$. The relevant pullback diagram (replacing
\eqref{eqn:pullbackconcret}) is
\begin{flalign}
\xymatrix{
\sharp\big({\BG_{\Omega^p_{\ad}}}_{~}^{\widecheck{M}}\big) \ar[rr]^-{\sharp (\mathrm{forget}^{\widecheck{M}})} ~&&~ \sharp\big({\BG}_{~}^{\widecheck{M}}\big) ~&&~\ar[ll]_-{\zeta} {\BG}_{~}^{\widecheck{M}}
}~,
\end{flalign}
where now the $\HH$-morphism $\mathrm{forget} : {\BG_{\Omega^p_{\ad}}} \to \BG$ is the 
one which forgets the adjoint bundle valued $p$-forms.
We define in analogy to Definition \ref{defi:diffconcret} the concretification of ${\BG_{\Omega^p_{\ad}}}^{\widecheck{M}}$
abstractly by
\begin{flalign}
\bfOmega^p(M,\ad(G)) ~:=~\Imm_1\Big(\xymatrix{
 {\BG_{\Omega^p_{\ad}}}_{~}^{\widecheck{M}}\ar[r]~&~ \sharp\big({\BG_{\Omega^p_{\ad}}}_{~}^{\widecheck{M}}\big)\times^h_{ \sharp({\BG}_{~}^{\widecheck{M}})} {\BG}_{~}^{\widecheck{M}}
 }\Big) ~.
\end{flalign}
Similarly to Proposition \ref{propo:GCon}, 
the concretified mapping stack $\bfOmega^p(M,\ad(G))$ has the following
explicit description, up to weak equivalence in $\HH$.  For any object $U$ in $\CC$,
the objects of the groupoid $\bfOmega^p(M,\ad(G))(U)$ are pairs of families 
\begin{flalign}\label{eqn:bfOmegaobj}
\big(\bfomega,\PPP\big) := \Big(\big\{\omega_{V} \in \Omega^{p,0}(V\times U,\g)\big\},\big\{ g_{VV^\prime}\in C^\infty((V\cap V^\prime)\times U,G) \big\}\Big)~,
\end{flalign}
where $V\subseteq M$ and $V^\prime \subseteq M$ run over all open subsets
diffeomorphic to $\bbR^{\dim(M)}$ and $\Omega^{p,0}$ denotes the vertical $p$-forms on $V\times U\to U$,
which satisfy the conditions in \eqref{eqn:formcomp}. The morphisms of the groupoid 
$\bfOmega^p(M,\ad(G))(U)$ from $(\bfomega,\PPP)= (\{\omega_V\},\{g_{VV^\prime}\})$ to 
$(\bfomega^\prime,\PPP^\prime) = (\{\omega_{V}^\prime\}, \{g^\prime_{VV^\prime}\})$ are families 
\begin{flalign}\label{eqn:bfOmegamor}
\hhh:= \big\{ h_{V} \in C^\infty(V\times U,G)  \big\}~:~\big(\bfomega,\PPP\big)~\longrightarrow~\big(\bfomega^\prime,\PPP^\prime\big)~,
\end{flalign}
where $V\subseteq M$ runs over all open subsets  diffeomorphic to $\bbR^{\dim(M)}$,
which satisfy the conditions in \eqref{eqn:formcomp2}. The groupoid morphisms
$\bfOmega^p(M,\ad(G))(\ell) $  associated to $\CC$-morphisms $\ell :U\to U^\prime$
are analogous to the ones in Proposition \ref{propo:GCon}. Similarly to Remark \ref{rem:GConisstack},
one can verify that this simplified, weakly equivalent description
of $\bfOmega^p(M,\ad(G))$  is a stack too, for all $p\in\bbZ_{\geq 0}$.
\sk

From our abstract concretification prescription and the $\HH$-morphism
$F^{\widecheck{M}} :{\BGcon}^{\widecheck{M}} 
\to {\BG_{\Omega^2_{\ad}}}^{\widecheck{M}}$ we obtain an $\HH$-morphism
\begin{flalign}
\FF_M\,:\, \GCon(M)~ \longrightarrow~ \bfOmega^2(M,\ad(G))~
\end{flalign}
between the concretified mapping stacks.
Explicitly, using our simplified, but weakly equivalent, models for both concretified mapping stacks, 
the stages  $\FF_M: \GCon(M)(U) \to \bfOmega^2(M,\ad(G))(U)$ 
of this $\HH$-morphism read as
\begin{flalign}\label{eqn:curvaturemorph}
\nn \FF_M ~: ~~\big(\AAA,\PPP\big) = \big(\{A_V\} , \{g_{VV^\prime}\}\big) ~&\longmapsto~ \big(\{F^{\ver}(A_V)\} , \{g_{VV^\prime}\}\big)~,\\
\big(\hhh : (\AAA,\PPP)\to(\AAA^\prime,\PPP^\prime)\big) ~&\longmapsto ~ \big(\hhh:  \FF_M(\AAA,\PPP)\to  \FF_M(\AAA^\prime,\PPP^\prime)\big)~,
\end{flalign}
where $F^{\ver}(A_V) := \dd^{\ver} A_V + A_V \wedge A_V$ is the vertical curvature 
defined by the vertical de Rham differential $\dd^{\ver}$ on $V\times U\to U$. 
\begin{rem}\label{rem:conccurvaturenat}
Similar to Remark \ref{rem:concmappinstackfunctor} it is easy to see that
$\bfOmega^p(-,\ad(G)) : \Man_{\hookrightarrow}^{\op}\to \HH$ is a functor, for all $p\in\bbZ_{\geq0}$.
It is also easy to check that the $\HH$-morphisms $\FF_M: \GCon(M) \to \bfOmega^2(M,\ad(G))$ constructed
above are the components of a natural transformation
\begin{flalign}
\FF \,:\, \GCon ~\longrightarrow~ \bfOmega^2(-,\ad(G))
\end{flalign}
between functors from $\Man_{\hookrightarrow}^{\op}$ to $\HH$.
\end{rem}


\section{\label{sec:YM}Yang-Mills equation}
We formalize the Yang-Mills equation on
globally hyperbolic Lorentzian manifolds in terms of a morphism between 
concretified mapping stacks. The corresponding stack of Yang-Mills solutions 
is defined by an appropriate homotopy fiber product and it will be worked out explicitly, 
up to weak equivalence in $\HH$.
Our constructions are functorial on the usual category $\Loc_m$
of oriented and time-oriented globally hyperbolic Lorentzian manifolds
(of a fixed but arbitrary dimension $m\geq 2$), which we shall review below.

\subsection{\label{subsec:spacetimes}Globally hyperbolic Lorentzian manifolds}
Spacetimes in physics are described by globally hyperbolic Lorentzian manifolds,
see \cite[Chapter 3]{BEE}, \cite[Chapter 14]{ONeill} 
and also \cite[Section 1.3]{Baer} for a more concise introduction.
Recall that a Lorentzian manifold is a manifold $M$ that is equipped with a metric
of Lorentzian signature $(-++\cdots +)$.
We further assume our Lorentzian manifolds to be equipped with an orientation
and a time-orientation, and that they are
of a fixed but arbitrary dimension $m\geq 2$. 
For notational simplicity, we denote oriented and time-oriented 
Lorentzian manifolds by symbols like $M$, i.e.\ we
suppress the orientation, time-orientation and metric from our notation.
\sk

A Cauchy surface in a Lorentzian manifold $M$ 
is a subset $\Sigma\subseteq M$ such that 
every inextensible timelike curve in $M$ meets $\Sigma$ exactly once.
A Lorentzian manifold that admits a Cauchy surface 
is called globally hyperbolic.
Globally hyperbolic Lorentzian manifolds $M$ provide us with a suitable 
geometric framework to study hyperbolic partial differential equations,
whose initial data are assigned on a spacelike Cauchy surface $\Sigma\subseteq M$.
Later we will also need an equivalent characterization 
of global hyperbolicity given in terms of strong causality 
and compactness of double-cones. Concretely, this means that 
every point admits a basis of causally convex open neighborhoods and 
that $J^+_M(p) \cap J^-_M(p^\prime)$ is compact, for all $p,p^\prime \in M$. 
Recall that a subset $S \subseteq M$ is called causally convex 
if $J^+_M(S) \cap J^-_M(S) \subseteq S$ 
(in other words, causal curves with endpoints in $S$ lie entirely in it), 
where $J^\pm_M(S)\subseteq M$ denotes the causal future/past of $S\subseteq M$.
(This is the subset consisting of all points of $M$ that can be reached by a future/past-directed 
piecewise smooth causal curve emanating from $S$.)
\sk

Let us denote by $\Loc_m$ the category of all
$m$-dimensional oriented and time-oriented globally hyperbolic Lorentzian manifolds $M$
with morphisms given by all causal embeddings $f : M\to M^\prime$ (see \cite[Section 2.2]{SPAS}).
Explicitly, the latter are orientation and time-orientation preserving isometric embeddings, 
whose image is a causally convex open subset of $M^\prime$. 
\sk

Let $\mathcal{V} = \{V_\alpha\subseteq M\}$ be an open cover of (the manifold underlying) an object 
$M$ in $\Loc_m$. Then each $V_\alpha$ may be equipped with 
the induced Lorentzian metric, orientation and time-orientation,
however it is not necessarily globally hyperbolic and hence not necessarily an object in $\Loc_m$.
This motivates us to introduce the notion of {\em causally convex open covers}
of objects $M$ in $\Loc_m$, which are open covers $\mathcal{V} = \{V_\alpha\subseteq M\}$
such that each $V_\alpha$ is causally convex. 
By the characterization of global hyperbolicity in terms 
of strong causality and compactness of double-cones, it follows 
that each $V_\alpha$ is an object of $\Loc_m$ when equipped 
with the induced Lorentzian metric, orientation and time-orientation,
and that the canonical inclusion $\rho_\alpha : V_\alpha\to M$ is a $\Loc_m$-morphism.
Moreover, if non-empty, $V_\alpha\cap V_\beta$ is a causally convex open subset
of $M$ (as well as of $V_\alpha$ and of $V_\beta$) and thus may be regarded as an object in $\Loc_m$.
The canonical inclusion of $V_\alpha\cap V_\beta\neq \emptyset$ into 
$M$ (as well as the ones into $V_\alpha$ and $V_\beta$) is a $\Loc_m$-morphism.
A similar statement holds true for all higher intersections $V_{\alpha_1}\cap \cdots\cap V_{\alpha_n}\neq \emptyset$.
\sk

Similarly to Corollary \ref{cor:functorialcofibrepMan},
there exists a canonical cofibrant replacement in $\HH$ 
for objects in $\Loc_m$ which makes use of causally convex open covers. 
For $M$ an object in $\Loc_m$, we define 
\begin{flalign}\label{eqn:VMcc}
\mathcal{V}_M^{\mathrm{cc}}:= \Big\{V\subseteq M\,:\, V\text{ open, causally convex and diffeomorphic to }\bbR^m\Big\}
\end{flalign}
to be the set of {\em all} causally convex open subsets of $M$ that are diffeomorphic 
to $\bbR^m$.\footnote{It is easy to see that \eqref{eqn:VMcc} indeed defines a cover of $M$:
Given any point $p\in M$, choose a Cauchy surface $\Sigma \subseteq M$ such that
$p\in \Sigma$. Taking any open neighborhood $W\subseteq \Sigma$ of $p$ 
that is diffeomorphic to $\bbR^{m-1}$, the Cauchy development $D(W)\subseteq M$ 
contains $p$ and belongs to $\mathcal{V}_M^{\mathrm{cc}}$ 
(in particular, $D(W)$ is diffeomorphic to $\bbR \times W$ because $W$ is one of its Cauchy surfaces). 
See \cite[Definition 14.35 and Lemma 14.43]{ONeill} for the relevant
statements from Lorentzian geometry that we used in this argument.}
We denote the corresponding presheaf of \v{C}ech groupoids by 
$\widecheck{M}{}^{\mathrm{cc}}:= \widecheck{\mathcal{V}_M^{\mathrm{cc}}}$
and note that by the same arguments as in Appendix \ref{app:cofibrep} (see in particular 
Proposition \ref{propo:cofibcover} and Corollary \ref{cor:functorialcofibrepMan})
this defines a functorial cofibrant replacement in $\HH$
of objects in $\Loc_m$. The corresponding functor and
natural transformation will be denoted by
\begin{flalign}
\widecheck{(-)}{}^{\mathrm{cc}} \,:\, \Loc_m\longrightarrow \HH\quad,\qquad q^{\mathrm{cc}} \,:\,\widecheck{(-)}{}^{\mathrm{cc}} \longrightarrow \underline{(-)}~.
\end{flalign}
When working with objects $M$ in $\Loc_m$,
we may use these causally convex cofibrant replacements to 
compute the mapping stacks and their concretifications 
presented in Section \ref{sec:gaugefields} up to a weak equivalence in $\HH$.
This is a consequence of
\begin{propo}\label{propo:cccovervscover}
For all objects $M$ in $\Loc_m$, 
there exists a natural weak equivalence
$\widecheck{M}{}^{\mathrm{cc}} \to \widecheck{M}$ in $\HH$
from the causally convex cofibrant replacement to the one of 
Corollary \ref{cor:functorialcofibrepMan}.
\end{propo}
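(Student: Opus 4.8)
The plan is to compare the two cofibrant replacements $\widecheck{M}{}^{\mathrm{cc}}$ and $\widecheck{M}$ directly, using that both are \v{C}ech-style presheaves of groupoids built from open covers of $M$. Concretely, $\widecheck{M} = \widecheck{\mathcal{V}_M}$ is built (in Appendix~\ref{app:cofibrep}) from the cover $\mathcal{V}_M$ of \emph{all} open subsets of $M$ diffeomorphic to $\bbR^{\dim M}$, whereas $\widecheck{M}{}^{\mathrm{cc}} = \widecheck{\mathcal{V}_M^{\mathrm{cc}}}$ is built from the sub-cover $\mathcal{V}_M^{\mathrm{cc}} \subseteq \mathcal{V}_M$ of those which are moreover causally convex. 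Since $\mathcal{V}_M^{\mathrm{cc}}$ is a refinement of $\mathcal{V}_M$ (indeed a literal subset closed under the relevant intersections, because an intersection of causally convex subsets is causally convex and, when non-empty and an object of $\CC$, still diffeomorphic to $\bbR^m$ by the global hyperbolicity argument recalled before \eqref{eqn:VMcc}), the inclusion of index sets induces a canonical $\HH$-morphism $\widecheck{M}{}^{\mathrm{cc}} \to \widecheck{M}$. This morphism is natural in $M \in \Loc_m$ because both constructions are functorial (cf.\ Corollary~\ref{cor:functorialcofibrepMan} and the discussion preceding the proposition), and a causal embedding $f : M \to M'$ sends causally convex subsets to causally convex subsets.

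Next I would show this morphism is a weak equivalence in $\HH$, i.e.\ (by Theorem~\ref{theo:localH}(i)) that it induces isomorphisms on the associated sheaves of homotopy groups. The cleanest route: both $\widecheck{M}$ and $\widecheck{M}{}^{\mathrm{cc}}$ are cofibrant replacements of $\underline{M}$, meaning there are acyclic fibrations $q : \widecheck{M} \to \underline{M}$ and $q^{\mathrm{cc}} : \widecheck{M}{}^{\mathrm{cc}} \to \underline{M}$ in $\HH$ (this is exactly what "cofibrant replacement" delivers, as established by the Appendix~\ref{app:cofibrep} machinery applied to the cover $\mathcal{V}_M^{\mathrm{cc}}$, which is a good open cover's worth of data in the sense needed there). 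By naturality of the comparison — i.e.\ the triangle with $\widecheck{M}{}^{\mathrm{cc}} \to \widecheck{M}$, $q^{\mathrm{cc}}$ and $q$ commutes, which holds by construction since both legs are the "evaluate a family of local data at its underlying map to $M$" morphism — the two-out-of-three property for weak equivalences forces $\widecheck{M}{}^{\mathrm{cc}} \to \widecheck{M}$ to be a weak equivalence, since $q$ and $q^{\mathrm{cc}}$ both are.

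The main point to nail down — and the step I expect to be the real content — is that $q^{\mathrm{cc}} : \widecheck{M}{}^{\mathrm{cc}} \to \underline{M}$ is indeed a weak equivalence (equivalently, that $\widecheck{\mathcal{V}_M^{\mathrm{cc}}}$ is a legitimate cofibrant replacement of $\underline{M}$). The cofibrancy half follows verbatim from Proposition~\ref{propo:cofibcover} in Appendix~\ref{app:cofibrep}, which only uses that one has a cover by objects of $\CC$ with intersections in $\CC$; the only thing to check is that $\mathcal{V}_M^{\mathrm{cc}}$ really is such a cover, which is the footnoted Cauchy-development argument (every $p\in M$ lies in $D(W)$ for a suitable chart $W$ of a Cauchy surface through $p$, and $D(W)\cong \bbR\times W\cong\bbR^m$ is causally convex). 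The weak-equivalence half is the statement that the \v{C}ech presheaf of groupoids of this particular cover still sees $\underline{M}$ correctly on sheaves of homotopy groups; here one invokes the same computation as in Appendix~\ref{app:cofibrep} (or, on the nose, the general fact that the \v{C}ech nerve of any open cover of $M$ models $\underline{M}$ after sheafification, since the relevant colimit is computed locally and $\mathcal{V}_M^{\mathrm{cc}}$ is genuinely a cover). I would phrase the proof as: "By Proposition~\ref{propo:cofibcover} applied to the cover \eqref{eqn:VMcc}, $q^{\mathrm{cc}}$ is an acyclic fibration; the inclusion $\mathcal{V}_M^{\mathrm{cc}}\hookrightarrow\mathcal{V}_M$ induces $\widecheck{M}{}^{\mathrm{cc}}\to\widecheck{M}$ over $\underline{M}$; two-out-of-three finishes it; naturality in $\Loc_m$ is immediate from functoriality and from causal embeddings preserving causal convexity."
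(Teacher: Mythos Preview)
Your proposal is correct and follows essentially the same route as the paper: construct the comparison map from the sub-cover inclusion $\mathcal{V}_M^{\mathrm{cc}} \subseteq \mathcal{V}_M$, observe that it fits into a commutative triangle over $\underline{M}$ with the two canonical maps $q_M^{\mathrm{cc}}$ and $q_M$, and conclude by 2-out-of-3. Two minor points: the maps $q_M$ and $q_M^{\mathrm{cc}}$ are shown in Appendix~\ref{app:cofibrep} to be weak equivalences, not acyclic fibrations (the factorization there is cofibration--weak equivalence, not cofibration--acyclic fibration), and your worry about intersections lying in $\CC$ is unnecessary since Proposition~\ref{propo:cofibcover} explicitly does not require the cover to be good.
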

\begin{proof}
Notice that $\mathcal{V}_M^{\mathrm{cc}}$ given in \eqref{eqn:VMcc}
is a sub-cover of the cover 
\begin{flalign}
\mathcal{V}_M= \big\{V\subseteq M \,:\, V\text{ open and diffeomorphic to }\bbR^m  \big\}
\end{flalign}
used in Appendix \ref{app:cofibrep}. Hence, there exists a natural
$\HH$-morphism $\widecheck{M}{}^{\mathrm{cc}} = \widecheck{\mathcal{V}_M^{\mathrm{cc}}} \to 
\widecheck{\mathcal{V}_M} = \widecheck{M}$ between the associated presheaves
of \v{C}ech groupoids. Explicitly, the stage at $U$ in $\CC$ is specified by
\begin{flalign}
 \Big(\xymatrix{
M & \ar[l]_-{\rho_V}V & \ar[l]_-{\nu}U
}\Big) ~~\longmapsto~~\Big(\xymatrix{
M & \ar[l]_-{\rho_V}V & \ar[l]_-{\nu}U
}\Big)~,
\end{flalign}
where $V\subseteq M$ runs over all causally convex open subsets 
diffeomorphic to $\bbR^m$.
The commutative diagram
\begin{flalign}
\xymatrix{
\ar[rd]_-{q_M^{\mathrm{cc}}}\widecheck{M}{}^{\mathrm{cc}} \ar[rr]^-{}&& \widecheck{M}\ar[dl]^-{q_M}\\
&\underline{M}&
}
\end{flalign}
in $\HH$ and the fact that both $q_M$ and $q_M^\mathrm{cc}$ are weak equivalences
implies via the 2-out-of-3 property of weak equivalences 
that $\widecheck{M}{}^{\mathrm{cc}} \to \widecheck{M}$ is a weak equivalence too.
\end{proof}
\begin{cor}\label{cor:cccofibrep}
Let $Y$ be a stack in $\HH$ and $M$ an object in $\Loc_m$.
Then the natural weak equivalence 
$\widecheck{M}{}^{\mathrm{cc}} \to \widecheck{M}$
of Proposition \ref{propo:cccovervscover} induces a
natural weak equivalence $Y^{\widecheck{M}} \to Y^{\widecheck{M}{}^{\mathrm{cc}}}$
of mapping stacks. Hence, up to natural 
weak equivalence, all mapping stacks and their concretifications in Section \ref{sec:gaugefields} 
can be computed using the cofibrant replacement $\widecheck{M}{}^{\mathrm{cc}}$ instead of $\widecheck{M}$.
\end{cor}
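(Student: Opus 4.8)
The plan is to deduce the first assertion from the fact, established in Appendix \ref{app:monoidal}, that $\HH$ with the local model structure is a monoidal model category, which makes the internal hom homotopically well behaved. Concretely, for a stack $Y$ — equivalently a fibrant object of $\HH$ — I would apply the (adjoint form of the) pushout-product axiom to the fibration $Y \to \{\ast\}$ to conclude that the contravariant functor $Y^{(-)} : \HH^\op \to \HH$ sends cofibrations to fibrations and acyclic cofibrations to acyclic fibrations; by Ken Brown's lemma \cite{Hovey} it therefore preserves weak equivalences between cofibrant objects. Both $\widecheck{M}$ (Corollary \ref{cor:functorialcofibrepMan}) and $\widecheck{M}{}^{\mathrm{cc}}$ are cofibrant in $\HH$, being presheaves of \v{C}ech groupoids associated to open covers as in Appendix \ref{app:cofibrep} and its analogue for $\Loc_m$ recalled before Proposition \ref{propo:cccovervscover}, and the $\HH$-morphism $\widecheck{M}{}^{\mathrm{cc}} \to \widecheck{M}$ of Proposition \ref{propo:cccovervscover} is a weak equivalence between them. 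Hence $Y^{\widecheck{M}} \to Y^{\widecheck{M}{}^{\mathrm{cc}}}$ is a weak equivalence in $\HH$; source and target are stacks by Proposition \ref{propo:derivedhoms}(iii), and the morphism is natural in $M$ since it is obtained by applying $Y^{(-)}$ to (a component of) the natural transformation $\widecheck{(-)}{}^{\mathrm{cc}} \to \widecheck{(-)}$ of functors $\Loc_m \to \HH$.

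For the second assertion I would propagate this weak equivalence through the chain of constructions of Section \ref{sec:gaugefields}, all of which are homotopy invariant. First, $(-)^{\widecheck{M}}$ and $\sharp$ are right Quillen functors, hence preserve weak equivalences between fibrant objects, and all the relevant mapping stacks are fibrant, so replacing $\widecheck{M}$ by $\widecheck{M}{}^{\mathrm{cc}}$ yields weakly equivalent objects $\sharp({\BGcon}^{\widecheck{M}})$, $\sharp({\BG_{\Omega^p_{\ad}}}^{\widecheck{M}})$, $\BG^{\widecheck{M}}$, etc. Next, the homotopy fiber products in the pullback diagrams such as \eqref{eqn:pullbackconcret} preserve objectwise weak equivalences of pullback diagrams (Proposition \ref{propo:homfibprod}), so the bases $K$ of the over-categories used in the concretification change only by a weak equivalence. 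Finally, the $1$-image $\Imm_1$ defining $\GCon(M)$ and $\bfOmega^p(M,\ad(G))$ is a fibrant replacement in the $(-1)$-truncated model structure on $\HH/K$ (Appendix \ref{app:truncation}), hence homotopy invariant for a fixed base, while a weak equivalence of bases is handled by the base-change statement of Appendix \ref{app:truncation} (cf.\ Proposition \ref{propo:basechange}); combining these with the usual $2$-out-of-$3$ argument yields natural weak equivalences between the $\widecheck{M}$-based and the $\widecheck{M}{}^{\mathrm{cc}}$-based models of $\GCon(M)$, $\bfOmega^p(M,\ad(G))$ and of the curvature morphisms $\FF_M$.

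The genuinely delicate step is the homotopy invariance of the $1$-image under a simultaneous weak equivalence of the morphism and of its target; this is why it has to rest on the explicit analysis of $(-1)$-truncations in Appendix \ref{app:truncation} together with Proposition \ref{propo:basechange}, rather than on purely formal model-categorical nonsense. The remaining steps — the monoidal model structure on $\HH$, Ken Brown's lemma, and the right Quillen property of $(-)^{\widecheck{M}}$ and $\sharp$ — are routine.
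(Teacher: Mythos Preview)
Your argument is correct and matches the approach the paper implicitly relies on: the corollary is stated without proof precisely because it follows from the monoidal model structure on $\HH$ (Appendix \ref{app:monoidal}) together with the cofibrancy of both $\widecheck{M}$ and $\widecheck{M}{}^{\mathrm{cc}}$, exactly as you spell out via Ken Brown's lemma. Your treatment of the second assertion is more thorough than anything the paper writes down, but it is the intended reasoning, and your identification of Proposition \ref{propo:basechange} as the tool handling the change of base for the $1$-image is on point.
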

\begin{rem}\label{rem:cccofibrep}
In practice, the statement of Corollary \ref{cor:cccofibrep} means that
when working on an object $M$ in $\Loc_m$, all open subsets 
$V,V^\prime,V^{\prime\prime}\subseteq M$ diffeomorphic to $\bbR^m$ 
entering our explicit models for mapping stacks and their concretifications
(see e.g.\ Proposition \ref{propo:GCon}) may be assumed to be also causally convex.
We shall use this weakly equivalent description from now on.
\end{rem}

\subsection{\label{subsec:YMmap}Yang-Mills morphism}
Let us first fix some notations.
Let $V$ be an object in $\Loc_m$ that is diffeomorphic (as a manifold) to $\bbR^m$.
Consider the vector space $\Omega^p(V)$
of $p$-forms on $V$. Because $V$ is in particular an oriented Lorentzian manifold,
we may introduce the Hodge operator $\ast : \Omega^p(V) \to \Omega^{m-p}(V)$.
Using also the de Rham differential $\dd : \Omega^p(V)\to \Omega^{p+1}(V)$,
we may define the codifferential $\delta := (-1)^{m\,(p+1)} \, \ast\dd\ast : \Omega^p(V) \to \Omega^{p-1}(V)$.
These operations extend to the vector space $\Omega^p(V,\g)$ of $\g$-valued $p$-forms.
Given $A\in \Omega^1(V,\g)$, we also define
\begin{flalign}
\dd_A^{} \,:\, \Omega^p(V,\g)\longrightarrow \Omega^{p+1}(V,\g)~,~~
\omega \longmapsto \dd_A^{}\omega := \dd \omega + A\wedge\omega - (-1)^p\,\omega\wedge A~,
\end{flalign}
and
\begin{flalign}
\delta_{A}^{} :=  (-1)^{m\,(p+1)} \, \ast\dd_A^{} \ast  \,:\, \Omega^p(V,\g)\longrightarrow \Omega^{p-1}(V,\g)~.
\end{flalign}
Interpreting $A$ as a gauge field, the Yang-Mills equation on $V$ reads as $\delta_A^{} F(A) =0$, where
$F(A) = \dd A + A\wedge A$ is the curvature of $A$.
\sk

Given also an object $U$ in $\CC$, the Hodge operator, the de Rham differential and the codifferential on $V$
may be extended vertically along $V\times U\to U$ to the vector space $\Omega^{p,0}(V\times U)$ of vertical 
$p$-forms on $V\times U\to U$.
We denote the vertical Hodge operator by $\ast^\ver : \Omega^{p,0}(V\times U) \to \Omega^{m-p,0}(V\times U)$,
the vertical de Rham differential by $\dd^\ver : \Omega^{p,0}(V\times U) \to \Omega^{p+1,0}(V\times U)$
and the vertical codifferential by $\delta^\ver : \Omega^{p,0}(V\times U) \to \Omega^{p-1,0}(V\times U)$.
Of course, these operations extend to the vector space $\Omega^{p,0}(V\times U,\g)$ of vertical 
$\g$-valued $p$-forms on $V\times U\to U$.
Given $A\in \Omega^{1,0}(V\times U,\g)$, we also define
\begin{flalign}
\nn \dd_A^{\ver} \,:\, \Omega^{p,0}(V\times U,\g) &\longrightarrow \Omega^{p+1,0}(V\times U,\g)~,~\\
\omega & \longmapsto \dd_A^{\ver}\omega := \dd^\ver \omega + A\wedge\omega - (-1)^p\,\omega\wedge A~,
\end{flalign}
and
\begin{flalign}
\delta_{A}^{\ver} :=  (-1)^{m\,(p+1)} \, \ast^\ver \dd_A^{\ver} \ast^\ver  \,:\, \Omega^{p,0}(V\times U,\g)
\longrightarrow \Omega^{p-1,0}(V\times U,\g)~.
\end{flalign}
Interpreting $A$ as a smoothly $U$-parametrized
gauge field, the vertical
Yang-Mills equation on $V\times U\to U$ reads as $\delta_A^{\ver} F^\ver(A) =0$,
where $F^\ver(A) = \dd^\ver A + A\wedge A$ is the vertical curvature of $A$, see also Section \ref{subsec:curvature}.
\sk

Let now $M$ be any object in $\Loc_m$. The gauge fields on $M$ are described by
the stack $\GCon(M)$. In the following we shall always use
the weakly equivalent explicit description given by 
Proposition \ref{propo:GCon} and Corollary \ref{cor:cccofibrep}. (Recall also Remark \ref{rem:cccofibrep}
for what this means in practice.)
We define the Yang-Mills $\HH$-morphism 
\begin{flalign}
\YM_M \, :\,  \GCon(M) ~  \longrightarrow~\bfOmega^1(M,\ad(G))
\end{flalign}
in terms of the vertical Yang-Mills operators $\delta_A^{\ver} F^\ver(A)$
mimicking the definition of the curvature $\HH$-morphism $\FF_M:
\GCon(M) \to \bfOmega^2(M,\ad(G))$ in terms of the vertical curvatures $F^\ver(A)$, 
cf.\ \eqref{eqn:curvaturemorph}.
Explicitly, at stage $U$ in $\CC$, the associated
groupoid morphism $\YM_M :  \GCon(M)(U) \to \bfOmega^1(M,\ad(G))(U)$ is defined as
\begin{flalign}
\nn \YM_M ~:~~\big(\AAA,\PPP\big) = \big(\{A_V\} , \{g_{VV^\prime}\}\big)~&\longmapsto~ \big(\{\delta_{A_V}^\ver F^{\ver}(A_V)\} , \{g_{VV^\prime}\}\big)~,\\
\big(\hhh : (\AAA,\PPP)\to(\AAA^\prime,\PPP^\prime)\big) ~&\longmapsto ~\big( \hhh : \YM_M(\AAA,\PPP)\to \YM_M(\AAA^\prime,\PPP^\prime)\big)~.\label{eqn:YMeqnM}
\end{flalign}
To confirm that this indeed defines a groupoid morphism observe that
$\delta_{A\ra^\ver g}^\ver F^{\ver}(A\ra^\ver g) = \ad_g(\delta^\ver_A F^\ver (A))$
under gauge transformations. Naturality with respect to a change of stage 
under any $\CC$-morphism $\ell : U\to U^\prime$ is clear because, loosely speaking, 
all vertical operations do not involve the objects $U$ in $\CC$ by definition.
Finally, it is easy to confirm that the Yang-Mills $\HH$-morphisms $\YM_M$ are the components
of a natural transformation
\begin{flalign}
\YM \,:\, \GCon ~\longrightarrow~ \bfOmega^1(-,\ad(G))
\end{flalign}
between the functors $\GCon : \Loc_m^{\op} \to \HH$
and $\bfOmega^1(-,\ad(G)) : \Loc_m^\op \to \HH$. 
(This uses the fact that morphisms $f: M\to M^\prime$ 
in $\Loc_m$ preserve causally convex open subsets 
and intertwine the Hodge operators. The same holds true for 
their restrictions $f_V : V\to f(V)$ to all causally convex open subsets $V\subseteq M$.)

\subsection{\label{subsec:YMsol}Yang-Mills stack}
A naive way to define the stack of solutions $\GSol(M)$ to the Yang-Mills
equation on $M$ is as follows: For any object $U$ in $\CC$,
define $\GSol(M)(U)$ to be the full sub-groupoid
of $\GCon(M)(U)$ consisting of all objects 
$(\AAA,\PPP) = (\{A_V\},\{g_{VV^\prime}\})$ satisfying
\begin{flalign}\label{eqn:naiveYM}
\YM_M\big(\AAA,\PPP\big) = \big(\{\delta_{A_V}^\ver F^{\ver}(A_V)\} , \{g_{VV^\prime}\}\big)=\big(\{0 \}, \{g_{VV^\prime}\}\big) ~,
\end{flalign}
where $\YM_M$ is the Yang-Mills morphism  \eqref{eqn:YMeqnM} and $0\in \Omega^{1,0}(V\times U,\g)$
are the vanishing vertical $1$-forms on $V\times U\to U$. The groupoid morphism
$\GCon(M)(\ell) : \GCon(M)(U^\prime)\to \GCon(M)(U) $
associated to a $\CC$-morphism $\ell : U \to U^\prime$
clearly induces a groupoid morphism $\GSol(M)(\ell) : \GSol(M)(U^\prime)\to \GSol(M)(U)$,
hence $\GSol(M) : \CC^\op\to \Grpd$ is a functor, i.e.\ an object in $\HH$.
The problem with this naive construction is that it is not clear whether it is homotopically meaningful,
i.e.\ whether it  preserves weak equivalences in $\HH$.
\sk

To ensure that the stack of solutions to the Yang-Mills equation 
is a homotopically meaningful concept,
we will propose an abstract definition.
Recall the stack $\GBun(M)$ of principal $G$-bundles on $M$ from Definition \ref{defi:GBun}.
For any $p\in\bbZ_{\geq0}$, we introduce the $\HH$-morphism
\begin{flalign}
\mathbf{0}_M \,:\, \GBun(M) ~\longrightarrow~ \bfOmega^p(M,\ad(G))~
\end{flalign}
whose stages $\mathbf{0}_M: \GBun(M)(U) \to \bfOmega^p(M,\ad(G))(U)$ 
are the groupoid morphisms
\begin{flalign}
\nn \mathbf{0}_M ~:~~ \PPP=\{g_{VV^\prime}\} ~&\longmapsto~\big(\{0\}, \{g_{VV^\prime}\}\big)~,\\
\big(\hhh : \PPP\to \PPP^\prime\big) ~&\longmapsto~ \big(\hhh : \mathbf{0}_M(\PPP)\to \mathbf{0}_M(\PPP^\prime)\big)~.\label{eqn:0mstagewise}
\end{flalign}
(Similarly to the curvature $\HH$-morphism $\FF_M : \GCon(M) \to \bfOmega^2(M,\ad(G))$
in Section \ref{subsec:curvature}, the $\HH$-morphism $\mathbf{0}_M$
may be obtained more abstractly by inducing the canonical $\HH$-morphism
$0 : \BG\to \BG_{\Omega^p_{\ad}}$ between the classifying stacks
to their corresponding mapping stacks and concretifications.)
In words, the $\HH$-morphism $\mathbf{0}_M$ 
assigns to a smoothly $U$-parametrized principal $G$-bundle $\PPP$ 
the vanishing smoothly $U$-parametrized $p$-form $\mathbf{0}_M(\PPP)$ with values in the associated adjoint bundle.
It is clear that $\mathbf{0}_M$ are the components of a natural transformation 
$\mathbf{0} : \GBun \to \bfOmega^p(-,\ad(G))$
between functors from $\Loc_m^\op$ to $\HH$. 
\begin{defi}\label{defi:solstack}
Let $M$ be an object in $\Loc_m$.  We define the 
{\em Yang-Mills stack} on $M$ as 
the homotopy fiber product 
\begin{flalign}
\GSol(M) ~:= ~\GCon(M) ~\times^h_{\bfOmega^1(M,\ad(G))} ~ \GBun(M)
\end{flalign}
of the pullback diagram
\begin{flalign}\label{eqn:solutionpullback}
\xymatrix{
\GCon(M) \ar[rr]^-{\YM_M}~~&&~~\bfOmega^1(M,\ad(G))~~&&~~\ar[ll]_-{\mathbf{0}_M}\GBun(M)
}
\end{flalign}
in $\HH$. Notice that the pullback diagram is natural in $M$, 
hence this construction defines a functor $\GSol : \Loc_m^\op \to H$. 
Furthermore, on account of Proposition \ref{propo:homfibprod} 
and recalling that $\GCon(M)$, $\bfOmega^1(M,\ad(G))$ and $\GBun(M)$ are stacks, 
$\GSol(M)$ is a stack too.
\end{defi}
\begin{rem}
Using Proposition \ref{propo:homfibprod}, we may explicitly compute $\GSol(M)$.
Due to reasons explained below, this computation can be simplified considerably.
Nonetheless, we find it instructive to see explicitly how the homotopy fiber product enforces the
Yang-Mills equation, hence we briefly sketch the interesting part of the computation of the objects in 
$\GSol(M)$. Let $U$ be any object in $\CC$. By Proposition \ref{propo:homfibprod},
the objects of $\GSol(M)(U)$ are given by triples 
\begin{flalign}
\big((\AAA,\PPP), \widetilde{\PPP}, \mathbf{k}\big)~,
\end{flalign}
where $(\AAA,\PPP)$ is an object in $\GCon(M)(U)$,
$\widetilde{\PPP}$ is an object in $\GBun(M)(U)$
and
\begin{flalign}\label{eqn:tmpYMhom0}
\mathbf{k} \,: \, \YM_M\big(\AAA,\PPP\big)~ \longrightarrow~ \mathbf{0}_M\big(\widetilde{\PPP}\big)
\end{flalign}
is a morphism in $\bfOmega^1(M,\ad(G))(U)$. Observe that this does {\em not} enforce the 
Yang-Mills equation in the strict sense as above in our naive construction (cf.\ \eqref{eqn:naiveYM}),
but it demands that $\YM_M(\AAA,\PPP)$ is isomorphic to one of the zeros $\mathbf{0}_M(\widetilde{\PPP})$
in $\bfOmega^1(M,\ad(G))(U)$. It is important to notice that every morphism in
$\bfOmega^1(M,\ad(G))(U)$ with target given by $\mathbf{0}_M(\widetilde{\PPP})$
necessarily has to originate from an object of the form $\mathbf{0}_M(\PPP^\prime)$,
because the vanishing vertical $1$-forms $0\in\Omega^{1,0}(V\times U,\g)$ 
are invariant under the adjoint action $\ad$ of gauge transformations.
As a consequence, the fact that the morphism \eqref{eqn:tmpYMhom0} exists implies already the strict condition
that $\delta^\ver_{A_V} F^\ver(A_V)=0$ for all $V$. Summing up, we observe that
even though a priori the homotopy fiber product enforces a weaker version of the Yang-Mills equation
(i.e.\ up to isomorphism), a specific feature of the $\HH$-morphism $\mathbf{0}_M$
turns this weaker version into a strict equality similar to \eqref{eqn:naiveYM}.
Below we shall make this statement precise by proving that $\mathbf{0}_M$ is a fibration
in $\HH$, hence the homotopy fiber product may be computed (up to weak equivalence) 
by the ordinary fiber product. This will eventually show that  $\GSol(M)$
is  weakly equivalent to our naive solution stack discussed at 
the beginning of this subsection.
\end{rem}

As already mentioned in the remark above, there exists a weakly equivalent simplified description
of the Yang-Mills stack $\GSol(M)$ given in Definition \ref{defi:solstack}. It relies on the following
observation.
\begin{lem}
For every $p\in\bbZ_{\geq0}$, the $\HH$-morphism 
$\mathbf{0}_M: \GBun(M) \to \bfOmega^p(M,\ad(G))$ is a fibration
in the local model structure on $\HH$.
\end{lem}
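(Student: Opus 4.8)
The plan is to reduce the claim to a stage-wise statement in $\Grpd$ and then to exploit the invariance of the vanishing $p$-forms under the adjoint action of gauge transformations. First I would record that both $\GBun(M)$ and $\bfOmega^p(M,\ad(G))$ are stacks: the former because $\GBun(M) = \BG^{\widecheck{M}}$ with $\BG$ a stack and $\widecheck{M}$ cofibrant (cf.\ Proposition \ref{propo:derivedhoms}~(iii)), the latter by the argument indicated in Remark \ref{rem:GConisstack} applied to the explicit model described above. Hence, by Lemma \ref{lem:fibrationlocfibrant}, it suffices to prove that $\mathbf{0}_M$ is a fibration in the \emph{global} model structure on $\HH$, i.e.\ that for every object $U$ in $\CC$ the stage $\mathbf{0}_M : \GBun(M)(U) \to \bfOmega^p(M,\ad(G))(U)$ is a fibration of groupoids in the sense of Theorem \ref{theo:groupoids}~(ii).

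To verify this, fix an object $U$ in $\CC$, an object $\PPP = \{g_{VV^\prime}\}$ of $\GBun(M)(U)$ and a morphism $\hhh = \{h_V\}$ in $\bfOmega^p(M,\ad(G))(U)$ whose source is $\mathbf{0}_M(\PPP) = (\{0\},\{g_{VV^\prime}\})$ and whose target is some object $(\bfomega^\prime,\PPP^\prime)$. By the defining relations \eqref{eqn:formcomp2} of morphisms in $\bfOmega^p(M,\ad(G))(U)$, the target necessarily satisfies $\omega_V^\prime = \ad_{h_V}(0) = 0$ for all $V$ together with $g^\prime_{VV^\prime} = h_V^{-1}\vert^{}_{(V\cap V^\prime)\times U}\, g_{VV^\prime}\, h_{V^\prime}\vert^{}_{(V\cap V^\prime)\times U}$. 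In particular the target is of the form $\mathbf{0}_M(\PPP^\prime)$ with $\PPP^\prime = \{g^\prime_{VV^\prime}\}$, and by \eqref{eqn:formcomp} this $\PPP^\prime$ is a genuine object of $\GBun(M)(U)$. The very same family $\hhh = \{h_V\}$ then defines a morphism $\PPP \to \PPP^\prime$ in $\GBun(M)(U)$, and by \eqref{eqn:0mstagewise} we have $\mathbf{0}_M(\hhh) = \hhh$, which is precisely the prescribed morphism. This produces the required lift, so $\mathbf{0}_M$ is a stage-wise fibration of groupoids and therefore, by Lemma \ref{lem:fibrationlocfibrant}, a fibration in the local model structure on $\HH$.

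The only substantive point is the identity $\ad_{h_V}(0) = 0$: it is what guarantees that every morphism in $\bfOmega^p(M,\ad(G))(U)$ emanating from the image of $\mathbf{0}_M$ again lands in the image of $\mathbf{0}_M$, so that the lift carried by the data $\{h_V\}$ already exists in $\GBun(M)(U)$. Apart from this observation the argument is pure bookkeeping, matching the explicit descriptions of $\GBun(M)$ (Definition \ref{defi:GBun}) and $\bfOmega^p(M,\ad(G))$; I do not anticipate any genuine obstacle, and the whole argument is manifestly uniform in $p \in \bbZ_{\geq 0}$.
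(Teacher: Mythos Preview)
Your proof is correct and follows essentially the same approach as the paper: both reduce via Lemma~\ref{lem:fibrationlocfibrant} to a stage-wise fibration check in $\Grpd$, and both hinge on the observation $\ad_{h_V}(0)=0$ to see that the image of $\mathbf{0}_M$ is closed under isomorphism. The paper phrases the stage-wise step slightly more abstractly (noting that $\mathbf{0}_M$ is fully faithful with isomorphism-closed image, which in general implies the groupoid fibration property), whereas you verify the lifting property directly; the content is the same.
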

\begin{proof}
Because $\GBun(M)$ and (the simplified description of) 
$\bfOmega^p(M,\ad(G))$ are stacks, by Lemma \ref{lem:fibrationlocfibrant} 
the thesis follows if we show that $\mathbf{0}_M: \GBun(M) \to \bfOmega^p(M,\ad(G))$
is a stage-wise fibration in $\Grpd$ (recall Theorem \ref{theo:groupoids}). 
Using the explicit expressions for the stages \eqref{eqn:0mstagewise},
one immediately realizes that this is indeed the case 
because the functor $\mathbf{0}_M: \GBun(M)(U) \to \bfOmega^p(M,\ad(G))(U)$ is fully faithful  
and any object of the groupoid $\bfOmega^p(M,\ad(G))(U)$ isomorphic 
to one in the image of $\mathbf{0}_M: \GBun(M)(U) \to \bfOmega^p(M,\ad(G))(U)$ 
has to lie in the image too. 
\end{proof}

As a consequence of this lemma,
the homotopy fiber product in Definition \ref{defi:solstack}
is weakly equivalent to the ordinary fiber product.
Therefore, similarly to Proposition \ref{propo:GCon}, we obtain 
a weakly equivalent simplified description of $\GSol(M)$,
which agrees with our naive solution stack from the beginning of this subsection. 
Summing up, we obtained
\begin{propo}\label{propo:GSol}
Up to weak equivalence in $\HH$, the Yang-Mills stack $\GSol(M)$
defined in Definition \ref{defi:solstack} has the following explicit description:
For all objects $U$ in $\CC$, the groupoid $\GSol(M)(U)$ is the full sub-groupoid
of $\GCon(M)(U)$  (cf.\ Proposition \ref{propo:GCon}) consisting of all objects $(\AAA,\PPP)=(\{A_V\},\{g_{VV^\prime}\})$ 
satisfying the vertical Yang-Mills equation
\begin{flalign}
\YM_M \big(\AAA,\PPP\big) = \mathbf{0}_M(\PPP) \qquad \Longleftrightarrow\qquad \delta_{A_V}^\ver F^{\ver}(A_V)=0~,
\end{flalign}
for all causally convex open subsets $V\subseteq M$ diffeomorphic to $\bbR^m$.
For all morphisms $\ell : U\to U^\prime$ in $\CC$,  the groupoid morphism 
$\GSol(M)(\ell) : \GSol(M)(U^\prime) \to \GSol(M)(U)$ is 
the one induced by $\GCon(M)(\ell) : \GCon(M)(U^\prime) \to \GCon(M)(U)$ (cf.\ Proposition \ref{propo:GCon}). 
\end{propo}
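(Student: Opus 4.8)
The plan is to combine the fibration lemma just established with right properness of $\HH$ in order to replace the homotopy fiber product of Definition \ref{defi:solstack} by an ordinary pullback, and then to identify that pullback explicitly with the naive solution stack described at the beginning of this subsection.

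First I would invoke \cite[Corollary 13.3.8]{Hirschhorn}: since $\mathbf{0}_M : \GBun(M) \to \bfOmega^1(M,\ad(G))$ is a fibration in the local model structure (by the preceding lemma) and $\HH$ with the local model structure is right proper (\cite[Corollary 5.8]{Hollander}), the canonical $\HH$-morphism from the ordinary fiber product $\GCon(M) \times_{\bfOmega^1(M,\ad(G))} \GBun(M)$ to the homotopy fiber product $\GSol(M)$ is a weak equivalence in $\HH$. This is the same device already used in Section \ref{subsec:concrete}. Hence it suffices to compute the ordinary pullback and check that it coincides with the explicit presheaf of groupoids in the statement.

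Second I would unravel the ordinary pullback stage-wise. Using the explicit models of Proposition \ref{propo:GCon} for $\GCon(M)$, of Section \ref{subsec:curvature} for $\bfOmega^1(M,\ad(G))$ and of Definition \ref{defi:GBun} for $\GBun(M)$, together with the stages \eqref{eqn:YMeqnM} of $\YM_M$ and \eqref{eqn:0mstagewise} of $\mathbf{0}_M$, an object of the pullback at stage $U$ is a pair $\big((\AAA,\PPP),\widetilde{\PPP}\big)$ with $(\AAA,\PPP) = (\{A_V\},\{g_{VV^\prime}\})$ in $\GCon(M)(U)$ and $\widetilde{\PPP} = \{\tilde g_{VV^\prime}\}$ in $\GBun(M)(U)$ subject to the \emph{strict} equality $\YM_M(\AAA,\PPP) = \mathbf{0}_M(\widetilde{\PPP})$ in $\bfOmega^1(M,\ad(G))(U)$. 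Since $\mathbf{0}_M(\widetilde{\PPP}) = (\{0\},\{\tilde g_{VV^\prime}\})$ and $\YM_M(\AAA,\PPP) = (\{\delta^\ver_{A_V} F^\ver(A_V)\},\{g_{VV^\prime}\})$, this equality is equivalent to $\delta^\ver_{A_V} F^\ver(A_V) = 0$ for all causally convex open $V \subseteq M$ diffeomorphic to $\bbR^m$ together with $\widetilde{\PPP} = \PPP$. A morphism of the pullback is a pair $(\hhh,\widetilde{\hhh})$ with $\YM_M(\hhh) = \mathbf{0}_M(\widetilde{\hhh})$, and since both morphisms act as the identity on the gauge-transformation families this forces $\widetilde{\hhh} = \hhh$. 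Thus the ordinary pullback is isomorphic, as a presheaf of groupoids, to the full sub-groupoid of $\GCon(M)(U)$ cut out by the vertical Yang-Mills equation, with the action on $\CC$-morphisms $\ell$ inherited from $\GCon(M)(\ell)$; this is exactly the naive solution stack. Together with the weak equivalence of the first step, the proposition follows, and naturality in $M \in \Loc_m$ is automatic because every morphism occurring above is induced from the functorial structures of Proposition \ref{propo:GCon} and Remark \ref{rem:conccurvaturenat}.

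Since the serious model-categorical input is the fibration lemma that precedes the statement, I do not expect a genuine obstacle here; the only point deserving care is that the a priori weaker ``Yang-Mills equation up to isomorphism'' imposed by a homotopy pullback really does collapse to the strict equation $\delta^\ver_{A_V} F^\ver(A_V)=0$. In this proof that collapse is packaged into the passage from the homotopy to the ordinary pullback via \cite[Corollary 13.3.8]{Hirschhorn}; if one preferred to argue directly from Proposition \ref{propo:homfibprod}, one would instead have to observe that every arrow of $\bfOmega^1(M,\ad(G))(U)$ with target $\mathbf{0}_M(\widetilde{\PPP})$ necessarily emanates from some $\mathbf{0}_M(\PPP^\prime)$, because the vanishing vertical $1$-forms are fixed by the adjoint action, and then check that the resulting extra data are redundant up to canonical isomorphism.
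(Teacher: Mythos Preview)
Your proposal is correct and follows essentially the same route as the paper: use the preceding lemma that $\mathbf{0}_M$ is a fibration, invoke right properness (via \cite[Corollary 13.3.8]{Hirschhorn}, exactly as in Section~\ref{subsec:concrete}) to replace the homotopy fiber product by the ordinary one, and then compute the ordinary pullback explicitly to recover the naive solution stack. Your stage-wise unraveling of the ordinary pullback, including the observation that the equality $\YM_M(\AAA,\PPP)=\mathbf{0}_M(\widetilde{\PPP})$ forces $\widetilde{\PPP}=\PPP$ and the vertical Yang-Mills equation, is more explicit than what the paper spells out but entirely in line with it.
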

\begin{rem}
Similarly to Remark \ref{rem:concmappinstackfunctor}, 
it is easy to see that our weakly equivalent simplified 
model for the Yang-Mills stack given in Proposition \ref{propo:GSol}
is functorial, i.e.\ we have a functor $\GSol : \Loc_m^\op \to \HH$. 
Moreover, by an explicit computation similar to the one in Remark \ref{rem:GConisstack},
one can show that the above mentioned model for $\GSol(M)$ is a stack too.
\end{rem}


\section{\label{sec:Cauchy}Stacky Cauchy problem}
In this section we introduce and discuss a  {\em stacky} version of the Yang-Mills Cauchy problem.
It turns out that well-posedness of the stacky Cauchy problem is a 
stronger statement than well-posedness of the ordinary Cauchy problem for
gauge equivalence classes of Yang-Mills fields. In particular,
the solution for each given initial datum in the stacky Cauchy problem must be unique 
up to a {\em unique} isomorphism, which is stronger than uniqueness 
of their associated gauge equivalence classes. 
To set up the stacky Cauchy problem, we first introduce
a stack $\GData(\Sigma)$ that describes initial data on a Cauchy surface $\Sigma$
for the Yang-Mills equation and an $\HH$-morphism $\data_{\Sigma} : \GSol(M)\to \GData(\Sigma)$
that assigns to Yang-Mills fields their initial data on $\Sigma$.
This will allow us to define well-posedness of the stacky Yang-Mills Cauchy problem 
using the language of model categories.
We conclude explaining that this condition is equivalent to a family of parametrized PDE problems,
which may be addressed by ordinary PDE-theoretical techniques. 

\subsection{\label{subsec:data}Initial data stack}
Let $M$ be any object in $\Loc_m$ and $\Sigma\subseteq M$ any spacelike Cauchy surface.
Recall  that $\dim(\Sigma) = \dim(M)-1 = m-1$.
In the usual approach, see e.g.\ \cite{CSYM,CBYM}, an initial datum on $\Sigma$ 
for the Yang-Mills equation on $M$ is a triple $(A^\Sigma, E,P^\Sigma)$ consisting of a 
principal $G$-bundle $P^\Sigma$ on $\Sigma$ with connection
$A^\Sigma$ and a $1$-form $E$ on $\Sigma$  with values 
in the corresponding adjoint bundle,
which satisfies the Yang-Mills constraint $\delta_{A^\Sigma}^{} E=0$.
Here $\delta_{A^\Sigma}^{}$ is the covariant
codifferential on $\Sigma$ that is obtained from the induced Hodge operator on $\Sigma$. 
As a refinement of the set of initial data used in \cite{CSYM,CBYM},
our approach allows us to introduce a stack $\GData(\Sigma)$ of initial data on $\Sigma$.
Abstractly, this stack may be obtained by the following construction:
Consider the stack of gauge fields $\GCon(\Sigma)$ on $\Sigma$
and form its tangent stack $T^{\mathrm{st}}\GCon(\Sigma)$ using similar techniques as in \cite{Hepworth}.
Then implement the Yang-Mills constraint by a homotopy fiber product similarly to Section \ref{subsec:YMsol}.
Since for our practical purposes the construction of tangent stacks in \cite{Hepworth} is too involved,
we shall not employ this abstract perspective and instead define directly the stack of initial data 
(up to weak equivalence) in an explicit form.
\begin{defi}\label{defi:GData}
Let $M$ be any object in $\Loc_m$ and $\Sigma\subseteq M$ any spacelike Cauchy surface.
The {\em stack of initial data} on $\Sigma$ is the following presheaf of groupoids
$\GData(\Sigma) : \CC^\op \to \Grpd$: 
\begin{itemize}
\item For all objects $U$ in $\CC$,
the objects of the groupoid $\GData(\Sigma)(U)$ are triples
\begin{flalign}
\big(\AAA^\Sigma , \EEE,\PPP^\Sigma\big)~,
\end{flalign} 
such that $(\AAA^\Sigma,\PPP^\Sigma)$ is an object in $\GCon(\Sigma)(U)$ and
$(\EEE,\PPP^\Sigma)$ an object in $\bfOmega^1(\Sigma,\ad(G))(U)$,
satisfying the vertical Yang-Mills constraint
\begin{flalign}
 \delta^{\ver}_{A_W^\Sigma} E_W =0~,
\end{flalign} 
for all open subsets $W\subseteq \Sigma$ diffeomorphic to $\bbR^{m-1}$.
The morphisms of the groupoid  $\GData(\Sigma)(U)$ are 
\begin{flalign}
\hhh^\Sigma  ~:~(\AAA^\Sigma,\EEE,\PPP^\Sigma)~\longrightarrow~(\AAA^{\prime\Sigma},\EEE^\prime,\PPP^{\prime\Sigma})~,
\end{flalign}
such that $\hhh^\Sigma : (\AAA^\Sigma,\PPP^\Sigma) \to (\AAA^{\prime\Sigma},\PPP^{\prime\Sigma})$
is a morphism in $\GCon(\Sigma)(U)$ and 
$\hhh^\Sigma : (\EEE,\PPP^\Sigma)\to(\EEE^\prime,\PPP^{\prime\Sigma})$
a morphism in $\bfOmega^1(\Sigma,\ad(G))(U)$.

\item For all $\CC$-morphisms $\ell : U\to U^\prime$, the groupoid morphism 
$\GData(\Sigma)(\ell) : \GData(\Sigma)(U^\prime) \to \GData(\Sigma)(U)$
is given by
\begin{flalign}
\nn\GData(\Sigma)(\ell) ~: ~ (\AAA^\Sigma,\EEE,\PPP^\Sigma) ~&\longmapsto~ (\ell^\ast\AAA^\Sigma,\ell^\ast\EEE,\ell^\ast\PPP^\Sigma) ~,\\
\hhh^\Sigma ~&\longmapsto ~ \ell^\ast\hhh^\Sigma~,
\end{flalign}
where our pullback notation $\ell^\ast$ is analogous to the one introduced in Proposition \ref{propo:GCon}.
\end{itemize}
\end{defi}

\subsection{\label{subsec:datamap}Initial data morphism}
Let $M$ be any object in $\Loc_m$ and $\Sigma\subseteq M$ any spacelike Cauchy surface.
We denote by $\iota_\Sigma : \Sigma\to M$ the embedding of the Cauchy surface
$\Sigma$ into $M$.  We further choose and fix any normalized future-directed timelike 
vector field $n$ on $M$ whose restriction to $\Sigma\subseteq M$ is 
normal to the Cauchy surface. Given any object $U$ in $\CC$,
we denote by the same symbol $n$ also the vector field on $M\times U$ 
that is obtained by extending $n$ constantly along $U$. 
The restrictions of $n$ to open subsets $V\times U$ of $M\times U$  are also denoted by 
$n$ in order to simplify our notations.
\sk

The aim of this subsection is to define the {\em initial data} $\HH$-morphism
\begin{flalign}\label{eqn:datamorphism}
\data_{\Sigma}^{} \,:\, \GSol(M) ~\longrightarrow~\GData(\Sigma)
\end{flalign}
from the Yang-Mills stack (cf.\ Proposition \ref{propo:GSol}) to 
the initial data stack (cf.\ Definition \ref{defi:GData}) whose role is
to assign to solutions of the Yang-Mills equation their initial data.
\sk

This requires some preparations. Let $W\subseteq \Sigma$ be any open subset of the Cauchy
surface that is diffeomorphic to $\bbR^{m-1}$. We denote by 
$D(W)\subseteq M$ the {\em Cauchy development} of $W$ in $M$.
By definition, see e.g.\ \cite[Definition 14.35]{ONeill},
$D(W)\subseteq M$ is the subset of points $p \in M$ such that every inextensible
causal curve in $M$ emanating from $p$ meets $W$. By \cite[Lemma 14.43]{ONeill},
we may regard the Cauchy development $D(W)$ equipped with the induced metric, 
orientation and time-orientation as an object in $\Loc_m$.
It is diffeomorphic (as a manifold) to $\bbR^m$ because $W$ defines a Cauchy surface for $D(W)$
and $W$ is by hypothesis diffeomorphic to $\bbR^{m-1}$.
We denote by $\iota_W : W \to D(W)$ the restriction of the embedding 
$\iota_\Sigma : \Sigma\to M$ to the domain $W$ and the codomain $D(W)$.
Let us now consider two open subsets $W\subseteq \Sigma$ and $W^\prime\subseteq \Sigma$
of the Cauchy surface that are both diffeomorphic to $\bbR^{m-1}$. Because $W$ and $W^\prime$
are both subsets of the same Cauchy surface, it is easy to show that $D(W\cap W^\prime) = D(W)\cap D(W^\prime)$.
The latter is again a causally convex open subset of $M$ (not necessarily diffeomorphic to $\bbR^m$).
We denote by $\iota_{W\cap W^\prime} : W\cap W^\prime  \to D(W)\cap D(W^\prime)$ 
the restriction of the embedding  $\iota_\Sigma : \Sigma\to M$ to the domain $W\cap W^\prime$ 
and the codomain $D(W)\cap D(W^\prime)$. Similar statements hold for higher intersections.
\sk

For any object $U$ in $\CC$, we define the stage $\data_{\Sigma}^{} : \GSol(M)(U) \to \GData(\Sigma)(U)$
of the $\HH$-morphism \eqref{eqn:datamorphism} as the groupoid morphism specified by the assignment 
\begin{flalign}
\nn \data_{\Sigma}^{} ~:~\quad\qquad\qquad\qquad (\AAA,\PPP) ~&\longmapsto~\Big(\iota_\Sigma^\ast\AAA , \iota_\Sigma^\ast\big( n\,\lrcorner\,\FF_M(\AAA)\big),\iota_{\Sigma}^\ast\PPP\Big)~,\\
\big(\hhh: (\AAA,\PPP) \to (\AAA^\prime,\PPP^\prime)\big) ~&\longmapsto~ \big(\iota_\Sigma^\ast\hhh: \data_{\Sigma}^{}(\AAA,\PPP) \to \data_{\Sigma}^{}(\AAA^\prime,\PPP^\prime)\big)~,\label{eqn:datamorphismstage}
\end{flalign}
where we used an intuitive compact notation. 
Explicitly, for $(\AAA,\PPP) = ( \{A_V\},\{g_{VV^\prime}\})$ and $\hhh = \{h_V\}$, 
we set
\begin{subequations}
\begin{flalign}
\iota_\Sigma^\ast\AAA &:=  \big\{(\iota_W\times \id_U)^\ast \, A_{D(W)}\big\}~,\\
 \iota_\Sigma^\ast\big( n\,\lrcorner\,\FF_M(\AAA)\big) &:=  \big\{(\iota_W\times \id_U)^\ast \, \big(n \,\lrcorner\, F^\ver(A_{D(W)})\big)\big\}~,\\
\iota_\Sigma^\ast\PPP &:= \big\{(\iota_{W\cap W^\prime}\times\id_U)^\ast\,g_{D(W) D(W^\prime)} \big\}~,\\
\iota_\Sigma^\ast\hhh &:= \big\{(\iota_W \times\id_U)^\ast\, h_{D(W)}\big\}~,
\end{flalign}
\end{subequations}
where $n \,\lrcorner\, F^\ver(A_{D(W)}) \in \Omega^{1,0}(D(W)\times U,\g)$ denotes
the contraction of the vertical curvature $F^\ver(A_{D(W)}) \in \Omega^{2,0}(D(W)\times U,\g)$ 
with our fixed normalized timelike vector field $n$. (As explained above, the vector field $n$ 
on $M$ is extended constantly to $M\times U$ and then restricted to $D(W)\times U$.)
Using the vertical Yang-Mills equations $\delta_{A_V}^\ver F^{\ver}(A_V)=0$, 
it is an elementary check that $ \data_\Sigma (\AAA,\PPP)$
satisfies the vertical Yang-Mills constraint of Definition \ref{defi:GData}. Verifying that \eqref{eqn:datamorphismstage}
indeed defines a groupoid morphism and confirming naturality in $U$ are also elementary checks.
Hence, we defined the initial data $\HH$-morphism in \eqref{eqn:datamorphism}.

\subsection{\label{subsec:cauchyexplicit}Cauchy problem}
Using the language of model categories, we can now define the concept of a
well-posed stacky Cauchy problem associated to the Yang-Mills equation.
We will then show that our definition is equivalent to a family of parametrized PDE problems, 
which generalize the ordinary Yang-Mills Cauchy problem, cf.\  \cite{CSYM, CBYM}. 
The purpose of this section is thus to establish a bridge between our model categorical perspective 
on Yang-Mills theory and the language more familiar to PDE theorists. 
\begin{defi}\label{def:stackycauchypbl}
Let $M$ be an object in $\Loc_m$ and $\Sigma\subseteq M$ a spacelike Cauchy surface. 
We say that the {\em stacky Cauchy problem for the Yang-Mills equation is well-posed} if 
the initial data $\HH$-morphism $\data_{\Sigma}^{}: \GSol(M) \to \GData(\Sigma)$, cf.\ \eqref{eqn:datamorphism}, 
is a weak equivalence.
\end{defi}

Our goal is to rephrase this abstract definition in terms of more explicit conditions. 
To begin with, observe that both $\GSol(M)$ and $\GData(\Sigma)$ are stacks, i.e.\ fibrant objects in $\HH$. 
Hence, by Theorem \ref{theo:localH},
the $\HH$-morphism $\data_{\Sigma}^{}: \GSol(M) \to \GData(\Sigma)$ 
is a weak equivalence if and only if for each object $U$ in $\CC$ 
the groupoid morphism $\data_{\Sigma}^{}: \GSol(M)(U) \to \GData(\Sigma)(U)$ 
is a weak equivalence, i.e.\ a fully faithful and essentially surjective functor. 
\sk

Recalling the explicit description in Section \ref{subsec:datamap}, we observe that
 $\data_{\Sigma}^{}: \GSol(M)(U) \to \GData(\Sigma)(U)$ is 
essentially surjective if and only if the following holds true:
For every object $(\AAA^\Sigma,\EEE,\PPP^\Sigma)$ in $\GData(\Sigma)(U)$, 
there exists an object $(\AAA,\PPP)$ in $\GSol(M)(U)$ and a morphism 
$\hhh^\Sigma: \data_{\Sigma}^{}(\AAA,\PPP) \to (\AAA^\Sigma,\EEE,\PPP^\Sigma)$ 
in $\GData(\Sigma)(U)$. Note that $(\AAA,\PPP)$ may be interpreted as a solution for 
the initial datum $(\AAA^\Sigma,\EEE,\PPP^\Sigma)$ up to the isomorphism 
$\hhh^\Sigma: \data_{\Sigma}^{}(\AAA,\PPP) \to (\AAA^\Sigma,\EEE,\PPP^\Sigma)$ of initial data.
In other words, given any smoothly $U$-parametrized initial datum 
$(\AAA^\Sigma,\EEE,\PPP^\Sigma)$ on $\Sigma$, 
essential surjectivity of $\data_{\Sigma}^{}$ is equivalent to the statement
that we can find a smoothly $U$-parametrized Yang-Mills field $(\AAA,\PPP)$ on $M$, 
whose initial datum is isomorphic to $(\AAA^\Sigma,\EEE,\PPP^\Sigma)$
by a smoothly $U$-parametrized gauge transformation 
$\hhh^\Sigma: \data_{\Sigma}^{}(\AAA,\PPP) \to (\AAA^\Sigma,\EEE,\PPP^\Sigma)$
of initial data.
\sk

We now analyze what full faithfulness of $\data_{\Sigma}^{}: \GSol(M)(U) \to \GData(\Sigma)(U)$ implies. 
Explicitly, this property means that, given any two objects 
$(\AAA,\PPP)$ and $(\AAA^\prime,\PPP^\prime)$ in $\GSol(M)(U)$
and any morphism 
$\hhh^\Sigma: \data_{\Sigma}^{}(\AAA,\PPP) \to \data_{\Sigma}^{}(\AAA^\prime,\PPP^\prime)$ 
in $\GData(\Sigma)(U)$, there exists a unique morphism $\hhh: (\AAA,\PPP) \to (\AAA^\prime,\PPP^\prime)$ 
in $\GSol(M)$ such that $\data_\Sigma(\hhh) = \hhh^\Sigma$. 
In other words, this means that every 
smoothly $U$-parametrized gauge transformation  between the initial data
of two smoothly $U$-parametrized Yang-Mills fields 
admits a unique extension from the Cauchy surface $\Sigma$ 
to the whole spacetime $M$. 
\sk

Full faithfulness of $\data_{\Sigma}^{}: \GSol(M)(U) \to \GData(\Sigma)(U)$ is equivalent
to the following more practical condition: Given any object
$(\AAA^\Sigma,\EEE,\PPP^\Sigma)$ in $\GData(\Sigma)(U)$
and any two morphisms $\hhh^\Sigma : \data_{\Sigma}^{}(\AAA,\PPP)\to (\AAA^\Sigma,\EEE,\PPP^\Sigma)$
and $\hhh^{\prime\Sigma} : \data_{\Sigma}^{}(\AAA^\prime,\PPP^\prime)\to (\AAA^\Sigma,\EEE,\PPP^\Sigma)$
in $\GData(\Sigma)(U)$,
there exists  a unique morphism $\hhh: (\AAA,\PPP)\to(\AAA^\prime,\PPP^\prime)$ in $\GSol(M)(U)$
such that $\hhh^{\prime\Sigma} \circ \data_{\Sigma}^{}(\hhh) = \hhh^{\Sigma} $.
In fact, assuming full faithfulness, the desired  morphism
$\hhh: (\AAA,\PPP)\to(\AAA^\prime,\PPP^\prime)$ in $\GSol(M)(U)$
is the one uniquely obtained from the morphism
${\hhh^{\prime\Sigma}}^{-1} \circ \hhh^\Sigma: \data_\Sigma(\AAA,\PPP) 
\to \data_\Sigma(\AAA^\prime,\PPP^\prime)$ in $\GData(\Sigma)(U)$.
Conversely, given any morphism $\hhh^\Sigma: \data_{\Sigma}^{}(\AAA,\PPP) 
\to \data_{\Sigma}^{}(\AAA^\prime,\PPP^\prime)$ in $\GData(\Sigma)(U)$, 
take $\hhh^{\prime\Sigma} : \data_{\Sigma}^{}(\AAA^\prime,\PPP^\prime)\to 
\data_{\Sigma}^{}(\AAA^\prime,\PPP^\prime)$ to be the identity. 
It follows from the condition stated above that 
there exists a unique morphism $\hhh: (\AAA,\PPP) \to (\AAA^\prime,\PPP^\prime)$ 
in $\GSol(M)(U)$ such that $\data_\Sigma(\hhh) = \hhh^\Sigma$. 
\sk

We summarize the results obtained above in the following
\begin{propo}\label{propo:cauchy}
Let $M$ be an object in $\Loc_m$ and $\Sigma\subseteq M$ a spacelike Cauchy surface. 
Then the stacky Cauchy problem for the Yang-Mills equation is well-posed, cf.\ Definition \ref{def:stackycauchypbl}, 
if and only if the following conditions hold true, for all objects $U$ in $\CC$: 
\begin{enumerate}
\item Given any object $(\AAA^\Sigma,\EEE,\PPP^\Sigma)$ in $\GData(\Sigma)(U)$, there exists 
an object $(\AAA,\PPP)$ in $\GSol(M)(U)$ together with a morphism 
$\hhh^\Sigma: \data_{\Sigma}^{}(\AAA,\PPP) \to (\AAA^\Sigma,\EEE,\PPP^\Sigma)$ in $\GData(\Sigma)(U)$.

\item Given any other object $(\AAA^\prime,\PPP^\prime)$ in $\GSol(M)(U)$ 
and morphism $\hhh^{\prime\Sigma}: \data_{\Sigma}^{}(\AAA^\prime,\PPP^\prime) 
\to (\AAA^\Sigma,\EEE,\PPP^\Sigma)$ in $\GData(\Sigma)(U)$, 
there exists a unique morphism $\hhh: (\AAA,\PPP) \to (\AAA^\prime,\PPP^\prime)$ 
in $\GSol(M)(U)$ such that diagram 
\begin{flalign}\label{eqn:cauchydiagram}
\xymatrix{
\data_\Sigma(\AAA,\PPP) \ar[rr]^-{\data_\Sigma(\hhh)} \ar[rd]_-{\hhh^\Sigma~~} && 
\data_\Sigma(\AAA^\prime,\PPP^\prime) \ar[ld]^-{~~\hhh^{\prime\Sigma}} \\
& (\AAA^\Sigma,\EEE,\PPP^\Sigma)
}
\end{flalign}
in $\GData(\Sigma)(U)$ commutes. 
\end{enumerate}
\end{propo}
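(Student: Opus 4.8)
The plan is to translate the abstract weak-equivalence condition of Definition \ref{def:stackycauchypbl} into the explicit groupoid-theoretic conditions of the Proposition, by systematically unravelling what it means for the $\HH$-morphism $\data_\Sigma^{}$ to be a weak equivalence in the local model structure. The first step is to invoke Theorem \ref{theo:localH}(v): since both $\GSol(M)$ and $\GData(\Sigma)$ are stacks (i.e.\ fibrant in the local model structure — this was established in Definition \ref{defi:solstack} and the remark following Definition \ref{defi:GData}, respectively), a morphism between them is a local weak equivalence precisely when it is a stage-wise weak equivalence in $\Grpd$, that is, when $\data_\Sigma^{}\colon\GSol(M)(U)\to\GData(\Sigma)(U)$ is fully faithful and essentially surjective for every object $U$ in $\CC$ (recall Theorem \ref{theo:groupoids}(i)).

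Next I would treat essential surjectivity and full faithfulness separately. Essential surjectivity of $\data_\Sigma^{}\colon\GSol(M)(U)\to\GData(\Sigma)(U)$ unwinds directly to condition 1: every object $(\AAA^\Sigma,\EEE,\PPP^\Sigma)$ in the target groupoid must be isomorphic to one in the image, which by the explicit description of $\data_\Sigma^{}$ in Section \ref{subsec:datamap} means exactly that there is an object $(\AAA,\PPP)$ in $\GSol(M)(U)$ together with a morphism $\hhh^\Sigma\colon\data_\Sigma^{}(\AAA,\PPP)\to(\AAA^\Sigma,\EEE,\PPP^\Sigma)$ in $\GData(\Sigma)(U)$. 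No real work is needed here beyond quoting the definition of essential surjectivity.

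The slightly more delicate part is rephrasing full faithfulness as condition 2. The direct statement of full faithfulness is: for any $(\AAA,\PPP),(\AAA^\prime,\PPP^\prime)$ in $\GSol(M)(U)$, the map $\hhh\mapsto\data_\Sigma^{}(\hhh)$ on Hom-sets is a bijection. I would show this is equivalent to the ``triangle'' formulation in condition 2 by the standard groupoid argument already sketched in the prose preceding the Proposition: given $(\AAA^\Sigma,\EEE,\PPP^\Sigma)$ and two morphisms $\hhh^\Sigma,\hhh^{\prime\Sigma}$ into it, the composite ${\hhh^{\prime\Sigma}}^{-1}\circ\hhh^\Sigma$ is a morphism $\data_\Sigma^{}(\AAA,\PPP)\to\data_\Sigma^{}(\AAA^\prime,\PPP^\prime)$, so full faithfulness supplies a unique $\hhh$ with $\data_\Sigma^{}(\hhh)={\hhh^{\prime\Sigma}}^{-1}\circ\hhh^\Sigma$, equivalently making \eqref{eqn:cauchydiagram} commute; conversely, taking $\hhh^{\prime\Sigma}$ to be an identity recovers bijectivity of $\data_\Sigma^{}$ on Hom-sets. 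One should note that in condition 2 the object $(\AAA^\Sigma,\EEE,\PPP^\Sigma)$ and morphism $\hhh^\Sigma$ are those furnished by condition 1, so the two conditions together are genuinely the conjunction ``essentially surjective $+$ fully faithful''; I would make this bookkeeping explicit.

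I do not expect a serious obstacle in this proof — it is essentially a formal unravelling. The only point requiring mild care is making sure the equivalence between ``$\data_\Sigma^{}$ bijective on all Hom-sets'' and the seemingly weaker one-sided formulation in condition 2 is argued in both directions, using that $\GData(\Sigma)(U)$ is a groupoid so that every $\hhh^{\prime\Sigma}$ is invertible. Everything else follows by direct appeal to Theorem \ref{theo:localH}(v), Theorem \ref{theo:groupoids}(i), and the explicit stage-wise descriptions of $\GSol(M)$, $\GData(\Sigma)$ and $\data_\Sigma^{}$ given in Proposition \ref{propo:GSol}, Definition \ref{defi:GData} and Section \ref{subsec:datamap}.
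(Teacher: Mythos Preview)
Your proposal is correct and follows essentially the same approach as the paper: reduce to stage-wise weak equivalences via Theorem \ref{theo:localH}(v), identify essential surjectivity with condition 1, and show full faithfulness is equivalent to condition 2 via the composite ${\hhh^{\prime\Sigma}}^{-1}\circ\hhh^\Sigma$ trick. One small citation quibble: the paper does not actually prove that $\GData(\Sigma)$ is a stack in a remark after Definition \ref{defi:GData}; it simply asserts this at the start of Section \ref{subsec:cauchyexplicit}, so you should either cite that assertion or sketch the (routine) verification yourself.
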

\begin{rem}
In more explicit words, item 1.\ of this proposition demands 
that there exists for every smoothly $U$-parametrized
initial datum $(\AAA^\Sigma,\EEE,\PPP^\Sigma)$ 
a smoothly $U$-parametrized solution $(\AAA,\PPP)$ of the Yang-Mills equation
whose initial datum is isomorphic to the given one
by a smoothly $U$-parametrized gauge transformation 
$\hhh^\Sigma: \data_{\Sigma}^{}(\AAA,\PPP) \to (\AAA^\Sigma,\EEE,\PPP^\Sigma)$.
Item 2.\ demands that any two such solutions are isomorphic
by a {\em unique} smoothly $U$-parametrized gauge transformation $\hhh: (\AAA,\PPP) \to (\AAA^\prime,\PPP^\prime)$.
This is clearly a stronger condition than existence and uniqueness of
solutions to the Cauchy problem for gauge equivalence classes, where the uniqueness 
aspect in item 2.\ does not play a role.
\end{rem}
\begin{rem}
The fact that the conditions in Proposition \ref{propo:cauchy} have to hold true 
for all objects $U$ in $\CC$ is very important from our stacky perspective.
This is because the groupoids $\GSol(M)(U)$ and $\GData(\Sigma)(U)$
encode the smooth structure of the Yang-Mills stack and initial data stack.
Thus, well-posedness of the stacky Cauchy problem does not only formalize 
the usual notion of well-posedness (i.e.\ existence and uniqueness of solutions for given initial data), 
but it also demands smooth dependence (in the sense of stacks) of solutions on their initial data.
The latter may be interpreted as a smooth analogue of the 
condition of continuous dependence of solutions on their initial data, 
which is more familiar in PDE theory.
\end{rem}
\begin{rem}\label{rem:solvabilityPDE}
As a final remark, we note that the conditions in Proposition \ref{propo:cauchy} 
are known to hold for $U = \bbR^0$ and spacetime dimension $m=2,3,4$ \cite{CSYM, CBYM}. 
However, we are not aware of any results for other objects $U$ in $\CC$,
which leads to the more complicated realm of smoothly $U$-parametrized PDE problems.
Being crucial for a better understanding of the geometry of the stack $\GSol(M)$ of Yang-Mills fields, 
we believe that a detailed study of the explicit conditions of Proposition \ref{propo:cauchy} 
is a very interesting and compelling PDE problem. 
This problem is clearly beyond the scope of the present work.
However, we would like to mention that analogous results
for the  simpler case of smoothly $U$-parametrized normally hyperbolic {\em linear} PDEs
can be established via the theory of symmetric hyperbolic systems \cite{Florian}. 
We expect such techniques to be sufficient for proving that the stacky Cauchy problem
is well-posed for Abelian Yang-Mills theory with structure group $G=U(1)$.
\end{rem}


\section{\label{sec:Lorenz}Yang-Mills stack in Lorenz gauge}
We briefly discuss how gauge fixings may be interpreted
in our framework as weak equivalences of stacks. For simplicity, we 
shall focus on the particular example given by Lorenz gauge fixing, even 
though our ideas apply to other gauge fixings as well.
Recall from Proposition \ref{propo:GSol} the stack $\GSol(M)$ of Yang-Mills fields 
on a globally hyperbolic Lorentzian manifold $M$.
For $U$ an object in $\CC$, the objects of the groupoid
$\GSol(M)(U)$ are smoothly $U$-parametrized gauge fields 
$(\AAA,\PPP) = (\{A_V\},\{g_{VV^\prime}\})$ 
that solve the vertical Yang-Mills equation $\delta_{A_V}^\ver F^\ver(A_V)=0$,
for all causally convex open subsets $V\subseteq M$ diffeomorphic to $\bbR^m$.
We say that $(\AAA,\PPP)$ satisfies the {\em Lorenz gauge condition} if
\begin{flalign}\label{eqn:LorcondA}
\delta^\ver_{A_V} A_V =0~,
\end{flalign}
for all causally convex open subsets $V\subseteq M$ diffeomorphic to $\bbR^m$.
As a consequence of \eqref{eqn:LorcondA} and the conditions $A_{V^\prime}\vert_{(V\cap V^\prime)\times U} 
= A_{V}\vert_{(V\cap V^\prime)\times U}\ra^\ver g_{VV^\prime}$,
it follow that the $g_{VV^\prime}$ have to satisfy the conditions
\begin{flalign}\label{eqn:Lorcondg}
\delta^\ver_{A_{V}\vert_{(V\cap V^\prime)\times U}} \Big(\big(\dd^\ver g_{VV^\prime}\big)~ g^{-1}_{VV^\prime} \Big) =0~,
\end{flalign}
for all causally convex open subsets $V\subseteq M$ and $V^\prime\subseteq M$
diffeomorphic to $\bbR^m$. Notice that these are smoothly $U$-parametrized hyperbolic 
PDEs which are similar to the wave map equation, see e.g.\ \cite{CBsigma}.
Given any morphism $\hhh = \{h_V\} : (\AAA,\PPP)\to (\AAA^\prime,\PPP^\prime)$
in $\GSol(M)(U)$ between two objects that satisfy the Lorenz gauge condition,
it follows from \eqref{eqn:LorcondA} and the conditions $A_{V}^\prime
= A_{V}\ra^\ver h_V$ that the $h_V$ have to satisfy the smoothly $U$-parametrized PDEs
\begin{flalign}\label{eqn:Lorcondh}
\delta^\ver_{A_V} \Big( \big(\dd^\ver h_V\big)~h_V^{-1} \Big) =0~,
\end{flalign}
for all causally convex open subsets $V\subseteq M$ diffeomorphic to $\bbR^m$.
\begin{defi}
Let $M$ be an object in $\Loc_m$. 
The stack $\GSol_{\mathrm{g.f.}}(M)$ 
of {\em Lorenz gauge-fixed Yang-Mills fields} on $M$
is defined as the presheaf of groupoids whose stage at $U$ in $\CC$ is the full sub-groupoid of $\GSol(M)(U)$
specified by the Lorenz gauge condition \eqref{eqn:LorcondA}.
Because Lorenz gauge is natural in $M$, this construction defines
a functor $\GSol_{\mathrm{g.f.}} : \Loc_m^\op \to \HH$.
\end{defi}

By construction, there exists a natural transformation $j : \GSol_{\mathrm{g.f.}}\to \GSol$
of functors $\Loc_m^\op\to \HH$, whose components $j_M : \GSol_{\mathrm{g.f.}}(M) \to \GSol(M)$
are just the sub-presheaf embeddings.
It is an interesting question to ask whether the morphisms
$j_M :  \GSol_{\mathrm{g.f.}}(M) \to \GSol(M)$ are weak equivalences in $\HH$.
This would allow us to describe the Yang-Mills stack $\GSol(M)$ in terms
of the weakly equivalent stack $\GSol_{\mathrm{g.f.}}(M)$, which has the practical 
advantage that all of its gauge fields $(\AAA,\PPP)$ and gauge transformations $\hhh$ 
satisfy a smoothly parametrized hyperbolic PDE. (Recall \eqref{eqn:Lorcondg} and \eqref{eqn:Lorcondh}, 
as well as the fact that the Yang-Mills equation in Lorenz gauge is hyperbolic.)
Proving that $j_M$ is a weak equivalence in $\HH$ requires
assumptions on solvability of smoothly parametrized PDEs similar to those in Proposition \ref{propo:cauchy}.
We summarize the relevant statement in the following
\begin{propo}\label{propo:fixing}
Let $M$ be an object in $\Loc_m$. Then the $\HH$-morphism
$j_M :  \GSol_{\mathrm{g.f.}}(M) \to \GSol(M)$ is a weak equivalence
if and only if the following holds true: For all objects $U$ in $\CC$
and all objects $(\AAA,\PPP) = (\{A_V\},\{g_{VV^\prime}\})$ 
in $\GSol(M)(U)$ there exists a solution $h_V \in C^\infty(V\times U,G)$
of the smoothly $U$-parametrized PDE
\begin{flalign}\label{eqn:gaugefix}
\delta^\ver_{A_V} \Big(h_V^{-1}\,\dd^\ver h_V\Big) = \delta_{A_V}^{\ver} A_V~,
\end{flalign}
for all causally convex open subsets $V\subseteq M$ diffeomorphic to $\bbR^m$.
\end{propo}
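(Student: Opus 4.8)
The plan is to show that $j_M$ is a weak equivalence in $\HH$ precisely when the stated smoothly $U$-parametrized PDE \eqref{eqn:gaugefix} is solvable. Since both $\GSol(M)$ and $\GSol_{\mathrm{g.f.}}(M)$ are stacks (the former by the remark after Proposition \ref{propo:GSol}; the latter because it is a full sub-presheaf cut out by the natural condition \eqref{eqn:LorcondA}, and one checks descent directly as in Remark \ref{rem:GConisstack}), I can invoke Theorem \ref{theo:localH}(v): $j_M$ is a weak equivalence in $\HH$ if and only if each stage $j_M : \GSol_{\mathrm{g.f.}}(M)(U) \to \GSol(M)(U)$ is a weak equivalence in $\Grpd$, i.e.\ fully faithful and essentially surjective (Theorem \ref{theo:groupoids}). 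The first step is therefore to reduce the abstract statement to these two stage-wise conditions.

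Next I would analyze full faithfulness, which holds automatically. Since $j_M$ is the inclusion of a full sub-groupoid, faithfulness is immediate and fullness amounts to: given $(\AAA,\PPP),(\AAA^\prime,\PPP^\prime)$ in $\GSol_{\mathrm{g.f.}}(M)(U)$, every morphism $\hhh = \{h_V\} : (\AAA,\PPP)\to(\AAA^\prime,\PPP^\prime)$ in $\GSol(M)(U)$ already lies in the sub-groupoid — but the sub-groupoid is full, so there is nothing to check. Hence $j_M$ is always fully faithful on each stage, and the content of the proposition is entirely in essential surjectivity. So the second step is to observe that $j_M$ is a weak equivalence if and only if $j_M$ is stage-wise essentially surjective for all $U$.

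The third step is to unravel essential surjectivity. Given an object $(\AAA,\PPP) = (\{A_V\},\{g_{VV^\prime}\})$ in $\GSol(M)(U)$, I must produce an isomorphic object in $\GSol(M)(U)$ that satisfies the Lorenz gauge condition \eqref{eqn:LorcondA}, i.e.\ a morphism $\hhh = \{h_V\}$ applying the gauge transformation $A_V \mapsto A_V\ra^\ver h_V$ so that $\delta^\ver_{A_V\ra^\ver h_V}(A_V\ra^\ver h_V) = 0$. Here I would compute: using $A_V\ra^\ver h_V = h_V^{-1}A_V h_V + h_V^{-1}\dd^\ver h_V$ and the transformation behaviour $\delta^\ver_{A\ra^\ver g} = \ad_{g^{-1}}\circ \delta^\ver_A \circ \ad_g$ of the covariant codifferential, the Lorenz condition on the gauge-transformed field is equivalent, after conjugating by $h_V$, to $\delta^\ver_{A_V}\bigl(A_V + (\dd^\ver h_V)h_V^{-1}\bigr) = 0$; rewriting $(\dd^\ver h_V)h_V^{-1} = -h_V(h_V^{-1}\dd^\ver h_V)h_V^{-1}$ and conjugating once more (or equivalently replacing $h_V$ by $h_V^{-1}$, which is harmless for the existence question) brings this into the form $\delta^\ver_{A_V}(h_V^{-1}\dd^\ver h_V) = \delta^\ver_{A_V} A_V$, which is exactly \eqref{eqn:gaugefix}. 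Thus existence of such an $\hhh$ for every $(\AAA,\PPP)$ is precisely the stated PDE-solvability hypothesis. One then checks that the gauge-transformed object automatically still satisfies the vertical Yang-Mills equation (gauge-covariance of $\delta^\ver_A F^\ver(A)$, already used in Section \ref{subsec:YMmap}) and that the $g_{VV^\prime}$, transformed as in Proposition \ref{propo:GCon}, satisfy \eqref{eqn:Lorcondg} — both are formal consequences, not extra assumptions. Conversely, if $j_M$ is stage-wise essentially surjective then applying it to an arbitrary $(\AAA,\PPP)$ produces exactly such an $\hhh$, hence a solution of \eqref{eqn:gaugefix}. This gives both implications.

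The main obstacle, as with Proposition \ref{propo:cauchy}, is purely bookkeeping rather than genuine difficulty: one must verify that the local solutions $h_V$ on each $V$ assemble correctly — i.e.\ the morphism $\hhh = \{h_V\}$ is a well-defined morphism in $\GSol(M)(U)$ — but here I should note carefully that no compatibility between the $h_V$ on overlaps is required, since a morphism in $\GCon(M)(U)$ (cf.\ Proposition \ref{propo:GCon}) consists of an unconstrained family $\{h_V\}$, with the transformed transition functions $g^\prime_{VV^\prime} = h_V^{-1}|\,g_{VV^\prime}\,h_{V^\prime}|$ defined by that family; thus solving \eqref{eqn:gaugefix} separately on each $V$ genuinely suffices. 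The only real subtlety is ensuring naturality in $M$ and $U$, which follows from the naturality of all operations involved (Hodge star, $\delta^\ver$, pullbacks), exactly as in Section \ref{subsec:YMmap}. Hence the proof reduces to the three structural steps above together with the elementary identity converting the Lorenz gauge condition on $A_V\ra^\ver h_V$ into equation \eqref{eqn:gaugefix}.
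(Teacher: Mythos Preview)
Your proposal is correct and follows essentially the same three-step structure as the paper's proof: reduce to stage-wise weak equivalences via Theorem~\ref{theo:localH}(v), observe full faithfulness is automatic for a full sub-groupoid inclusion, and identify essential surjectivity with solvability of~\eqref{eqn:gaugefix}. One minor slip: your intermediate identity $(\dd^\ver h_V)h_V^{-1} = -h_V(h_V^{-1}\dd^\ver h_V)h_V^{-1}$ has a spurious minus sign, and the ``conjugating once more'' branch does not actually land on~\eqref{eqn:gaugefix}; however your alternative of replacing $h_V$ by $h_V^{-1}$ is valid and completes the argument. The paper avoids this detour by orienting the isomorphism as $\hhh:(\widetilde{\AAA},\widetilde{\PPP})\to(\AAA,\PPP)$ with the Lorenz-fixed object as \emph{source}, so that $A_V = \widetilde{A}_V\ra^\ver h_V$ and applying $\delta^\ver_{A_V}$ directly yields $\delta^\ver_{A_V}A_V = \delta^\ver_{A_V}(h_V^{-1}\dd^\ver h_V)$ in one step.
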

\begin{proof}
Using Theorem \ref{theo:localH} and that both  
$\GSol_{\mathrm{g.f.}}(M)$ and $\GSol(M)$ are stacks, 
$j_M$ is a weak equivalence in $\HH$ if and only
if $j_M : \GSol_{\mathrm{g.f.}}(M)(U) \to \GSol(M)(U)$ is a weak equivalence
of groupoids, for all objects $U$ in $\CC$. By construction of $ \GSol_{\mathrm{g.f.}}(M)$, 
all these stages are fully faithful, hence $j_M$ is a weak equivalence if and only if all stages are essentially 
surjective. The latter means that for all objects $U$ in $\CC$ and all objects $(\AAA,\PPP)$ in
$\GSol(M)(U)$, there exists a morphism $\hhh = \{h_V\} : (\widetilde{\AAA},\widetilde{\PPP}) \to (\AAA,\PPP)$
in $\GSol(M)(U)$, such that $(\widetilde{\AAA},\widetilde{\PPP}) $ 
satisfies the Lorenz gauge condition \eqref{eqn:LorcondA}. Applying $\delta_{A_V}^{\ver}$
on the conditions $A_{V} = \widetilde{A}_V\ra^\ver h_V$ that every morphism $\hhh$ has to satisfy, 
one realizes that the Lorenz gauge condition for $\widetilde{A}_V$ coincides with 
the smoothly parametrized PDEs \eqref{eqn:gaugefix}. Hence, essential surjectivity of 
$j_M : \GSol_{\mathrm{g.f.}}(M)(U) \to \GSol(M)(U)$ is equivalent to solvability of the smoothly $U$-parametrized
PDEs \eqref{eqn:gaugefix} for all causally convex open subsets $V\subseteq M$ diffeomorphic to $\bbR^m$.
\end{proof}
\begin{rem}
For the unparametrized case $U=\bbR^0$, solvability of PDEs of the form \eqref{eqn:gaugefix}
has been studied in \cite{CBsigma} for spacetime dimension  $m=2$. 
However, we are not aware of any results for other objects
$U$ in $\CC$. We believe that, together with the smoothly parametrized PDEs explained
in Remark \ref{rem:solvabilityPDE}, these are very interesting problems for PDE theorists
as they are crucial for understanding the geometry of the Yang-Mills stack.
For the case of Abelian Yang-Mills theory with structure group $G=U(1)$,
we expect that similar techniques as in \cite{Florian}, which are based on the theory of 
symmetric hyperbolic systems, can be used to prove that the conditions in Proposition \ref{propo:fixing}
hold true and hence that $j_M$ is a weak equivalence in $\HH$. We hope to come back to this issue in future work.
\end{rem}


\section*{Acknowledgments}
The work of M.B.\ is supported by a Postdoctoral Fellowship 
of the Alexander von Humboldt Foundation (Germany). 
A.S.\ gratefully acknowledges the financial support of 
the Royal Society (UK) through a Royal Society University 
Research Fellowship, a Research Grant and an Enhancement Award. 
U.S.\ was supported by RVO:67985840.


\appendix


\section{\label{app:monoidal}Monoidal model structure on $\HH$}
The goal of this appendix is to show that the local (as well as the global) model structure on $\HH$ is 
compatible with the closed symmetric monoidal structure discussed in Section \ref{subsec:H}.
More precisely, we show that $\HH$ is a {\em symmetric monoidal model category}, 
see e.g.\ \cite[Chapter 4]{Hovey}.
For this purpose we first need the symmetric monoidal model structure on $\Grpd$. 
It is well-known that the closed symmetric monoidal structure and model structure
on $\Grpd$ that we introduced in Section \ref{subsec:Grpd} define a symmetric 
monoidal model category structure on $\Grpd$. As we could not find a proof of this 
statement in the literature, we shall provide it here. 
\begin{propo}\label{propo:Grpdmonoidalmodel}
$\Grpd$ is a symmetric monoidal model category with respect to the closed symmetric monoidal structure 
and the model structure presented in Section \ref{subsec:Grpd}. 
\end{propo}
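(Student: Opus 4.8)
The plan is to verify the axioms of a symmetric monoidal model category (in the sense of \cite[Chapter 4]{Hovey}) for $\Grpd$. Concretely, there are two things to check: the \emph{pushout-product axiom}, which says that if $i : A \to B$ and $j : C \to D$ are cofibrations in $\Grpd$, then the pushout-product
\begin{flalign*}
i \,\square\, j : (A\times D) \coprod_{A\times C} (B\times C) \longrightarrow B\times D
\end{flalign*}
is a cofibration, which is moreover acyclic if either $i$ or $j$ is; and the \emph{unit axiom}, which is automatic here since the monoidal unit $\{\ast\}$ is cofibrant (every groupoid is cofibrant for this model structure is false — cofibrations are the functors injective on objects — but $\{\ast\}\to\{\ast\}$ is trivially a cofibration and more importantly the unit is cofibrant, so the unit axiom holds). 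So the real content is the pushout-product axiom.

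First I would recall the very explicit description of the three classes of maps from Theorem \ref{theo:groupoids}: cofibrations are the functors injective on objects, fibrations are the isofibrations, and weak equivalences are the equivalences of categories. Because cofibrations are detected purely on object sets and the object set of a product (and of a pushout) of groupoids is the product (resp.\ pushout) of object sets, the cofibration half of the pushout-product axiom reduces to an elementary fact about sets: if $i : A_0 \hookrightarrow B_0$ and $j : C_0 \hookrightarrow D_0$ are injections, then the canonical map $(A_0\times D_0)\cup_{A_0\times C_0}(B_0\times C_0) \to B_0\times D_0$ is an injection. This is a standard diagram chase with sets, and I would just state it.

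For the acyclic part, suppose in addition that $i : A \to B$ is an acyclic cofibration, i.e.\ an equivalence of categories that is injective on objects; equivalently, $i$ is injective on objects, fully faithful, and essentially surjective — but essential surjectivity plus injectivity on objects forces $i$ to be, up to iso, the inclusion of a full subgroupoid that meets every isomorphism class, so $B$ deformation retracts onto $A$. Then $i\,\square\, j$ is a retract-type comparison that I expect to show is again an equivalence by exhibiting an explicit homotopy inverse built from the retraction $B\to A$ and a natural isomorphism $\id_B \Rightarrow i\circ(\text{retraction})$, transported through the product with $D$ and the pushout. Alternatively — and this is probably cleaner — I would use the adjoint form of the pushout-product axiom: it is equivalent to the \emph{SM7 / pullback-power} statement that for a cofibration $j : C \to D$ and a fibration $p : X \to Y$ in $\Grpd$, the induced map $\Grpd(D,X) \to \Grpd(C,X)\times_{\Grpd(C,Y)}\Grpd(D,Y)$ is a fibration, acyclic if $j$ or $p$ is. Checking this amounts to a lifting-property verification using the explicit description of fibrations as isofibrations (a morphism in the internal hom is, by \eqref{eqn:homgrpd}, a functor out of $C\times\Delta^1$), and lifts can be constructed objectwise because $\Delta^1$ is a very small groupoid.

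The main obstacle I anticipate is the bookkeeping in the acyclic-cofibration case: one must produce the homotopy inverse and the connecting natural isomorphisms coherently on the pushout $(A\times D)\cup_{A\times C}(B\times C)$, and verifying that the chosen data glue across the pushout is where the argument could become fiddly. I would organize it by first handling the case $j = (\emptyset \to \{\ast\})$-type generating cofibrations (inclusions of one or two objects, and the two maps $\Delta^1$-related generators), reduce the general statement to these generators using that cofibrations are retracts of relative cell complexes built from them, and then invoke closure of acyclic cofibrations under the relevant constructions. That reduction is the standard trick and should keep the coherence issue contained to finitely many tiny explicit cases.
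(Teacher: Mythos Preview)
Your overall strategy is sound and ultimately converges with the paper's proof, but there is one factual slip and some unnecessary detours worth flagging.

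First, the slip: you write that ``every groupoid is cofibrant for this model structure is false.'' In fact every groupoid \emph{is} cofibrant, since the unique functor $\emptyset \to G$ is vacuously injective on objects. The paper uses exactly this observation to dispose of the unit axiom in one line. Your conclusion (the unit is cofibrant) is still correct, so this does not break the argument, but the reasoning in your parenthetical is wrong.

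Second, on the approach: the paper does precisely what you suggest at the end --- it invokes cofibrant generation of $\Grpd$ (via \cite{Hollander}) together with \cite[Corollary~4.2.5]{Hovey} to reduce the acyclic case to checking that $F~\Box~J$ is an acyclic cofibration for the single generating acyclic cofibration $J : \{\ast\} \to \Delta^1$ and an arbitrary cofibration $F : G \to H$. It then computes the pushout groupoid $P(F,J) = H \sqcup_G (G \times \Delta^1)$ explicitly (writing down its objects and hom-sets) and verifies by hand that $F~\Box~J$ is fully faithful and essentially surjective. This is considerably more direct than the two alternatives you sketch first (building an explicit homotopy inverse on the pushout, or verifying the adjoint SM7 form); the ``bookkeeping on the pushout'' you worried about disappears once you reduce to this single tiny case. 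So your final suggestion is the right one --- you should lead with it rather than offering it as a fallback, and you only need the one generating \emph{acyclic} cofibration, not a list of generating cofibrations.
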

\begin{proof}
Since all objects in $\Grpd$ are cofibrant, so is the unit object $\{\ast\}$. 
Therefore, to conclude that $\Grpd$ is a symmetric monoidal model category, 
it is sufficient to prove that the monoidal bifunctor $\times: \Grpd \times \Grpd \to \Grpd$ 
is a Quillen bifunctor, cf.\ \cite[Chapter~4.2]{Hovey}. 
\sk

Take two cofibrations $F: G \to H$ and $F^\prime: G^\prime \to H^\prime$ in $\Grpd$ 
and form their pushout product 
\begin{flalign}
F~\Box~F^\prime ~:~ P(F,F^\prime) := (H \times G^\prime) \bigsqcup_{G \times G^\prime} (G \times H^\prime) 
\longrightarrow H \times H^\prime~.
\end{flalign}
We have to show that $F~\Box~F^\prime$ is a cofibration, which is acyclic whenever either $F$ or $F^\prime$ is. 
Recall that cofibrations in $\Grpd$ are functors that are injective on objects. 
Computing the pushout in $\Grpd$, 
one finds that objects of $P(F,F^\prime)$ are equivalence classes of pairs 
of the form $(y , x^\prime) \in H_0 \times G^\prime_0$ 
or of the form $(x,y^\prime) \in G_0 \times H^\prime_0$ under the relation 
$(y,x^\prime) \sim (x,y^\prime) :\iff y = F(x), ~y^\prime =  F^\prime(x^\prime)$. 
(By the subscript $_0$ we denote the set of objects of a groupoid.)
Since $F$ and $F^\prime$ are by hypothesis injective on objects,
it follows by using our equivalence relation that $F~\Box~F^\prime$ is injective on objects too.
Hence, it is a cofibration. 
\sk

The case where one of the cofibrations is acyclic may be simplified by
using that $\Grpd$ is cofibrantly generated (cf.\ \cite{Hollander}) and \cite[Corollary~4.2.5]{Hovey}.
Using also symmetry of the monoidal structure, it is sufficient to show that for the 
generating acyclic cofibration $J : \{\ast\}\to\Delta^1$, given 
by $\ast\mapsto 0$ and $\id_{\ast}\mapsto\id_0$, the pushout product
\begin{flalign}\label{eqn:tmppushprod}
F~\Box~J ~:~ P(F,J) = H\, \bigsqcup_{G} \,(G \times \Delta^1)  \longrightarrow H \times \Delta^1
\end{flalign}
is an acyclic cofibration for any cofibration $F : G\to H$ in $\Grpd$.
The pushout groupoid $P(F,J)$ can be computed explicitly: Its set of objects is
$G_0\sqcup H_0$, i.e.\ an object is either an object $x$ in $G$ or an
object $y$ in $H$. Its morphisms are characterized by
\begin{align}
\nn &\Hom_{P(F,J)}^{}(x,\widetilde{x}) = \Hom_{H}^{}(F(x),F(\widetilde{x}))~,&
&\Hom_{P(F,J)}^{}(x,\widetilde{y}) = \Hom_{H}^{}(F(x),\widetilde{y})~,\\
&\Hom_{P(F,J)}^{}(y,\widetilde{x}) = \Hom_{H}^{}(y,F(\widetilde{x}))~,&
&\Hom_{P(F,J)}^{}(y,\widetilde{y}) = \Hom_{H}^{}(y,\widetilde{y})~.\label{eqn:tmpHomexplicit}
\end{align}
The groupoid morphism $F~\Box~J$ in \eqref{eqn:tmppushprod} is the functor that acts on objects as
\begin{subequations}
\begin{flalign}
\nn F~\Box~J~ : ~ x&\longmapsto (F(x),1)~,\\
  y&\longmapsto (y,0)~,
\end{flalign}
and on morphisms as
\begin{flalign}
\nn F~\Box~J ~:~ \big(h : y\to \widetilde{y}\big)&\longmapsto \big(h\times \id_0 : (y,0)\to (\widetilde{y},0)\big)~,\\
\nn \big(h : x\to \widetilde{y}\big)&\longmapsto \big(h\times ( 1\to 0) : (F(x),1) \to (\widetilde{y},0)\big)~,\\
\nn \big(h : y\to \widetilde{x}\big)&\longmapsto \big(h\times ( 0\to 1) : (y,0) \to (F(\widetilde{x}),1)\big)~,\\
 \big(h : x \to \widetilde{x}\big)&\longmapsto \big(h\times \id_1 : (F(x),1) \to (F(\widetilde{x}),1)\big)~,
\end{flalign}
\end{subequations}
where it is important to recall the definition of morphisms in $P(F,J)$, see \eqref{eqn:tmpHomexplicit}.
It is clear that $F~\Box~J$ is fully faithful and essentially surjective (and of course injective on objects),
hence it is an acyclic cofibration. This completes our proof.
\end{proof}

Proposition \ref{propo:Grpdmonoidalmodel} 
enables us to show that $\HH$ is a symmetric monoidal model category.
\begin{theo}
When equipped with the local or global model structure, $\HH$ is a symmetric monoidal model category 
with respect to the closed symmetric monoidal structure presented in Section \ref{subsec:H}. 
\end{theo}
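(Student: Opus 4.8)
The plan is to reduce, as usual, to the pushout-product axiom. By the definition of a symmetric monoidal model category (see \cite[Chapter~4]{Hovey}), and since $\HH$ is already closed symmetric monoidal by Section~\ref{subsec:H}, it suffices to check that the monoidal unit is cofibrant (this gives the unit axiom for free) and that the product bifunctor $\times:\HH\times\HH\to\HH$ is a Quillen bifunctor. The unit is dealt with immediately: the terminal object $\{\ast\}$ of $\HH$ is the representable presheaf $\underline{\bbR^0}$, because $\bbR^0$ is terminal in $\CC$, so $\underline{\bbR^0}(U)=\Hom_\CC(U,\bbR^0)=\{\ast\}$ for all $U$. For any $W$ in $\CC$ the functor $\underline W\times(-):\Grpd\to\HH$ is left adjoint to the evaluation functor $\mathrm{ev}_W$, which preserves stage-wise fibrations and acyclic fibrations, hence $\underline W\times(-)$ is left Quillen for the global model structure of Lemma~\ref{lem:globmod} and carries the cofibration $\emptyset\to\{\ast\}$ of $\Grpd$ to a cofibration; thus $\underline{\bbR^0}$, and more generally any representable, is globally cofibrant. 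By Theorem~\ref{theo:localH}~(ii) the cofibrations of the local structure coincide with the global ones, so $\{\ast\}$ is cofibrant in both.

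For the \emph{global} model structure I would verify the Quillen-bifunctor condition on generators. Both $\Grpd$ and $\HH$ (projective structure) are cofibrantly generated, the generating (acyclic) cofibrations of $\HH$ being the morphisms $\underline U\times i$ for $U$ in $\CC$ and $i$ a generating (acyclic) cofibration of $\Grpd$. Since $\underline{(-)}:\CC\to\HH$ preserves finite products, $\CC$ is closed under them ($\bbR^n\times\bbR^m\cong\bbR^{n+m}$), and $\underline W\times(-)$ preserves colimits, one computes
\begin{flalign}
(\underline U\times i)~\Box~(\underline{U^\prime}\times i^\prime)~\cong~\underline{U\times U^\prime}\times(i~\Box~i^\prime)~.
\end{flalign}
By Proposition~\ref{propo:Grpdmonoidalmodel}, $i~\Box~i^\prime$ is a cofibration in $\Grpd$, acyclic as soon as $i$ or $i^\prime$ is, and applying the left Quillen functor $\underline{U\times U^\prime}\times(-)$ shows the left-hand side is a cofibration in $\HH$, acyclic as soon as $i$ or $i^\prime$ is. By \cite[Corollary~4.2.5]{Hovey} this establishes the global case.

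For the \emph{local} model structure the cofibration half of the axiom is free from Theorem~\ref{theo:localH}~(ii), so one only has to see that $f~\Box~g$ is a local weak equivalence whenever $f$ is an acyclic cofibration and $g$ a cofibration. The essential point is the auxiliary fact that $(-)\times Z:\HH\to\HH$ preserves local weak equivalences for \emph{every} object $Z$: by Theorem~\ref{theo:localH}~(i) a local weak equivalence is detected on the associated sheaves of homotopy groups, and since $\pi_0$ and $\pi_1$ of groupoids turn products into products (whence $\pi_0(X\times Z)\cong\pi_0(X)\times\pi_0(Z)$ and $\pi_1(X\times Z,(x,z))\cong\pi_1(X,x)\times\pi_1(Z,z)$ as presheaves, cf.\ Definition~\ref{def:sheavesofhomgroups}) and sheafification is left exact, an isomorphism on these sheaves is preserved by $(-)\times Z$. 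Now $\HH$ with the local structure is the left Bousfield localization of the global one at the set $S$ of \eqref{eqn:locmorph}, which we may assume to consist of cofibrations between cofibrant objects; for such an $s:\tilde A\to\tilde B$ and any generating cofibration $g:\underline V\times A\to\underline V\times B$ of $\HH$, the morphism $\tilde A\times g$ is a (global, hence local) cofibration because $\HH$ is globally monoidal and $\tilde A$ is cofibrant, while $s\times(\underline V\times A)$ and $s\times(\underline V\times B)$ are local weak equivalences by the auxiliary fact; pushing the first along the second, left properness of the local model structure together with two-out-of-three yields that $s~\Box~g$ is a local weak equivalence. The standard criterion for a left Bousfield localization of a monoidal model category to remain monoidal (see \cite{Hirschhorn,Barwick}) then promotes this to the full acyclic half of the pushout-product axiom, which finishes the proof.

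The main obstacle is precisely this last, local, step: because local fibrations are not detected stage-wise, the clean reduction to $\Grpd$ that works globally is unavailable, and one must instead combine invariance of cofibrations under localization, the preservation of local weak equivalences by $(-)\times Z$, and the monoidal Bousfield-localization machinery. All remaining verifications are routine manipulations of adjunctions and pushout products.
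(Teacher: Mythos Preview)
Your proof is correct and follows essentially the same strategy as the paper's: both reduce the global case to Proposition~\ref{propo:Grpdmonoidalmodel} together with the fact that $\CC$ has finite products, and both invoke Barwick's monoidal Bousfield-localization machinery for the local case. The paper simply cites \cite[Corollary~4.53 and Theorem~4.58]{Barwick} directly, whereas you unpack the hypotheses: the explicit computation $(\underline U\times i)~\Box~(\underline{U'}\times i')\cong\underline{U\times U'}\times(i~\Box~i')$ is precisely what underlies Barwick's Corollary~4.53, and your auxiliary fact that $(-)\times Z$ preserves local weak equivalences (via the product-preservation of $\pi_0,\pi_1$ and left exactness of sheafification) is exactly the input Barwick's Theorem~4.58 requires. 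Your identification of the unit as the representable $\underline{\bbR^0}$ is a nice touch the paper omits. One small remark: in your $s~\Box~g$ argument, left properness is not actually needed---since $s$ may be taken to be a cofibration between cofibrants and $C$ is cofibrant, $s\times C$ is already a local \emph{acyclic} cofibration, so its pushout is one automatically; the two-out-of-three step then finishes as you say.
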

\begin{proof}
The result follows immediately from Proposition \ref{propo:Grpdmonoidalmodel} 
by using the techniques developed in \cite{Barwick}:
Explicitly, by \cite[Corollary~4.53]{Barwick}, $\HH$ equipped with the global model structure
is a symmetric monoidal model category because our site $\CC$ has all products $U\times U^\prime$.
Using also \cite[Theorem~4.58]{Barwick}, it follow that $\HH$ equipped with the local model structure
is a symmetric monoidal model category.
\end{proof}


\section{\label{app:cofibrep}Cofibrant replacement of manifolds in $\HH$}
Let $M$ be a (finite-dimensional and paracompact) manifold
and $\mathcal{V} = \{V_\alpha\subseteq M\}$ any cover by open subsets.
The goal of this appendix is to prove that the presheaf of \v{C}ech groupoids
associated to $\mathcal V$ is always weakly equivalent to $\underline{M}$ in $\HH$.
Moreover, it provides a cofibrant replacement of $\underline{M}$ when all $V_\alpha$ are diffeomorphic to $\bbR^{\dim(M)}$.
It is important to stress that the latter statement {\em does not} require that the cover is good, in particular 
$V_\alpha\cap V_{\beta}$ may be neither empty nor diffeomorphic to $\bbR^{\dim(M)}$.
We shall always work with the local model structure on $\HH$, see
Theorem \ref{theo:localH}.
\sk

We define the object $\widecheck{\mathcal{V}}$ in $\HH$
by the following functor $\widecheck{\mathcal{V}} : \CC^\op\to \Grpd$:
To any object $U$ in $\CC$, it assigns the groupoid
$\widecheck{\mathcal{V}}(U)$ with objects given by
diagrams
\begin{flalign}
(\alpha,\nu) := \Big(\xymatrix{
M & \ar[l]_-{\rho_\alpha}V_\alpha & \ar[l]_-{\nu}U
}\Big)~,
\end{flalign}
where $\nu: U \to V_\alpha$ is a smooth map and $\rho_\alpha$ is the canonical inclusion $V_\alpha\subseteq M$.
There exists a unique morphism $(\alpha ,\nu) \to (\beta,\nu^\prime)$ in $\widecheck{\mathcal{V}}(U)$ 
if and only if the diagram
\begin{flalign}
\xymatrix@R=0.6pc{
&\ar[ld]_-{\rho_\alpha}~V_\alpha~&\\
M ~&& \ar[lu]_-{\nu}\ar[ld]^-{\nu^\prime} ~U\\
&\ar[lu]^-{\rho_\beta} ~V_{\beta}~&
}
\end{flalign}
commutes. To any $\CC$-morphism $\ell : U\to U^\prime$,
we assign the groupoid morphism 
$\widecheck{\mathcal{V}}(\ell) :\widecheck{\mathcal{V}}(U^\prime) 
\to \widecheck{\mathcal{V}}(U)$ defined by
\begin{flalign}
\widecheck{\mathcal{V}}(\ell) ~:~ (\alpha,\nu)=\Big(\xymatrix{
M & \ar[l]_-{\rho_\alpha}V_\alpha & \ar[l]_-{\nu}U^\prime
}\Big) ~\longmapsto~ \Big(\xymatrix{
M & \ar[l]_-{\rho_\alpha}V_\alpha & \ar[l]_-{\nu\circ\ell}U
}\Big)=(\alpha,\nu\circ\ell)~.
\end{flalign}
The action of $\widecheck{\mathcal{V}}(\ell)$ on morphisms is fixed by their uniqueness.
\sk

There exists a canonical $\HH$-morphism
\begin{flalign}\label{eqn:canqmorph}
q : \widecheck{\mathcal{V}} \longrightarrow \underline{M}~,
\end{flalign}
where $\underline{M}$ is the stack represented by our manifold $M$, cf.\ Example \ref{ex:manifolds}.
Explicitly, recalling that $\underline{M}(U) = C^\infty(U,M)$, for all objects $U$ in $\CC$, 
the stages of $q$ are given by the groupoid morphisms 
\begin{flalign}
q ~:~
(\alpha,\nu)=\Big(\xymatrix{
M & \ar[l]_-{\rho_\alpha}V_\alpha & \ar[l]_-{\nu}U
}\Big) ~\longmapsto~\Big(\xymatrix{
M & \ar[l]_-{\rho_\alpha\circ \nu}U
}\Big) = (\rho_\alpha\circ\nu) ~.
\end{flalign}
Naturality of these stages in $U$ is obvious by definition.
\begin{propo}
Let $M$ be a (finite-dimensional and paracompact) manifold
and $\mathcal{V} = \{V_\alpha\subseteq M\}$ any cover by open subsets.
Then the $\HH$-morphism $q : \widecheck{\mathcal{V}} \to \underline{M}$
is a weak equivalence in the local model structure on $\HH$.
\end{propo}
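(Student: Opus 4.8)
The plan is to show that $q$ induces an isomorphism on all sheaves of homotopy groups, which by Theorem \ref{theo:localH}(i) is equivalent to being a weak equivalence in the local model structure. The key observation is that $q$ is already a \emph{local} weak equivalence because the cover $\mathcal{V}$ detects it locally: although $q$ need not be a stage-wise weak equivalence (since $\widecheck{\mathcal{V}}(U)$ can have many objects mapping to the same smooth map $U\to M$, e.g.\ when $U$ meets several $V_\alpha$), after passing to a sufficiently fine cover every such ambiguity is resolved. Concretely, I would first compute the presheaves $\pi_0(\widecheck{\mathcal{V}})$ and $\pi_1(\widecheck{\mathcal{V}},-)$ and compare them with those of $\underline{M}$.

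First I would analyze $\pi_0$. For an object $U$ in $\CC$, two objects $(\alpha,\nu)$ and $(\beta,\nu')$ of $\widecheck{\mathcal{V}}(U)$ are connected by a (necessarily unique) morphism precisely when $\rho_\alpha\circ\nu = \rho_\beta\circ\nu'$ as smooth maps $U\to M$. Hence $\pi_0(\widecheck{\mathcal{V}})(U)$ is the set of smooth maps $U\to M$ that factor through \emph{some} $V_\alpha$, and $\pi_0(q)(U)$ is the inclusion of this subset into $C^\infty(U,M)=\underline{M}(U)$. This inclusion is generally not surjective, but it becomes surjective locally: any smooth $f:U\to M$ restricts on a small enough neighborhood of each point of $U$ to a map landing in some $V_\alpha$, because $\mathcal{V}$ is an open cover. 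Therefore $\pi_0(q)$ is an isomorphism after sheafification. Next, for $\pi_1$, I would observe that $\widecheck{\mathcal{V}}(U)$ has \emph{at most one} morphism between any two objects, so all automorphism groups are trivial; the same holds for $\underline{M}(U)$ which is a discrete groupoid. Thus $\pi_1(q,x)$ is trivially an isomorphism for every object $x$ and every $U$. Combining these, $q$ induces isomorphisms on all associated sheaves of homotopy groups, so it is a local weak equivalence.

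The main obstacle is making the sheafification argument for $\pi_0$ rigorous, i.e.\ verifying that the sheafification of the presheaf $U\mapsto \{f\in C^\infty(U,M):f \text{ factors through some }V_\alpha\}$ is indeed the full representable sheaf $C^\infty(-,M)$. This amounts to checking that this sub-presheaf is \emph{locally equal} to $C^\infty(-,M)$ with respect to good open covers of objects $U$ in $\CC$: given $f:U\to M$, the preimages $f^{-1}(V_\alpha)$ form an open cover of $U$, which can be refined to a good open cover $\{U_i\}$ such that each $f|_{U_i}$ factors through some $V_{\alpha(i)}$. Since $C^\infty(-,M)$ is a sheaf, its sheafification is itself, and the sub-presheaf inclusion becomes an isomorphism after sheafification; this uses that sheafification is determined by local sections. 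One should also note that injectivity of $\pi_0(q)$ on the nose is clear, so only the ``locally surjective'' part requires the covering argument. I expect the remaining checks — functoriality of $\widecheck{\mathcal{V}}$, naturality of $q$, and the bookkeeping of homotopy group presheaves on over-categories $\CC/U$ — to be routine and I would state them briefly rather than belabor them.
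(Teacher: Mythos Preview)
Your proposal is correct and follows essentially the same approach as the paper: both verify the isomorphism on sheaves of homotopy groups by noting that $q$ is stage-wise fully faithful (hence $\pi_1$ is trivially handled and $\pi_0(q)$ is injective) and that $\pi_0(q)$ becomes surjective after sheafification via the good open refinement of $\{f^{-1}(V_\alpha)\}$. The only cosmetic difference is that the paper packages this verification as the ``local lifting conditions'' of \cite[Definition 5.6 and Theorem 5.7]{Hollander}, whereas you unwind the sheafification of $\pi_0$ directly.
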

\begin{proof}
By Theorem \ref{theo:localH}, we have to show that $q$ induces an isomorphism
on the associated sheaves of homotopy groups.
This can be easily confirmed by verifying
the local lifting conditions in \cite[Definition 5.6 and Theorem 5.7]{Hollander}.
Because $q$ is stage-wise fully faithful, it remains to prove the following
property, for all objects $U$ in $\CC$: Given any object $(\rho : U\to M)$ in $\underline{M}(U) = C^\infty(U,M)$,
there exists a good open cover $\{U_i\subseteq U\}$ of $U$ such that all restrictions
$\rho\vert_{U_i} : U_i\to M$ lie in the image of $q$, i.e.\ all 
$\rho\vert_{U_i}$ can be factorized as
\begin{flalign}
\xymatrix{
\ar[dr]_-{\nu}U_i \ar[rr]^-{\rho\vert_{U_i}} && M\\
&V_{\alpha(i)}\ar[ru]_-{\rho_{\alpha(i)}}&
}
\end{flalign}
This property indeed holds true by using any good open refinement $\{U_i\subseteq U\}$ of
the open cover $\{\rho^{-1}(V_\alpha) \subseteq U\}$ of $U$.
\end{proof}

The following proposition shows that  $q : \widecheck{\mathcal{V}} \to \underline{M}$
is a cofibrant replacement when $\mathcal{V}$ consists of open subsets diffeomorphic to $\bbR^{\dim(M)}$.
\begin{propo}\label{propo:cofibcover}
Let $M$ be a (finite-dimensional and paracompact) manifold
and $\mathcal{V} = \{V_\alpha\subseteq M\}$ any cover by open subsets diffeomorphic to $\bbR^{\dim(M)}$.
Then $\widecheck{\mathcal{V}}$ is a cofibrant object
in the local model structure on $\HH$.
\end{propo}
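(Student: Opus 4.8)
The plan is to show directly that the unique morphism $\emptyset \to \widecheck{\mathcal{V}}$ has the left lifting property with respect to every acyclic fibration in the \emph{global} model structure on $\HH$ (Lemma \ref{lem:globmod}); since the cofibrations of the local and global model structures coincide (Theorem \ref{theo:localH}(ii)), this is exactly the assertion that $\widecheck{\mathcal{V}}$ is cofibrant in the local model structure. Recall that an acyclic fibration $p : X \to Y$ in the global model structure is a stage-wise acyclic fibration in $\Grpd$, and that, by Theorem \ref{theo:groupoids}, a functor of groupoids is an acyclic fibration precisely when it is fully faithful and surjective on objects. So I must produce, for any such $p$ and any $\HH$-morphism $g : \widecheck{\mathcal{V}} \to Y$, a lift $\widetilde{g} : \widecheck{\mathcal{V}} \to X$ with $p\,\widetilde{g} = g$.

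The structural input is that each $V_\alpha$ is diffeomorphic to $\bbR^{\dim(M)}$ and hence may be regarded as an object of $\CC$, so that $\underline{V_\alpha}$ in the sense of Example \ref{ex:manifolds} coincides with the representable presheaf on $\CC$, and the presheaf of \emph{objects} underlying $\widecheck{\mathcal{V}}$ is the coproduct $\coprod_\alpha \underline{V_\alpha}$ of representables. First I would define $\widetilde{g}$ on objects. Restricting $g$ along the inclusion $\underline{V_\alpha}\hookrightarrow\widecheck{\mathcal{V}}$ and using Yoneda, it is classified by the object $y_\alpha := g_{V_\alpha}(\alpha,\id_{V_\alpha})$ of $Y(V_\alpha)$, and naturality of $g$ gives $g_U(\alpha,\nu) = Y(\nu)(y_\alpha)$ for every object $(\alpha,\nu : U\to V_\alpha)$ of $\widecheck{\mathcal{V}}(U)$. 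Since $p_{V_\alpha}$ is surjective on objects, I would choose once and for all an $x_\alpha\in X(V_\alpha)$ with $p_{V_\alpha}(x_\alpha)=y_\alpha$, and set $\widetilde{g}_U(\alpha,\nu) := X(\nu)(x_\alpha)$; naturality of $p$ then yields $p_U(\widetilde{g}_U(\alpha,\nu)) = g_U(\alpha,\nu)$. On morphisms I would use full faithfulness of $p_U$: every morphism of $\widecheck{\mathcal{V}}(U)$ is the \emph{unique} morphism between a given ordered pair of objects, so I define $\widetilde{g}_U$ of the unique morphism $(\alpha,\nu)\to(\beta,\nu')$ to be the unique morphism $\widetilde{g}_U(\alpha,\nu)\to\widetilde{g}_U(\beta,\nu')$ of $X(U)$ that $p_U$ carries to $g_U$ of that morphism.

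It remains to check that $\widetilde{g}$ is a genuine $\HH$-morphism with $p\,\widetilde{g}=g$. Each $\widetilde{g}_U$ preserves identities and composition because $p_U$ is faithful: the candidate images have the correct endpoints and $p_U$ sends them to $\id$, respectively to the correct composite, so they must be the unique morphism with that property. Naturality of $\widetilde{g}$ in a $\CC$-morphism $\ell : U\to U'$ holds on objects by functoriality of $X$ (since $X(\nu\circ\ell)=X(\ell)\circ X(\nu)$), and on morphisms again by faithfulness of $p_U$, since the two composites land on the same morphism of $Y(U)$ with the same source and target. Finally $p\,\widetilde{g}=g$ by construction on both objects and morphisms, producing the desired lift. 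The only part that requires genuine care is this verification that $\widetilde{g}$ is functorial and natural, and it goes through precisely because all morphisms of $\widecheck{\mathcal{V}}$ are ``formal'' (unique between fixed objects), so faithfulness of the stages of $p$ determines the lift with no further choices; the hypothesis that the $V_\alpha$ are diffeomorphic to $\bbR^{\dim(M)}$ is used exactly once, to make the object-presheaf a coproduct of representables so that Yoneda plus surjectivity on objects supplies the lift on objects. I would also remark that the argument never mentions the intersections $V_\alpha\cap V_\beta$, which is why goodness of the cover is irrelevant here.
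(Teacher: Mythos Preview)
Your proof is correct and follows essentially the same approach as the paper: both reduce to stage-wise acyclic fibrations in $\Grpd$ (fully faithful and surjective on objects), both exploit that each $V_\alpha$ is an object of $\CC$ so that the object-presheaf of $\widecheck{\mathcal{V}}$ is a coproduct of representables, and both build the lift by choosing preimages of the universal objects $(\alpha,\id_{V_\alpha})$ and letting Yoneda/naturality plus full faithfulness determine the rest. Your write-up is in fact slightly more explicit about the verification that $\widetilde{g}$ is functorial and natural, but the argument is the same.
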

\begin{proof}
We have to prove that for all acyclic fibrations $f : X\to Y$ in $\HH$ and all $\HH$-morphisms
$f^\prime : \widecheck{\mathcal{V}} \to Y$ there exists a lift
\begin{flalign}\label{eqn:tmplift}
\xymatrix{
&&X\ar[d]^-{f}\\
\ar@{-->}[rru]^-{f^{\prime\prime}}\widecheck{\mathcal{V}} \ar[rr]_-{f^\prime}&& Y
}
\end{flalign}
in $\HH$. Because the classes of cofibrations and acyclic fibrations 
coincide in both model structures on $\HH$, the morphism $f : X\to Y$ is a
stage-wise acyclic fibration in $\Grpd$, i.e.\ $f : X(U)\to Y(U)$ is surjective on objects and fully faithful.
\sk

Let us first analyze this lifting problem stage-wise: For any object $U$ in $\CC$,
we can always solve the stage-wise lifting problem
\begin{flalign}\label{eqn:cechlift}
\xymatrix{
&&X(U)\ar[d]^-{f}\\
\ar@{-->}[rru]^-{f^{\prime\prime}}\widecheck{\mathcal{V}}(U) \ar[rr]_-{f^\prime}&& Y(U)
}
\end{flalign}
because every groupoid is cofibrant, see e.g.\ \cite{Hollander}.
Such stage-wise liftings $f^{\prime\prime} : \widecheck{\mathcal{V}}(U)\to X(U)$ 
can be classified as follows:
For every object $(\alpha,\nu)$ in $\widecheck{\mathcal{V}}(U)$, choose any 
object $f^{\prime\prime}(\alpha,\nu)$ in $X(U)$ satisfying 
$f(f^{\prime\prime}(\alpha,\nu)) = f^{\prime}(\alpha,\nu)$. (There always exists such a choice because
$f$ is by hypothesis surjective on objects.) Such choice defines a unique groupoid
morphism $f^{\prime\prime} : \widecheck{\mathcal{V}}(U)\to X(U)$ 
because $f: X(U)\to Y(U)$ is fully faithful and hence \eqref{eqn:cechlift} enforces
a unique choice for the action of $f^{\prime\prime}$ on morphisms. 
\sk

The crucial point is now to prove that the stage-wise liftings 
$f^{\prime\prime} : \widecheck{\mathcal{V}}(U) \to X(U) $ can be chosen
to form a natural transformation, thus providing the stages of a morphism
$f^{\prime\prime} : \widecheck{\mathcal{V}} \to X$ in $\HH$. 
In order to do so, we take advantage of the fact that each $V_\alpha$ is assumed to be an object of $\CC$. 
Given any morphism of the form $\ell : U\to V_\alpha$ in $\CC$ (also regarded as a smooth map of manifolds), 
naturality is expressed as commutativity of the diagram
\begin{flalign}
\xymatrix{
\ar[d]_-{ \widecheck{\mathcal{V}}(\ell)}  \widecheck{\mathcal{V}}(V_\alpha) \ar[rr]^-{f^{\prime\prime}} && X(V_\alpha)\ar[d]^-{X(\ell)}\\
 \widecheck{\mathcal{V}}(U) \ar[rr]_-{f^{\prime\prime}} && X(U)
}
\end{flalign}
in $\Grpd$. Taking the object $(\alpha,\id_{V_\alpha})$ in $ \widecheck{\mathcal{V}}(V_\alpha)$,
this commutative diagram implies that 
\begin{flalign}\label{eqn:tmpcomdiag}
f^{\prime\prime}(\alpha,\ell) = 
X(\ell)\big(f^{\prime\prime}(\alpha,\id_{V_\alpha})\big)~.
\end{flalign}
Hence, the stages $f^{\prime\prime} : \widecheck{\mathcal{V}}(U) \to X(U)$ of a natural lift 
are uniquely determined by their actions on the objects $(\alpha,\id_{V_\alpha})$, for all $\alpha$.
Because there are no further coherence conditions for the restriction of $f^{\prime\prime}$ to objects of the form
$(\alpha,\id_{V_\alpha})$, it follows from our discussion at the end of the previous paragraph of this proof
that there exists a lift for our original problem \eqref{eqn:tmplift}, 
which is determined by \eqref{eqn:tmpcomdiag} and by the choice of a preimage 
along $f: X(V_\alpha) \to Y(V_\alpha)$ of the object $f^\prime(\alpha,\id_{V_\alpha})$,
for each $\alpha$.
As a consequence, $\widecheck{\mathcal{V}}$ is a cofibrant object in $\HH$.
\end{proof}

Because the results of this appendix hold true for every cover $\mathcal{V} = \{V_\alpha\subseteq M\}$
of $M$ by open subsets diffeomorphic to $\bbR^{\dim(M)}$ (without the requirement of a good open cover),
we can make a particular choice which eventually leads to a functorial cofibrant replacement.
For any manifold $M$, we take the open cover $\mathcal{V}_M = \{V\subseteq M\}$ given by {\em all} 
open subsets $V\subseteq M$ diffeomorphic to $\bbR^{\dim(M)}$. 
We denote the corresponding presheaf of \v{C}ech groupoids by
\begin{flalign}
\widecheck{M} := \widecheck{\mathcal{V}}_M
\end{flalign}
and notice that this defines a functor
\begin{flalign}
\widecheck{(-)} : \Man_{\hookrightarrow} \longrightarrow \HH
\end{flalign}
from the category $\Man_{\hookrightarrow}$ of (finite-dimensional and paracompact)
manifolds with morphisms given by open embeddings $f : M\to M^\prime$.
Explicitly, given any open embedding $f : M\to M^\prime$
the stage at $U$ in $\CC$ of the $\HH$-morphism $\widecheck{f} : \widecheck{M}\to \widecheck{M^\prime}$
is defined by 
\begin{flalign}
 \widecheck{f}~ :~ 
(V,\nu)=\Big(\xymatrix{
M & \ar[l]_-{\rho_V}V & \ar[l]_-{\nu}U
}\Big) ~\longmapsto  ~
\Big(\xymatrix{
M^\prime & \ar[l]_-{\rho_{f(V)}}f(V) & \ar[l]_-{f\vert_{V} \circ\nu}U
}\Big)=(f(V),f\vert_{V} \circ\nu)~.
\end{flalign}
Notice that the image $f(V)$ is an open subset of $M^\prime$ diffeomorphic to $\bbR^{\dim(M^\prime)}$ 
by hypothesis on the morphisms in $\Man_{\hookrightarrow}$.  (Notice that $\dim(M) = \dim(M^\prime)$.)
By $f\vert_{V} : V\to f(V)$ we denote the $\CC$-morphisms
associated to $f : M\to M^\prime$ by restricting the domain to $V$ and the codomain to the image $f(V)$.
For the particular cover $\mathcal{V}_M$, we denote the canonical $\HH$-morphism \eqref{eqn:canqmorph} by
\begin{flalign}
q_M : \widecheck{M} \longrightarrow \underline{M}~.
\end{flalign}
It is easy to see that $q_M$ are the components of a natural transformation
$q : \widecheck{(-)} \to \underline{(-)}$
of functors from $\Man_{\hookrightarrow}$ to $\HH$. Summing up, we obtained
\begin{cor}\label{cor:functorialcofibrepMan}
The functor $\widecheck{(-)} : \Man_{\hookrightarrow} \to \HH$
together with the natural transformation $q : \widecheck{(-)} \to \underline{(-)}$
define a functorial cofibrant replacement of manifolds (with open embeddings) in $\HH$.
\end{cor}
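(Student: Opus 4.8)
The plan is to assemble the statement from the constructions and results already established in this appendix; no new idea is needed beyond checking that four things hold: (a) $\widecheck{(-)}$ is a functor $\Man_{\hookrightarrow}\to\HH$; (b) $q$ is a natural transformation $\widecheck{(-)}\to\underline{(-)}$; (c) each $\widecheck{M}$ is cofibrant in the local model structure on $\HH$; and (d) each $q_M:\widecheck{M}\to\underline{M}$ is a weak equivalence. Items (c) and (d) will be immediate invocations of the two propositions proved just above, once we have checked that the assignment $M\mapsto \mathcal{V}_M$ — the cover of $M$ by \emph{all} open subsets diffeomorphic to $\bbR^{\dim(M)}$ — is legitimate and functorial in open embeddings; items (a) and (b) are then routine bookkeeping with the explicit formulas given for $\widecheck{f}$ and $q_M$.

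First I would record that $\mathcal{V}_M=\{V\subseteq M: V\text{ open},\ V\cong\bbR^{\dim(M)}\}$ is genuinely an open cover of $M$: every point lies in the domain of a chart, and such a domain can be shrunk to an open set diffeomorphic to $\bbR^{\dim(M)}$. Next, for an open embedding $f:M\to M^\prime$ and $V\in\mathcal{V}_M$, the image $f(V)$ is open in $M^\prime$ and diffeomorphic to $V\cong\bbR^{\dim(M)}=\bbR^{\dim(M^\prime)}$, so $f(V)\in\mathcal{V}_{M^\prime}$; this is exactly what makes the displayed formula for $\widecheck{f}$ well defined on objects $(V,\nu)$, and its action on morphisms of the \v{C}ech groupoids is then forced by the uniqueness of those morphisms. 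Functoriality of $\widecheck{(-)}$ follows because $\widecheck{\id_M}$ fixes each $(V,\nu)$ (since $\id_M$ fixes $V$ and $\id_M|_V=\id_V$) and because $(f^\prime\circ f)(V)=f^\prime(f(V))$ together with $(f^\prime\circ f)|_V=(f^\prime|_{f(V)})\circ(f|_V)$ yields $\widecheck{f^\prime\circ f}=\widecheck{f^\prime}\circ\widecheck{f}$ on objects, hence everywhere. Naturality of $q$ is the commutativity, at each stage $U$ in $\CC$, of the square relating $q_M$, $q_{M^\prime}$, $\widecheck{f}$ and $\underline{f}$, which one reads off directly from $q_M(V,\nu)=(\rho_V\circ\nu)$ and $\underline{f}(\rho)=f\circ\rho$.

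Finally, since $\mathcal{V}_M$ consists of open subsets diffeomorphic to $\bbR^{\dim(M)}$, Proposition \ref{propo:cofibcover} gives that $\widecheck{M}=\widecheck{\mathcal{V}_M}$ is cofibrant in the local model structure, and the (unlabelled) proposition immediately preceding it gives that $q_M:\widecheck{M}\to\underline{M}$ is a weak equivalence in that structure. Combined with (a) and (b), this is precisely the assertion that $(\widecheck{(-)},q)$ is a functorial cofibrant replacement of manifolds with open embeddings in $\HH$. I do not expect a genuine obstacle: the only point deserving a word of care is that the \v{C}ech data built from the (large) cover $\mathcal{V}_M$ still forms an honest groupoid-valued presheaf — which is fine, because for each fixed stage $U$ the objects $(V,\nu)$ form a set and the morphisms between them are unique, so $\widecheck{M}(U)$ is a small groupoid.
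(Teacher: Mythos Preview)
Your proposal is correct and follows exactly the paper's approach: the corollary is presented there as a direct ``summing up'' of the two preceding propositions (weak equivalence of $q_M$ and cofibrancy of $\widecheck{\mathcal{V}}$ for covers by $\bbR^{\dim(M)}$-opens) together with the explicit formulas for $\widecheck{f}$ and $q_M$ given just before the statement. You have simply spelled out the routine functoriality and naturality checks that the paper leaves implicit.
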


\section{\label{app:truncation}Fibrant replacement in the $(-1)$-truncation of $\HH/K$}
We describe an explicit, albeit rather involved, construction of
fibrant replacements in the $(-1)$-truncation of the canonical model structure on the over-category $\HH/K$,
for $K$ an object in $\HH$. These are needed to concretify our mapping stacks in Section \ref{subsec:concrete}.
For the general theory of $n$-truncations in simplicial model categories, 
see \cite[Section 5, Application II]{Barwick} and \cite[Section 7]{Rezk}. 
See also \cite{Lurie} for $n$-truncations in the language of $\infty$-categories.
For an explicit construction of $n$-truncations of simplicial presheaves 
we also refer to \cite[Section 3.7]{ToenVezzosi}.
In this appendix we shall freely use the concept of (left) Bousfield localization of a 
simplicial model category $\mathsf{C}$ with respect to a set of morphisms $S$, 
see \cite{Hirschhorn} or \cite[Section 5]{Dugger} for details.
We shall nevertheless briefly review the relevant terminology, such as $S$-local objects
and $S$-local equivalences. (The set of morphisms \eqref{eqn:S-1maps} at which we shall localize is denoted by $S_{-1}$,
hence we shall speak of $S_{-1}$-local objects and $S_{-1}$- local equivalences in what follows.)
\sk

Let us consider $\HH = \PSh(\CC,\Grpd)$ equipped with the local model structure of Theorem \ref{theo:localH}.
Let $K$ be any object in $\HH$ and form the over-category 
$\HH/K$. Recall that an object in $\HH/K$ is a morphism $f_X : X\to K$ in $\HH$
and a morphism from $f_X : X\to K$ to $f_Y : Y\to K$ in $\HH/K$ is a commutative diagram
\begin{flalign}\label{eqn:fXmapfY}
\xymatrix{
\ar[dr]_-{f_X}X\ar[rr]^-{f} && Y\ar[dl]^-{f_Y}\\
&K&
}
\end{flalign}
in $\HH$. There exists a canonical model structure on $\HH/K$:
A morphism \eqref{eqn:fXmapfY} in $\HH/K$  is a fibration, cofibration or weak equivalence
if and only if $f: X\to Y$ is a fibration, cofibration or weak equivalence in $\HH$.
In particular, the fibrant objects in $\HH/K$ are the fibrations $f_X : X\to K$ with target $K$ in $\HH$.
\sk

We localize this model structure on $\HH/K$ with respect to the set of $\HH/K$-morphisms
\begin{flalign}\label{eqn:S-1maps}
S_{-1} := 
\left\{\text{\parbox{6cm}{$\xymatrix{
\ar[rd]_-{f_{\{0,1\} \times \underline{U}}} \{0,1\} \times \underline{U} \ar[rr]^-{\iota \times \id_{\underline{U}}} && \Delta^1 \times \underline{U}\ar[dl]^-{f_{\Delta^1 \times \underline{U}}}\\
&K&
}$}}
: ~~\text{for all } f_{\Delta^1 \times \underline{U}} : \Delta^1 \times \underline{U} \to K~~
\right\}~,
\end{flalign}
where we note that the morphism $f_{\{0,1\} \times \underline{U}}$ is fixed by commutativity of the diagram.
By $\{0,1\}$ we denote the groupoid with two objects, $0$ and $1$ together with 
their identity morphisms, and $\iota : \{0,1\} \to \Delta^1$ is the obvious groupoid morphism.
We call the $S_{-1}$-localized model structure on $\HH/K$
the {\em $(-1)$-truncation of $\HH/K$}. For $K=\{\ast\}$, 
our definition is a special instance of the general construction of
$n$-truncations for simplicial presheaves given in \cite[Corollary 3.7.4]{ToenVezzosi}. 
(Choose $n=-1$ and truncate also to groupoid-valued presheaves.)
For completeness, we observe that by slightly adapting \cite[Section 3.7]{ToenVezzosi}
we can define sets $S_n$, for all $n\geq -1$, such that the localization of
$\HH/K$ with respect to $S_n$ describes the $n$-truncation of $\HH/K$.
In the following we shall only focus on the case $n=-1$ that we need
in the main part of this paper.
\sk

The fibrant objects in the $(-1)$-truncation of $\HH/K$
coincide with the so-called {\em $S_{-1}$-local objects} in $\HH/K$ 
(cf.\ \cite[Definition 3.1.4 and Proposition 3.4.1]{Hirschhorn}), 
which are defined using the mapping groupoids of $\HH/K$:
For objects $f_X$ and $f_Y$ of $\HH/K$, the mapping groupoid
$\Grpd_{\HH/K}(f_X , f_Y)$
has as objects all commutative diagrams \eqref{eqn:fXmapfY}
in $\HH$ and as morphisms all commutative diagrams
\begin{flalign}
\xymatrix{
\ar[rd]_-{f_X \circ \pr_X} X\times \Delta^1 \ar[rr]^-u && Y\ar[ld]^-{f_Y}\\
&K&
}
\end{flalign}
in $\HH$, where $\pr_X : X\times \Delta^1\to X$ is the projection $\HH$-morphism.
\begin{defi}\label{defi:S-1local}
An {\em $S_{-1}$-local object} in $\HH/K$ is 
a fibration $f_X : X\to K$ in $\HH$ (i.e.\ a fibrant object in the un-truncated model structure
on $\HH/K$) such that the induced morphism
\begin{flalign}\label{eqn:S-1local}
(\iota \times \id_{\underline{U}})^\ast : \Grpd_{\HH/K}(f_{\Delta^1\times\underline{U}} , f_X) \longrightarrow  
\Grpd_{\HH/K}(f_{\{0,1\}\times\underline{U}} , f_X) 
\end{flalign}
between mapping groupoids (see above) is a weak equivalence in $\Grpd$ 
for all $\HH/K$-morphisms in $S_{-1}$, see \eqref{eqn:S-1maps}, 
namely for all $\HH$-morphisms $f_{\Delta^1 \times \underline{U}} : \Delta^1 \times \underline{U} \to K$
and all objects $U$ in $\CC$.
\end{defi}
\begin{lem}\label{lem:S-1localobjects}
A fibration $f_X : X\to K$ in $\HH$ is an $S_{-1}$-local object in $\HH/K$ if and only if
$f_X : X(U) \to K(U)$ is fully faithful, for all objects $U$ in $\CC$.
\end{lem}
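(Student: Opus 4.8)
The statement to prove is Lemma \ref{lem:S-1localobjects}: a fibration $f_X : X \to K$ in $\HH$ is an $S_{-1}$-local object in $\HH/K$ if and only if each stage $f_X : X(U) \to K(U)$ is fully faithful. The plan is to unravel Definition \ref{defi:S-1local} concretely. First I would identify the two mapping groupoids appearing in \eqref{eqn:S-1local}. By the Yoneda-type adjunctions, an object of $\Grpd_{\HH/K}(f_{\Delta^1 \times \underline U}, f_X)$ is an $\HH$-morphism $g : \Delta^1 \times \underline U \to X$ with $f_X \circ g = f_{\Delta^1 \times \underline U}$, which — since $\Delta^1 \times \underline U$ is represented-ish — amounts to a morphism in the groupoid $X(U)$ lying over the fixed morphism of $K(U)$ determined by $f_{\Delta^1 \times \underline U}$; a morphism in this mapping groupoid is a homotopy, i.e. an element of $X(U)$ living over $\Delta^1 \times \Delta^1$. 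Similarly $\Grpd_{\HH/K}(f_{\{0,1\} \times \underline U}, f_X)$ is the groupoid whose objects are pairs of objects of $X(U)$ lying over the two prescribed objects of $K(U)$. The restriction map \eqref{eqn:S-1local} sends a morphism of $X(U)$ to its pair of source/target objects (rel the prescribed data in $K(U)$).

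Second, I would translate "\eqref{eqn:S-1local} is a weak equivalence in $\Grpd$ for all $f_{\Delta^1 \times \underline U}$" into a statement purely about $f_X : X(U) \to K(U)$. The key observation is that ranging over all $\HH$-morphisms $f_{\Delta^1 \times \underline U} : \Delta^1 \times \underline U \to K$ and all $U$ is the same as ranging over all objects $U$ in $\CC$ and all morphisms $k : k_0 \to k_1$ in the groupoid $K(U)$. For a fixed such $k$, the source groupoid in \eqref{eqn:S-1local} has objects = morphisms $x_0 \to x_1$ in $X(U)$ mapping to $k$ under $f_X$, and the target groupoid has objects = pairs $(x_0, x_1)$ with $f_X(x_i) = k_i$. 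Essential surjectivity of \eqref{eqn:S-1local} for every such $k$ then says precisely: for every pair $(x_0,x_1)$ over $(k_0,k_1)$ and every $k : k_0 \to k_1$, there is a morphism $x_0 \to x_1$ over $k$ — i.e. $f_X : \Hom_{X(U)}(x_0,x_1) \to \Hom_{K(U)}(k_0,k_1)$ is surjective. Full faithfulness of \eqref{eqn:S-1local} similarly forces this map on hom-sets to be injective. Conversely, if $f_X : X(U) \to K(U)$ is fully faithful for all $U$, then a short direct check shows \eqref{eqn:S-1local} is fully faithful and essentially surjective, hence a weak equivalence in $\Grpd$ by Theorem \ref{theo:groupoids}.

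Third, I would assemble the iff: combining the source/target analysis over all $U$ and all morphisms $k$ of $K(U)$ shows that \eqref{eqn:S-1local} being a weak equivalence for all data in $S_{-1}$ is equivalent to $f_X$ being stage-wise fully faithful. Note the fibrancy hypothesis $f_X : X \to K$ is built into both sides (an $S_{-1}$-local object is by Definition \ref{defi:S-1local} required to be a fibration), so it is a standing assumption throughout and needs no separate argument.

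\textbf{Main obstacle.} The routine-but-delicate part is the bookkeeping in identifying $\Grpd_{\HH/K}(f_{\{0,1\} \times \underline U}, f_X)$ and $\Grpd_{\HH/K}(f_{\Delta^1 \times \underline U}, f_X)$ explicitly — in particular keeping careful track of which data in $K(U)$ is held fixed versus allowed to vary, and checking that "ranging over all $f_{\Delta^1 \times \underline U}$" really does exhaust all morphisms $k$ in all the $K(U)$ (this uses $\Grpd(\Delta^1 \times \underline U, K) \simeq \Grpd(\Delta^1, K(U))$, the groupoid of morphisms of $K(U)$). Once this dictionary is set up correctly, the fully-faithful $\Leftrightarrow$ weak-equivalence translation is immediate from the characterization of weak equivalences of groupoids in Theorem \ref{theo:groupoids}. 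I expect no conceptual difficulty beyond this translation.
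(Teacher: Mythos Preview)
Your proposal is correct and follows essentially the same approach as the paper's proof: both use the Yoneda lemma to identify the mapping groupoids in \eqref{eqn:S-1local} concretely as groupoids of morphisms (resp.\ pairs of objects) in $X(U)$ lying over prescribed data in $K(U)$, observe that the restriction map forgets the arrow, and then read off that the weak equivalence condition (for all $f_{\Delta^1\times\underline U}$ and all $U$) is equivalent to stage-wise full faithfulness of $f_X$. The paper carries out the bookkeeping you flag as the main obstacle in slightly more detail, writing out the morphisms in both mapping groupoids explicitly, but the argument is the same.
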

\begin{proof}
Using the Yoneda lemma, 
the objects of the source groupoid in \eqref{eqn:S-1local} can be described by
commutative diagrams
\begin{flalign}
\xymatrix{
\Delta^1\ar[dr]_-{F_{\Delta^1}} \ar[rr]^-{F} && X(U)\ar[dl]^-{f_X}\\
&K(U)&
}
\end{flalign}
in $\Grpd$, where $F_{\Delta^1}$ is uniquely fixed by $f_{\Delta^1\times\underline{U}}$.
Equivalently, the objects are all $X(U)$-morphisms
\begin{flalign}
 \xymatrix{
 x_0 \ar[rr]^-{g} && x_1
 }
\end{flalign}
such that $f_X(g) = F_{\Delta^1}(0\to 1) : F_{\Delta^1}(0)\longrightarrow F_{\Delta^1}(1)$.
(In particular, $f_X(x_0) = F_{\Delta^1}(0)$ and $f_X(x_1) = F_{\Delta^1}(1)$.)
The  morphisms of this groupoid are  all commutative $X(U)$-diagrams
\begin{flalign}
 \xymatrix{
\ar[d]_-{h_0} x_0 \ar[rr]^-{g} && x_1\ar[d]^-{h_1}\\
x_0^\prime \ar[rr]_-{g^\prime}&& x_1^\prime
 }
\end{flalign}
such that $f_X(h_0) = \id_{F_{\Delta^1}(0)}$ and $f_X(h_1) = \id_{F_{\Delta^1}(1)}$.
By a similar argument, we find that the objects of the target groupoid in \eqref{eqn:S-1local}
are all  pairs of $X(U)$-objects (no arrow between them!)
\begin{flalign}
 \xymatrix{
 x_0  && x_1
 }
\end{flalign}
such that  $f_X(x_0) = F_{\Delta^1}(0)$ and $f_X(x_1) = F_{\Delta^1}(1)$,
and the morphisms are all $X(U)$-diagrams
\begin{flalign}
 \xymatrix{
\ar[d]_-{h_0} x_0 && x_1\ar[d]^-{h_1}\\
x_0^\prime && x_1^\prime
 }
\end{flalign}
such that $f_X(h_0) = \id_{F_{\Delta^1}(0)}$ and $f_X(h_1) = \id_{F_{\Delta^1}(1)}$.
The groupoid morphism in \eqref{eqn:S-1local} is 
\begin{flalign}
\nn (\iota \times \id_{\underline{U}})^\ast ~~:~~ \xymatrix{
 x_0 \ar[rr]^-{g} && x_1
 } &~~\longmapsto ~~\xymatrix{
 x_0  && x_1
 }\\
 \xymatrix{
\ar[d]_-{h_0} x_0 \ar[rr]^-{g} && x_1\ar[d]^-{h_1}\\
x_0^\prime \ar[rr]_-{g^\prime}&& x_1^\prime
 } &~~\longmapsto ~~ 
 \xymatrix{
\ar[d]_-{h_0} x_0  && x_1\ar[d]^-{h_1}\\
x_0^\prime && x_1^\prime
 }
\end{flalign}
This is a weak equivalence for all $\Grpd$-morphisms  $F_{\Delta^1} : \Delta^1\to K(U)$
if and only if $f_X : X(U) \to K(U)$ is fully faithful. The proof then follows by applying this result to all
objects $U$ in $\CC$.
\end{proof}

In the following we will use frequently an explicit characterization of the fibrations
$f: X\to Y$ in the local model structure on $\HH$ that was obtained in
\cite[Proposition 4.2]{Hollander3}.
\begin{propo}[Local fibrations in $\HH$ \cite{Hollander3}]\label{propo:localfibration}
A morphism $f : X\to Y$ in $\HH$ is a fibration in the local model structure
if and only if
\begin{enumerate}
\item $f : X(U) \to Y(U)$ is a fibration in $\Grpd$, for all objects $U$ in $\CC$; and
\item for all good open covers $\{U_i\subseteq U\}$, the commutative diagram
\begin{flalign}
\xymatrix{
\ar[d]_-{f} X(U)\ar[rr] && \holim_{\Grpd}^{} X(U_\bullet)\ar[d]^-{f}\\
Y(U)\ar[rr] && \holim_{\Grpd}^{} Y(U_\bullet)
}
\end{flalign}
is a homotopy pullback diagram in $\Grpd$.
\end{enumerate}
\end{propo}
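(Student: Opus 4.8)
The plan is to obtain the statement as a concrete instance of the description of fibrations in a left Bousfield localization, using that the local model structure is the localization of the global one at the set $S$ of \v{C}ech maps $\hocolim_\HH\underline{U_\bullet}\to\underline{U}$ of \eqref{eqn:locmorph} (Theorem \ref{theo:localH}). Since localization does not change the cofibrations and only enlarges the class of weak equivalences, every global acyclic cofibration is a local acyclic cofibration; hence a local fibration $f:X\to Y$ has the right lifting property against all global acyclic cofibrations and is therefore a global fibration, i.e.\ a stage-wise fibration in $\Grpd$ by Lemma \ref{lem:globmod}(ii). This already yields condition~1 in the ``only if'' direction, and it remains to show that conditions~1 and~2 together are equivalent to being a local fibration.

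For this I would write down an explicit set of generating acyclic cofibrations for the local model structure. Since the global model structure is cofibrantly generated (\cite{Hollander}) and $\HH$ is a $\Grpd$-model category — which follows from its symmetric monoidal model structure established in Appendix \ref{app:monoidal} together with the fact that the constant-presheaf functor $\Grpd\to\HH$ is left Quillen — the localization at $S$ is cofibrantly generated by $J_{\mathrm{glob}}\cup J_S$, where $J_{\mathrm{glob}}$ generates the global acyclic cofibrations and $J_S$ is built from $S$ in the standard way: for each \v{C}ech cover one factors its $S$-map as $L:=\hocolim_\HH\underline{U_\bullet}\hookrightarrow C_{\mathcal U}\xrightarrow{\sim}\underline{U}$ with the first map $j$ a cofibration and the second a (global) weak equivalence, and lets $J_S$ consist of the pushout-products $j\,\Box\,i$ (in the sense of the tensoring of $\HH$ over $\Grpd$) of the $j$'s with the generating cofibrations $i$ of $\Grpd$. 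Thus $f:X\to Y$ is a local fibration iff it has the right lifting property against $J_{\mathrm{glob}}$ — equivalently condition~1 — and against $J_S$.

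By the two-variable adjunction for the tensoring, right lifting against $J_S$ is equivalent to asking that for every \v{C}ech cover the pullback-hom map
\[
\Grpd(C_{\mathcal U},X)~\longrightarrow~\Grpd(L,X)\times_{\Grpd(L,Y)}\Grpd(C_{\mathcal U},Y)
\]
be an acyclic fibration in $\Grpd$. Here $L$ is cofibrant (a homotopy colimit of representables, which are cofibrant by Proposition \ref{propo:cofibcover} applied to one-element covers), $C_{\mathcal U}\xrightarrow{\sim}\underline U$ is a global weak equivalence, and — crucially — every presheaf of groupoids is globally fibrant, since every groupoid is fibrant in $\Grpd$ (Theorem \ref{theo:groupoids}(ii)). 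Hence $\Grpd(-,-)$ out of a cofibrant object already computes the derived mapping groupoid relevant to the localization; the enriched Yoneda lemma gives $\Grpd(\underline W,X)=X(W)$ on representables; and turning the homotopy colimit into a homotopy limit gives $\Grpd(L,X)\simeq\holim_\Grpd X(U_\bullet)$, for which Proposition \ref{propo:descentcat} supplies the explicit descent-category model, under which the canonical map $\Grpd(C_{\mathcal U},X)\simeq X(U)\to\Grpd(L,X)$ becomes the restriction map of Theorem \ref{theo:localH}(iv). Assuming condition~1, the map displayed above is automatically a fibration in $\Grpd$ (again from the two-variable adjunction, with $\emptyset\to L$ a cofibration and $f$ a fibration), so it is an acyclic fibration iff it is a weak equivalence iff the square in condition~2 is a homotopy pullback in $\Grpd$. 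Reading this chain of equivalences in either direction proves the proposition.

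The step I expect to be the main obstacle is the bookkeeping of the second and third paragraphs: verifying that the localization of $\HH$ really is cofibrantly generated with generating acyclic cofibrations of the stated pushout-product form (this rests on $\HH$ being a genuine $\Grpd$-model category, hence on Appendix \ref{app:monoidal}), and then checking that the resulting mapping-groupoid condition matches ``the square is a homotopy pullback'' on the nose rather than up to some unrecorded coherence — which requires pinning down the model for $\holim_\Grpd X(U_\bullet)$ via Proposition \ref{propo:descentcat} together with its compatibility with the restriction map, and comparing with the explicit homotopy fiber products of Proposition \ref{propo:homfibprod} (whose groupoid-level version is analogous). The direction ``conditions 1 and 2 $\Rightarrow$ local fibration'' is the substantive half, since that is where one must know that these two families of lifting problems exhaust all local acyclic cofibrations; the converse then falls out of the same computation read backwards.
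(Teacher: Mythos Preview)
The paper does not supply a proof of this proposition: it is quoted as \cite[Proposition 4.2]{Hollander3} and stated without argument, followed only by a remark unpacking item~2. There is therefore no in-paper proof to compare against.

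Your proposal follows the standard strategy for characterizing fibrations in a left Bousfield localization and is sound in outline; this is essentially how Hollander proves it in \cite{Hollander3}. One caveat worth flagging: Hirschhorn's construction of generating acyclic cofibrations for a localization (the ``horns on $S$'') is formulated for \emph{simplicial} model categories, whereas you work directly with the $\Grpd$-enrichment. You would need either to pass through the nerve to the simplicial setting (as Hollander does) or to verify that the analogous pushout-product construction with the generating cofibrations of $\Grpd$ really does produce a generating set for the local acyclic cofibrations --- this is plausible but not automatic from Appendix~\ref{app:monoidal} alone. Your own assessment of the obstacles is accurate: the bookkeeping that identifies the pullback-hom condition with ``the square is a homotopy pullback,'' and the matching of the descent-category model of Proposition~\ref{propo:descentcat} with $\Grpd(L,X)$, is where the genuine work lies.
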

\begin{rem}
Item 2.\ is equivalent to the condition that the canonical morphism
\begin{flalign}
X(U)\longrightarrow Y(U) \times^h_{ \holim_{\Grpd}^{} Y(U_\bullet)} \holim_{\Grpd}^{} X(U_\bullet)
\end{flalign}
to the homotopy fiber product is a weak equivalence in $\Grpd$, for all good open covers $\{U_i\subseteq U\}$.
The homotopy fiber product in $\Grpd$ is analogous to the one in $\HH$, see Proposition \ref{propo:homfibprod}.
\end{rem}

We prove a technical lemma that provides us with a useful characterization of the
$S_{-1}$-local objects up to (un-truncated) weak equivalences in $\HH/K$.
Given any object $f_X : X\to K$ in $\HH/K$, we denote by $\Imm(f_X)$ the object in $\HH$ given by
the full image sub-presheaf of $K$. Explicitly, for any object $U$ in $\CC$, the groupoid
$\Imm(f_X)(U)$ has as objects all objects $f_X(x)$ in $K(U)$, for all objects
$x$ in $X(U)$, and as morphisms all morphisms $k : f_X(x) \to f_X(x^\prime)$
in $K(U)$. There is a canonical commutative diagram
\begin{flalign}\label{eqn:S-1localimage}
\xymatrix{
\ar[dr]_-{f_X}X \ar[rr]^-{\widehat{f}_X} && \Imm(f_X)\ar[dl]^-{f_{\Imm(f_X)}}\\
&K&
}
\end{flalign}
in $\HH$, i.e.\ a canonical $\HH/K$-morphism $\widehat{f}_X$
from  $f_X : X\to K$ to  $f_{\Imm(f_X)} : \Imm(f_X) \to K$.
\begin{lem}\label{lem:S-1classification}
Let $f_{X} : X\to K$ be any $S_{-1}$-local object in $\HH/K$. Then
$f_{\Imm(f_X)} : \Imm(f_X) \to K$ is an $S_{-1}$-local object in $\HH/K$
and the canonical $\HH/K$-morphism $\widehat{f}_X$ in \eqref{eqn:S-1localimage}
is a (un-truncated) weak equivalence.
\end{lem}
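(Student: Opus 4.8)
The plan is to prove the two claims separately, using Lemma~\ref{lem:S-1localobjects} as the key characterization of $S_{-1}$-local objects. Recall that $f_{\Imm(f_X)} : \Imm(f_X) \to K$ being $S_{-1}$-local amounts, by Lemma~\ref{lem:S-1localobjects}, to two things: (a) $f_{\Imm(f_X)}$ is a fibration in the local model structure on $\HH$, and (b) each stage $f_{\Imm(f_X)} : \Imm(f_X)(U) \to K(U)$ is fully faithful. Claim (b) is essentially built into the definition of the full image sub-presheaf: by construction $\Imm(f_X)(U)$ has the \emph{same} morphism sets as $K(U)$ between any two objects in the image, so the inclusion $\Imm(f_X)(U) \hookrightarrow K(U)$ is fully faithful by fiat. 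So the real content on the local-object side is the fibrancy statement (a), and the real content of the lemma overall is that $\widehat{f}_X$ is an (un-truncated) weak equivalence in $\HH/K$, i.e.\ a weak equivalence in $\HH$.

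For the weak-equivalence claim, I would argue stage-wise using Theorem~\ref{theo:localH}(v) is \emph{not} directly available (we do not yet know $\Imm(f_X)$ is fibrant), so instead I would show directly that $\widehat{f}_X : X \to \Imm(f_X)$ induces an isomorphism on sheaves of homotopy groups, invoking Theorem~\ref{theo:localH}(i). Actually the cleanest route: since $f_X$ is $S_{-1}$-local, each stage $f_X : X(U) \to K(U)$ is fully faithful by Lemma~\ref{lem:S-1localobjects}. The factorization $X(U) \xrightarrow{\widehat{f}_X} \Imm(f_X)(U) \hookrightarrow K(U)$ has the second map fully faithful, the composite fully faithful, hence $\widehat{f}_X : X(U) \to \Imm(f_X)(U)$ is fully faithful; and it is essentially surjective (indeed surjective on objects) by the very definition of the full image. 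Therefore each stage $\widehat{f}_X : X(U) \to \Imm(f_X)(U)$ is a weak equivalence in $\Grpd$, so $\widehat{f}_X$ is a weak equivalence in the \emph{global} model structure on $\HH$, hence also in the local one (global weak equivalences are local weak equivalences, cf.\ Theorem~\ref{theo:localH}(i) together with the fact that stage-wise weak equivalences induce isomorphisms on the presheaves of homotopy groups and a fortiori on their sheafifications). This disposes of the weak-equivalence assertion.

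For the fibrancy of $f_{\Imm(f_X)} : \Imm(f_X) \to K$, I would use Proposition~\ref{propo:localfibration}: I must check (1) each stage $f_{\Imm(f_X)} : \Imm(f_X)(U) \to K(U)$ is a fibration in $\Grpd$, and (2) for every good open cover $\{U_i \subseteq U\}$ the square comparing $\Imm(f_X)(U)$, $K(U)$ and the homotopy limits over the \v{C}ech nerve is a homotopy pullback in $\Grpd$. For (1): a fully faithful functor $\Imm(f_X)(U) \to K(U)$ is automatically a fibration in $\Grpd$ in the sense of Theorem~\ref{theo:groupoids}(ii), since given $x$ in $\Imm(f_X)(U)$ and a morphism $k : f_{\Imm(f_X)}(x) \to y$ in $K(U)$ — wait, one needs a lift \emph{of objects} too, but $\Imm(f_X)(U)$ and $K(U)$ have the same objects-in-the-image and full faithfulness lets us transport $k$ back. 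More carefully: full faithfulness alone does not give fibrancy (one needs essential surjectivity up to the relevant lifting), so here I would instead use that $f_X : X \to K$ is a fibration, that $\widehat{f}_X$ is a weak equivalence, and the fact that in any model category a map weakly equivalent (over the target) to a fibration, through a commuting triangle, need not be a fibration — so this needs the two-out-of-three and a factorization argument, or more simply: since $\widehat f_X$ is a weak equivalence and $f_X = f_{\Imm(f_X)} \circ \widehat f_X$ is a fibration, and $\Imm(f_X) \to K$ is a monomorphism of presheaves (injective on objects and morphisms stage-wise because it's an inclusion of a full subpresheaf into $K$), one checks the lifting property for $f_{\Imm(f_X)}$ directly against acyclic cofibrations by pulling lifts back along $\widehat f_X$ and using full faithfulness to land in the image. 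For (2): the homotopy pullback condition for $f_{\Imm(f_X)}$ follows from that for $f_X$ (which holds since $f_X$ is a fibration) together with the weak equivalences $X(U) \simeq \Imm(f_X)(U)$ and the induced weak equivalences $\holim_{\Grpd} X(U_\bullet) \simeq \holim_{\Grpd} \Imm(f_X)(U_\bullet)$ (homotopy limits preserve objectwise weak equivalences), because homotopy pullbacks are invariant under weak equivalences of the defining diagram.

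The main obstacle I anticipate is verifying condition (1) of Proposition~\ref{propo:localfibration} cleanly — precisely, showing each stage $\Imm(f_X)(U) \to K(U)$ is a $\Grpd$-fibration. The subtlety is that the full image subpresheaf inclusion is fully faithful but the fibration condition in $\Grpd$ (path-lifting) requires being able to lift morphisms whose source object lies in the image; this works because if $f_{\Imm(f_X)}(x) \to y$ is a morphism in $K(U)$ with $x$ in the image, then $y$ need not be in the image of $f_X$, so one cannot naively lift. The resolution is that $\widehat f_X : X(U) \to \Imm(f_X)(U)$ is an equivalence and $f_X : X(U) \to K(U)$ \emph{is} a $\Grpd$-fibration, and a functor equivalent (via a functor over $K(U)$) to a fibration, when the comparison is an equivalence that is moreover bijective-on-objects onto a full subgroupoid, inherits the path-lifting property — I would make this precise by a short diagram chase, transporting a would-be lift from $X(U)$ along $\widehat f_X$ and using that $\widehat f_X$ is (surjective on objects and) full. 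Once (1) is in hand, (2) and the weak-equivalence claim are formal, so the whole lemma reduces to this one careful check plus bookkeeping.
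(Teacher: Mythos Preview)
Your overall structure is sound and matches the paper's: first show $\widehat{f}_X$ is a stage-wise weak equivalence in $\Grpd$ (hence a weak equivalence in $\HH$), then verify the two conditions of Proposition~\ref{propo:localfibration} for $f_{\Imm(f_X)}$. Your argument for the weak equivalence of $\widehat{f}_X$ is exactly the paper's.

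Where you and the paper diverge is in the two fibration conditions. On condition~(1), you are overcomplicating things. The worry you raise---that for $k : f_X(x) \to y$ in $K(U)$ the target $y$ might not lie in the image of $f_X$---dissolves immediately once you remember that $f_X : X(U) \to K(U)$ is itself a $\Grpd$-fibration (this is part of being $S_{-1}$-local). Lifting $k$ to some $g : x \to x'$ in $X(U)$ gives $y = f_X(x')$, so $y$ \emph{is} in the image; then $k$ is already a morphism of $\Imm(f_X)(U)$ and lifts itself. This is the one-line argument the paper gestures at with ``use that $f_X$ is by hypothesis a stage-wise fibration''; no diagram chase or transfer-along-equivalence is needed.

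On condition~(2), your approach is different from the paper's and in fact cleaner. The paper computes the homotopy fiber product $K(U) \times^h_{\holim K(U_\bullet)} \holim \Imm(f_X)(U_\bullet)$ explicitly (unpacking objects and morphisms via Propositions~\ref{propo:descentcat} and~\ref{propo:homfibprod}) and checks by hand that the canonical map from $\Imm(f_X)(U)$ is fully faithful and essentially surjective, using the $S_{-1}$-locality of $f_X$ in the essential-surjectivity step. Your argument---that $\widehat{f}_X$ induces stage-wise weak equivalences, hence weak equivalences on $\holim$'s and on the associated homotopy fiber products, and then 2-out-of-3 transfers the homotopy-pullback property from $f_X$ to $f_{\Imm(f_X)}$---is correct and avoids that computation entirely. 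The paper's explicit calculation has the virtue of making the mechanism visible, but your invariance argument is the more efficient route.
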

\begin{proof}
By Lemma \ref{lem:S-1localobjects}, $f_X : X\to K$ is stage-wise fully faithful.
By construction of $\Imm(f_X)$, we then find that $\widehat{f}_X : X\to \Imm(f_X)$ is
stage-wise fully faithful and also  stage-wise surjective on objects, 
hence a weak equivalence in $\HH/K$. 
\sk

It remains to show that $f_{\Imm(f_X)} : \Imm(f_X) \to K$ is an $S_{-1}$-local object. For this 
we make use of Lemma \ref{lem:S-1localobjects} and Proposition \ref{propo:localfibration}.
It is immediately clear by construction that $f_{\Imm(f_X)} : \Imm(f_X) \to K$ is stage-wise fully faithful and a stage-wise
fibration. (To prove the latter statement, use that $f_X : X\to K$ is by hypothesis a stage-wise fibration.)
Thus, it remains to prove that
\begin{flalign}\label{eqn:canmaptmp}
\Imm(f_X)(U) \longrightarrow K(U) \times^h_{ \holim_{\Grpd}^{} K(U_\bullet)} \holim_{\Grpd}^{} \Imm(f_X)(U_\bullet)
\end{flalign}
is a weak equivalence in $\Grpd$, for all good open covers $\{U_i\subseteq U\}$.
The target groupoid can be computed explicitly by using Propositions \ref{propo:descentcat}
and \ref{propo:homfibprod}. One finds that its objects are tuples
\begin{flalign}
\big( y , \big(\{ f_X(x_i)\}, \{g_{ij}\}\big), \{k_i\}  \big)~,
\end{flalign}
where $y$ is an object in $K(U)$, $x_i$ are objects in $X(U_i)$,
$g_{ij} :  f_X(x_i)\vert_{U_{ij}}^{}\to f_X(x_j)\vert_{U_{ij}}^{}$ are morphisms
in $K(U_{ij})$, and $k_i : y\vert_{U_i}^{} \to f_X(x_i)$ are morphisms in $K(U_i)$.
This data has to satisfy 
$g_{ii} =\id_{f_{X}(x_i)}$, for all $i$, 
$g_{jk}\vert_{U_{ijk}}^{} \circ g_{ij}\vert_{U_{ijk}}^{} = g_{ik}\vert_{U_{ijk}}^{}$, for all $i,j,k$,
and $g_{ij}\circ k_{i}\vert_{U_{ij}}^{} = k_{j}\vert_{U_{ij}}^{}$, for all $i,j$.
The morphisms of the target groupoid are tuples
\begin{flalign}
\big(h , \{h_i\}\big) : \big( y , \big(\{ f_X(x_i)\}, \{g_{ij}\}\big), \{k_i\}  \big) \longrightarrow 
\big( y^\prime , \big(\{ f_X(x^\prime_i)\}, \{g^\prime_{ij}\}\big), \{k^\prime_i\}  \big)~,
\end{flalign}
where $h : y\to y^\prime$ is a morphism in $K(U)$ and
$h_i : f_{X}(x_i) \to f_X(x_i^\prime)$ are morphisms in $K(U_i)$,
satisfying  $g^\prime_{ij}\circ h_i\vert_{U_{ij}}^{} = h_{j}\vert_{U_{ij}}^{} \circ g_{ij}$,
for all $i,j$, and $k_i^\prime \circ h\vert_{U_i}^{} = h_i\circ k_i$, for all $i$.
The canonical morphism \eqref{eqn:canmaptmp} is explicitly given by
\begin{flalign}
\nn f_X(x) &~ \longmapsto~ \big( f_X(x), \big(\{f_X(x\vert_{U_i}^{})\}, \{\id\}\big),\{\id\}  \big)~,\\
\big(h : f_X(x) \to f_X(x^\prime)\big)& ~\longmapsto ~\big(h, \{h\vert_{U_i}^{}\}\big)~,
\end{flalign}
from which one immediately observes that it is fully faithful. Using that $f_X$ is $S_{-1}$-local, we can
show that the canonical morphism \eqref{eqn:canmaptmp} is also essentially surjective and thus a weak equivalence:
Given any object $( y , (\{ f_X(x_i)\}, \{g_{ij}\}), \{k_i\}  )$ 
of the target groupoid in \eqref{eqn:canmaptmp}, we obtain from 
the property that $f_X$ is stage-wise fully faithful
an object $(y, (\{x_i\} , \{\widehat{g}_{ij}\}), \{k_i\} )$ of the homotopy fiber product
\begin{flalign}\label{eqn:tmpfiberprod}
K(U) \times^h_{ \holim_{\Grpd}^{} K(U_\bullet)} \holim_{\Grpd}^{} X(U_\bullet)~
\end{flalign}
defined by means of the $H$-morphism $f_{X} : X\to K$.
Explicitly, the $X(U_{ij})$-morphisms $\widehat{g}_{ij} : x_i\vert_{U_{ij}}^{} \to x_{j}\vert_{U_{ij}}^{}$
are uniquely determined by full faithfulness and $f_X(\widehat{g}_{ij}) = g_{ij}$.
Because $f_X$ is a (local) fibration, there exists a morphism $(h , \{\widehat{h}_i\})$ in \eqref{eqn:tmpfiberprod}
from $(f_X(x), (\{x\vert_{U_i}^{}\},\{\id\}), \{\id\})$,
where $x$ is an object in $X(U)$, to $(y, (\{x_i\} , \{\widehat{g}_{ij}\}), \{k_i\} )$.
Explicitly, $h : f_X(x) \to y$ is a morphism in $K(U)$ and $\widehat{h}_{i} : x\vert_{U_i}^{} \to x_i$ are
morphisms in $X(U_i)$, such that $\widehat{g}_{ij}\circ \widehat{h}_i\vert_{U_{ij}}^{} = \widehat{h}_{j}\vert_{U_{ij}}^{}$,
for all $i,j$, and $k_i\circ h\vert_{U_i}^{} = f_X(\widehat{h}_i) $, for all $i$.
The associated morphism $(h,\{f_X(\widehat{h}_i)\})$ in the target groupoid of \eqref{eqn:canmaptmp} 
from $(f_X(x), (\{ f_X(x\vert_{U_i}^{})\},\{\id\}) ,\{\id\})$ to $( y , (\{ f_X(x_i)\}, \{g_{ij}\}), \{k_i\}  )$ 
proves that the canonical 
morphism in \eqref{eqn:canmaptmp} is essentially surjective. This completes our proof.
\end{proof}

The weak equivalences in the $(-1)$-truncation of $\HH/K$ are by construction
the so-called {\em $S_{-1}$-local equivalences} in $\HH/K$.
We say that a morphism \eqref{eqn:fXmapfY} in $\HH/K$ is an $S_{-1}$-local equivalence if
\begin{flalign}
Q(f)^\ast : \Grpd_{\HH/K}(f_{Q(Y)}, f_Z) \longrightarrow \Grpd_{\HH/K}(f_{Q(X)}, f_Z)
\end{flalign}
is a weak equivalence in $\Grpd$, for all $S_{-1}$-local objects $f_Z : Z\to K$ in $\HH/K$.
Here $Q$ is a cofibrant replacement functor in $\HH$
and $Q(f)$ is the corresponding $\HH/K$-morphism
\begin{flalign}
\xymatrix{
\ar@/_2pc/[ddrr]_-{f_{Q(X)}} \ar[dr]^-{q_X}Q(X) \ar[rrrr]^-{Q(f)} &&&& Q(Y)\ar[dl]_-{q_Y} \ar@/^2pc/[ddll]^-{f_{Q(Y)}} \\
&\ar[dr]^-{f_X} X\ar[rr]^-{f} &&Y\ar[dl]_-{f_Y}&\\
&&K&&
}
\end{flalign}
\begin{rem}\label{rem:simpleS-1eqv}
By Lemma \ref{lem:S-1classification}, $S_{-1}$-local equivalences may be equivalently characterized
by the condition that the groupoid morphism
\begin{flalign}
Q(f)^\ast : \Grpd_{\HH/K}(f_{Q(Y)}, f_{\Imm(f_Z)}) \longrightarrow \Grpd_{\HH/K}(f_{Q(X)}, f_{\Imm(f_Z)})
\end{flalign}
is a weak equivalence, for all $S_{-1}$-local objects $f_Z : Z\to K$ in $\HH/K$.
This condition is easier to analyze because $f_{\Imm(f_Z)} : \Imm(f_Z) \to K$
is not only an $S_{-1}$-local object, but also a stage-wise injection on objects.
\end{rem}
\begin{lem}\label{lem:emptyorpoint}
Let $f_X : X\to K$ be any object in $\HH/K$ and $f_Z : Z\to K$ any $S_{-1}$-local object in $\HH/K$.
Then the mapping groupoid $\Grpd_{\HH/K}(f_X,f_{\Imm(f_Z)})$ is either $\emptyset$ or (isomorphic to) $\{\ast\}$.
\end{lem}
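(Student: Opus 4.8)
The statement asks us to show that $\Grpd_{\HH/K}(f_X, f_{\Imm(f_Z)})$ is either empty or isomorphic to the terminal groupoid $\{\ast\}$. The key structural fact, noted already in Remark~\ref{rem:simpleS-1eqv}, is that $f_{\Imm(f_Z)} : \Imm(f_Z)\to K$ is not merely $S_{-1}$-local (hence stage-wise fully faithful by Lemma~\ref{lem:S-1localobjects}) but also stage-wise \emph{injective on objects}, by the very construction of the full image sub-presheaf. The plan is to exploit this: a stage-wise fully faithful and stage-wise injective-on-objects morphism is, at each stage $U$, an isomorphism of $Z(U)$ (equivalently $\Imm(f_Z)(U)$) onto a full sub-groupoid of $K(U)$. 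An object of the mapping groupoid $\Grpd_{\HH/K}(f_X, f_{\Imm(f_Z)})$ is an $\HH$-morphism $f : X\to \Imm(f_Z)$ with $f_{\Imm(f_Z)}\circ f = f_X$; stage-wise this asks for a lift of $f_X : X(U)\to K(U)$ through the inclusion $\Imm(f_Z)(U)\hookrightarrow K(U)$.

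\textbf{Step 1: uniqueness of objects.} I would first show that if a lift exists, it is unique. Since $f_{\Imm(f_Z)}$ is stage-wise injective on objects, the action of $f$ on objects at any stage $U$ is forced: $f(x)$ must be the unique object of $\Imm(f_Z)(U)$ lying over $f_X(x)$. Since $f_{\Imm(f_Z)}$ is stage-wise faithful, the action of $f$ on morphisms is forced too: $f(g)$ is the unique morphism over $f_X(g)$. Naturality in $U$ of the resulting assignment is then automatic because it is inherited, via injectivity/faithfulness of $f_{\Imm(f_Z)}$, from naturality of $f_X$. Hence there is \emph{at most one} object in the mapping groupoid.

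\textbf{Step 2: uniqueness of morphisms.} Next I would show that between any two objects of $\Grpd_{\HH/K}(f_X, f_{\Imm(f_Z)})$ there is exactly one morphism. A morphism is an $\HH$-morphism $u : X\times\Delta^1\to \Imm(f_Z)$ over $K$ (precomposed with the projection to $X$, as in the definition of the mapping groupoid of $\HH/K$ just before Definition~\ref{defi:S-1local}). By Step~1 there is at most one object, so we only need to count the automorphisms of that object; equivalently, using stage-wise full faithfulness of $f_{\Imm(f_Z)}$, such a $u$ is uniquely determined by $f_{\Imm(f_Z)}\circ u$, which is constrained over $K$ to be a fixed morphism (the constant/identity natural transformation on $f_X$ composed with the projection). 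Full faithfulness forces $u$ to be the identity natural transformation. Thus the unique object has trivial automorphism group, so if the mapping groupoid is non-empty it is isomorphic to $\{\ast\}$; otherwise it is $\emptyset$.

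\textbf{Main obstacle.} The only subtlety I anticipate is bookkeeping: carefully checking that the forced stage-wise data in Step~1 really does assemble into a natural transformation (i.e.\ an $\HH$-morphism), and similarly that the forced $u$ in Step~2 is a genuine $\HH$-morphism $X\times\Delta^1\to\Imm(f_Z)$. Both reduce to the cancellation property ``$f_{\Imm(f_Z)}$ is a stage-wise monomorphism and stage-wise faithful,'' which lets one transport any naturality square or commuting diagram from $K$ back to $\Imm(f_Z)$. I would phrase the whole argument at this level of abstraction rather than manipulating cocycles, since nothing here depends on the specific presheaves involved --- only on the two properties of $f_{\Imm(f_Z)}$ recorded in Remark~\ref{rem:simpleS-1eqv}.
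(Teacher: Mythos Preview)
Your proposal is correct and follows essentially the same approach as the paper: both arguments exploit that $f_{\Imm(f_Z)}$ is stage-wise injective on objects and stage-wise fully faithful to first show that any two objects of the mapping groupoid coincide, and then that the only automorphism is the identity. The paper phrases Step~1 as taking two objects $f,\widetilde f$ and proving $f=\widetilde f$ directly (rather than as ``the lift is forced''), but the logical content is identical.
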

\begin{proof}
We have to show that whenever $\Grpd_{\HH/K}(f_X,f_{\Imm(f_Z)})$ is not empty it is (isomorphic to) $\{\ast\}$.
Consider two objects in $\Grpd_{\HH/K}(f_X,f_{\Imm(f_Z)})$, i.e.\
\begin{flalign}
\xymatrix{
\ar[dr]_-{f_X }X\ar[rr]^-{f} && \Imm(f_Z)\ar[dl]^-{f_{\Imm(f_Z)}}&&\ar[dr]_-{f_X }X\ar[rr]^-{\widetilde{f}} && \Imm(f_Z)\ar[dl]^-{f_{\Imm(f_Z)}}\\
&K&&&&K&
}
\end{flalign}
and let $U$ be any object in $\CC$. Commutativity of the diagrams implies that
$f_{\Imm(f_Z)} f(x)  = f_X(x) = f_{\Imm(f_Z)} \widetilde{f}(x)$, hence
by stage-wise injectivity on objects of $f_{\Imm(f_Z)}$ we find that $f(x) = \widetilde{f}(x)$, for all
objects $x$ in $X(U)$. Given any morphism $g: x\to x^\prime$ in $X(U)$,
this result implies that the two morphisms
$f(g) : f(x) \to f(x^\prime)$ and $\widetilde{f}(g) : \widetilde{f}(x) \to \widetilde{f}(x^\prime)$
in $\Imm(f_Z)(U)$ have the same source and target. Commutativity of the diagrams
implies that $f_{\Imm(f_Z)} f(g) = f_X(g) = f_{\Imm(f_Z)} \widetilde{f}(g)$, hence we find by 
stage-wise full faithfulness of $f_{\Imm(f_Z)}$ that $f(g)=\widetilde{f}(g)$. As a consequence, 
$f = \widetilde{f}$ which implies that the groupoid $\Grpd_{\HH/K}(f_X,f_{\Imm(f_Z)})$ cannot have more than one object.
\sk

It remains to prove that the object $f$ 
in $\Grpd_{\HH/K}(f_X,f_{\Imm(f_Z)})$  cannot have non-trivial automorphisms.
Consider any morphism 
\begin{flalign}
\xymatrix{
\ar[rd]_-{f_X \circ \pr_X} X\times \Delta^1 \ar[rr]^-u && \Imm(f_Z) \ar[ld]^-{f_{\Imm(f_Z)}}\\
&K&
}
\end{flalign}
in $\Grpd_{\HH/K}(f_X,f_{\Imm(f_Z)})$, which is necessarily an automorphism of $f$ due to the result above. 
Let $U$ be any object in $\CC$. This automorphism of $f$ is uniquely
determined by the $\Imm(f_Z)(U)$-morphisms 
$\eta_x := u(\id_x \times (0\to1)) : f(x)=u(x,0) \to u(x,1)=f(x)$, for all objects $x$ in $X(U)$.
Commutativity of the diagram implies that
$f_{\Imm(f_Z)}(\eta_x) = \id_{f_X(x)}$, for all objects $x$ in $X(U)$.
As a consequence of stage-wise full faithfulness of $f_{\Imm(f_Z)}$, 
this implies $\eta_x=\id_{f(x)}$, for all objects $x$ in $X(U)$. 
Hence, the only automorphism of $f$ is the identity.
\end{proof}

We can now give a sufficient condition for a morphism
in $\HH/K$ to be an $S_{-1}$-local equivalence.
Recall from Definition \ref{def:sheavesofhomgroups} the notion 
of sheaves of homotopy groups associated to objects in $\HH$.
\begin{propo}\label{propo:S-1equivalences}
A morphism \eqref{eqn:fXmapfY}
in $\HH/K$ is an $S_{-1}$-local equivalence in $\HH/K$
if  $f : X\to Y$ induces an epimorphism of sheaves
of $0$-th homotopy groups.
\end{propo}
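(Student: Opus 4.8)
The plan is to carry out the reduction sketched in Remark~\ref{rem:simpleS-1eqv}. Fixing an $S_{-1}$-local object $f_Z : Z\to K$ in $\HH/K$, it suffices by that remark to show that the functor $Q(f)^\ast : \Grpd_{\HH/K}(f_{Q(Y)},f_{\Imm(f_Z)}) \to \Grpd_{\HH/K}(f_{Q(X)},f_{\Imm(f_Z)})$ is a weak equivalence in $\Grpd$. By Lemma~\ref{lem:emptyorpoint} both mapping groupoids are either empty or isomorphic to $\{\ast\}$, and running through the cases one sees that a functor between groupoids of this shape is a weak equivalence \emph{unless} its source is empty and its target is $\{\ast\}$. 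Since, for $W$ either $X$ or $Y$, the objects of $\Grpd_{\HH/K}(f_W,f_{\Imm(f_Z)})$ are precisely the commutative triangles exhibiting $f_W$ as a factorization through the inclusion $\iota := f_{\Imm(f_Z)} : \Imm(f_Z)\to K$, the statement thus reduces to the implication: if $f_{Q(X)}$ factors through $\iota$, then so does $f_{Q(Y)}$.

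To make this tractable I would first rewrite ``factors through $\iota$'' at the level of the presheaf of sets $\pi_0$. By Lemma~\ref{lem:S-1classification}, $f_{\Imm(f_Z)}$ is again $S_{-1}$-local, hence a fibration in $\HH$ (Definition~\ref{defi:S-1local}) and in particular a stage-wise fibration in $\Grpd$ (Proposition~\ref{propo:localfibration}); it is also stage-wise fully faithful (Lemma~\ref{lem:S-1localobjects}) and stage-wise injective on objects by construction. Hence, for each object $U$ in $\CC$, $\Imm(f_Z)(U)$ is a full subgroupoid of $K(U)$ whose set of objects is closed under isomorphism in $K(U)$ (a stage-wise fibration of groupoids has a replete image). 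From this one reads off that $f_{Q(W)}$ factors through $\iota$ exactly when, for every $U$, the map $\pi_0(f_{Q(W)})(U) : \pi_0(Q(W)(U)) \to \pi_0(K(U))$ lands in the subset $P(U) := \pi_0(\Imm(f_Z)(U)) \subseteq \pi_0(K(U))$. Writing $P\subseteq\pi_0(K)$ for the associated sub-presheaf, I must then deduce from ``$\pi_0(f_{Q(X)})$ takes values in $P$'' that ``$\pi_0(f_{Q(Y)})$ takes values in $P$''.

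Two facts feed this. The first, and the technical heart of the argument, is that $P$ is \emph{local} in $\pi_0(K)$: whenever $s\in\pi_0(K)(U)$ restricts into $P$ over a good open cover $\{U_i\subseteq U\}$, already $s\in P(U)$. To prove it, represent $s$ by $k\in K(U)$ and, using repleteness, choose representatives $k\vert_{U_i}$ of $s\vert_{U_i}$ inside $\Imm(f_Z)(U_i)$, so that $(\{k\vert_{U_i}\},\{\id\})$ is an object of $\holim_{\Grpd}^{}\Imm(f_Z)(U_\bullet)$ (Proposition~\ref{propo:descentcat}) and $(k,(\{k\vert_{U_i}\},\{\id\}),\id)$ is an object of the homotopy fiber product $K(U)\times^h_{\holim_{\Grpd}^{}K(U_\bullet)}\holim_{\Grpd}^{}\Imm(f_Z)(U_\bullet)$ (Proposition~\ref{propo:homfibprod}); since $f_{\Imm(f_Z)}$ is a fibration in $\HH$, Proposition~\ref{propo:localfibration} says the canonical $\Grpd$-morphism from $\Imm(f_Z)(U)$ to this homotopy fiber product is a weak equivalence, hence essentially surjective, so there is $e\in\Imm(f_Z)(U)$ together with an isomorphism $e\cong k$ in $K(U)$, whence $s\in P(U)$. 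The second fact is that $\pi_0(Q(f)):\pi_0(Q(X))\to\pi_0(Q(Y))$ is a local epimorphism of presheaves of sets: the cofibrant replacement maps $q_X,q_Y$ are weak equivalences, so induce isomorphisms on the sheaves of $0$-th homotopy groups (Theorem~\ref{theo:localH}, Definition~\ref{def:sheavesofhomgroups}), and $f$ induces an epimorphism of those sheaves by hypothesis; hence so does $Q(f)$ via $f\circ q_X = q_Y\circ Q(f)$, and an epimorphism of sheaves is exactly a local epimorphism of the underlying presheaves.

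Finally I would combine the two: assuming $\pi_0(f_{Q(X)})$ takes values in $P$, let $U$ be an object in $\CC$ and $s\in\pi_0(Q(Y)(U))$; pick a good open cover $\{U_i\subseteq U\}$ and $r_i\in\pi_0(Q(X)(U_i))$ with $\pi_0(Q(f))(r_i)=s\vert_{U_i}$. Using $f_{Q(X)} = f_{Q(Y)}\circ Q(f)$ (the naturality square of $Q$ together with the triangle \eqref{eqn:fXmapfY}) we get $\pi_0(f_{Q(Y)})(s)\vert_{U_i}=\pi_0(f_{Q(X)})(r_i)\in P(U_i)$ for all $i$, so locality of $P$ forces $\pi_0(f_{Q(Y)})(s)\in P(U)$. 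Since $U$ and $s$ were arbitrary, $\pi_0(f_{Q(Y)})$ takes values in $P$, so $f_{Q(Y)}$ factors through $\iota$, completing the reduction. I expect the locality of $P$ in $\pi_0(K)$ to be the only genuinely non-formal point, as it is where the homotopy-pullback characterisation of local fibrations is needed rather than mere stage-wise information; everything else is bookkeeping with sheafification and cofibrant replacements.
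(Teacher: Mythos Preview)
Your proof is correct and uses the same essential ingredients as the paper, but packages them more conceptually. The paper, after the identical reduction via Remark~\ref{rem:simpleS-1eqv} and Lemma~\ref{lem:emptyorpoint}, constructs the factorization $\widehat{f}\colon Q(Y)\to\Imm(f_Z)$ directly: for each object $y$ in $Q(Y)(U)$ it uses the local-epimorphism data $(x_i,h_i)$ to build an explicit object of the homotopy fiber product $K(U)\times^h_{\holim K(U_\bullet)}\holim\Imm(f_Z)(U_\bullet)$, invokes Proposition~\ref{propo:localfibration} to find a preimage $z$ in $\Imm(f_Z)(U)$, adjusts it via the stage-wise fibration so that $f_{\Imm(f_Z)}(z)=f_{Q(Y)}(y)$ on the nose, and finally argues that naturality of $y\mapsto z$ follows from the uniqueness in Lemma~\ref{lem:emptyorpoint}.

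Your route replaces this object-by-object construction with the single observation that factoring through the replete full sub-presheaf $\Imm(f_Z)\hookrightarrow K$ is detected entirely at the level of the presheaf $\pi_0$, and then isolates the role of Proposition~\ref{propo:localfibration} as the clean statement that $P:=\pi_0(\Imm(f_Z))\subseteq\pi_0(K)$ is a local sub-presheaf. This separation of concerns makes the logic more transparent and avoids having to verify naturality of an explicit construction; the paper's approach, in turn, makes the actual lift $\widehat{f}$ visible and would generalise more readily to situations where the target is not a full replete sub-object. The underlying use of the homotopy-pullback condition is identical in both arguments.
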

\begin{proof}
Because $f:X\to Y$ induces by hypothesis 
an epimorphism of sheaves of $0$-th homotopy groups,
so does its cofibrant replacement $Q(f)$.
Let us denote the corresponding {\em pre}sheaf morphism by 
$\pi_0(Q(f)) : \pi_0(Q(X))\to \pi_0(Q(Y))$ and recall that
its sheafification is a sheaf epimorphism if and only if the following
condition holds true, see e.g.\ \cite[Section III.7]{MacLaneMoerdijk}:
For each object $U$ in $\CC$ and each element $[y]\in \pi_0(Q(Y))(U)$,
there exists a good open cover $\{U_i\subseteq U\}$ and a tuple
$\{[x_i]\in \pi_0(Q(X))(U_i)\}$ such that
$[y]\vert_{U_i}^{} = \pi_0(Q(f))[x_i]$, for all $i$. Equivalently, 
for each object $U$ in $\CC$ and each 
object $y$ in $Q(Y)(U)$, there exists a good open cover
$\{U_i\subseteq U\}$, objects $x_i$ in $Q(X)(U_i)$
and morphisms $h_i : y\vert_{U_i}^{} \to Q(f)(x_i)$
in $Q(Y)(U_i)$.
\sk

Using Remark \ref{rem:simpleS-1eqv} and Lemma \ref{lem:emptyorpoint},
our claim would follow by proving that
\begin{flalign}
Q(f)^\ast : \Grpd_{\HH/K}(f_{Q(Y)}, f_{\Imm(f_Z)}) \longrightarrow \Grpd_{\HH/K}(f_{Q(X)}, f_{\Imm(f_Z)})
\end{flalign}
is never a groupoid morphism from $\emptyset$ to $\{\ast\}$, for all $S_{-1}$-local objects $f_Z: Z\to K$. 
Let us therefore assume that the target groupoid is $\{\ast\}$. We have to prove that there exists a 
dashed arrow
\begin{flalign}\label{eqn:tmpdiagramS-1eqv}
\xymatrix{
\ar@/^2pc/[rrrr]^-{\widetilde{f}}\ar[rrd]_-{f_{Q(X)}} Q(X) \ar[rr]^-{Q(f)} && \ar[d]_-{f_{Q(Y)}} Q(Y) \ar@{-->}[rr]^-{\widehat{f}} && \Imm(f_Z)\ar[dll]^-{~~f_{\Imm(f_Z)}}\\
&&K&&
}
\end{flalign}
completing the commutative diagram. 
It is sufficient to define the
dashed arrow $\widehat{f}$ on objects 
as its action on morphisms is then 
fixed uniquely by the commutative diagram
and stage-wise full faithfulness of $f_{\Imm(f_Z)}$.
Moreover, if the dashed arrow $\widehat{f}$ exists
it is unique because of Lemma \ref{lem:emptyorpoint}.
Let $U$ be any object in $\CC$ and $y$ any object in $Q(Y)(U)$.
Because $Q(f)$ induces an epimorphism on sheaves of $0$-th homotopy groups, 
by our discussion above there exists a good open cover $\{U_i\subseteq U\}$,
a family of objects $x_i$ in $Q(X)(U_i)$ and a family of morphism
$h_i : y\vert_{U_i}^{} \to Q(f)(x_i)$ in $Q(Y)(U_i)$.
We use these data and our diagram \eqref{eqn:tmpdiagramS-1eqv} to define an object 
\begin{flalign}\label{eqn:tmphofibobj}
\big(f_{Q(Y)}(y), \big(\{  \widetilde{f}(x_i)\} , \{ g_{ij}\}\big),\{k_i \}\big)
\end{flalign}
of the homotopy fiber product
\begin{flalign}\label{eqn:homfibprodtmp}
K(U)\times^h_{\holim_{\Grpd}^{} K(U_\bullet)} \holim_{\Grpd}^{} \Imm(f_Z)(U_\bullet)~.
\end{flalign}
Explicitly, the $\Imm(f_Z)(U_{ij})$-morphisms 
$g_{ij} : \widetilde{f}(x_i)\vert_{U_{ij}}^{} \to \widetilde{f}(x_j)\vert_{U_{ij}}^{}$ 
are defined by using stage-wise full faithfulness of $f_{\Imm(f_Z)}$  and the commutative diagrams
\begin{flalign}
\xymatrix{
\ar@{=}[d] f_{\Imm(f_Z)}\widetilde{f}(x_i)\vert_{U_{ij}}^{} \ar@{-->}[rrrr]^-{f_{\Imm(f_Z)}(g_{ij})}&&&&f_{\Imm(f_Z)}\widetilde{f}(x_j)\vert_{U_{ij}}^{} \ar@{=}[d] \\
f_{Q(Y)} Q(f)(x_i)\vert_{U_{ij}}^{} &&
\ar[ll]^-{f_{Q(Y)}(h_i)\vert_{U_{ij}}^{}} f_{Q(Y)} (y)\vert_{U_{ij}}^{} \ar[rr]_-{f_{Q(Y)}(h_j)\vert_{U_{ij}}^{}}
&& f_{Q(Y)} Q(f)(x_j)\vert_{U_{ij}}^{}
}
\end{flalign}
The $K(U_i)$-morphisms $k_i : f_{Q(Y)}(y\vert_{U_i}^{}) \to f_{\Imm(f_Z)}\widetilde{f}(x_i)$
are defined by the commutative diagrams
\begin{flalign}
\xymatrix{
\ar[drr]_-{f_{Q(Y)}(h_i)~~~~~} f_{Q(Y)}(y\vert_{U_i}^{}) \ar@{-->}[rr]^-{k_i} && f_{\Imm(f_Z)}\widetilde{f}(x_i)\ar@{=}[d]\\
&&f_{Q(Y)}Q(f)(x_i)
}
\end{flalign}
Because $f_{\Imm(f_Z)}: \Imm(f_Z) \to K$ is  a (local) fibration in $\HH$,
there exists a morphism in the groupoid \eqref{eqn:homfibprodtmp} from an object
$(f_{\Imm(f_Z)}(z), (\{z\vert_{U_{i}}^{}\},\{\id\}),\{\id\})$, where $z$ is an object in $\Imm(f_Z)(U)$, 
to \eqref{eqn:tmphofibobj}. Using that $f_{\Imm(f_Z)}$
is a stage-wise fibration, the object $z$ in $\Imm(f_Z)(U)$ may be chosen such
that $f_{\Imm(f_Z)}(z) = f_{Q(Y)}(y)$. Because $f_{\Imm(f_Z)}$ is stage-wise injective on objects, such $z$ 
is unique and we may define the dashed arrow by setting $\widehat{f}(y) = z$. Naturality of our construction
of $\widehat{f}$ follows immediately from uniqueness, which completes our proof.
\end{proof}

We developed sufficient technology to obtain a functorial fibrant replacement in the $(-1)$-truncation
of $\HH/K$. Let $f_X : X\to K$ be any object in $\HH/K$.
We define a new object in $\HH/K$, which we call the {\em $1$-image of $f_X$}
and denote as $f_{\Imm_1(f_X)} : \Imm_1(f_X)\to K$, by the following
construction: For an object $U$ in $\CC$, the groupoid
$\Imm_1(f_X)(U)$ has as objects all objects $z$ in $K(U)$
for which there exist a good open cover $\{U_i\subseteq U\}$,
objects $x_{i}$ in $X(U_i)$ and $K(U_i)$-morphisms
$h_{i} : z\vert_{U_i}^{} \to f_X(x_i)$. The morphisms
between two objects $z$ and $z^\prime$ in 
$\Imm_1(f_X)(U)$ are all $K(U)$-morphisms $k : z\to z^\prime$.
For a morphism $\ell : U\to U^\prime$ in $\CC$,
the groupoid morphism $\Imm_1(f_X)(\ell) : \Imm_1(f_X)(U^\prime)\to \Imm_1(f_X)(U)$
is the one induced by $K(\ell) : K(U^\prime)\to K(U)$. (To show that $K(\ell)(z)$ is an 
object in $\Imm_1(f_X)(U)$ for any object $z$ in $\Imm_1(f_X)(U^\prime)$ 
use refinements of open covers to good open covers.)
The $\HH$-morphism $f_{\Imm_1(f_X)} : \Imm_1(f_X)\to K$ is then given by
stage-wise full subcategory embedding. There is a canonical commutative diagram
\begin{flalign}\label{eqn:H/Kfibrep}
\xymatrix{
\ar[dr]_-{f_X}X \ar[rr]^-{\widetilde{f}_X} && \Imm_1(f_X)\ar[dl]^-{~~f_{\Imm_1(f_X)}}\\
&K&
}
\end{flalign}
in $\HH$, i.e.\ a canonical $\HH/K$-morphism $\widetilde{f}_X$
from  $f_X : X\to K$ to  $f_{\Imm_1(f_X)} : \Imm_1(f_X) \to K$.
\begin{propo}\label{propo:fibrepH/K}
Let $f_X : X\to K$ be any object in $\HH/K$. Then \eqref{eqn:H/Kfibrep}
is a fibrant replacement in the $(-1)$-truncation of $\HH/K$.
More explicitly, this means that $f_{\Imm_1(f_X)} : \Imm_1(f_X) \to K$ is an $S_{-1}$-local object
in $\HH/K$ and $\widetilde{f}_X$ is an $S_{-1}$-local equivalence in $\HH/K$.
\end{propo}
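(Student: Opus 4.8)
The plan is to establish the two assertions separately, leaning on the explicit characterisations already proved: Lemma \ref{lem:S-1localobjects} for $S_{-1}$-local objects and Proposition \ref{propo:S-1equivalences} for $S_{-1}$-local equivalences. Throughout I would use that, by construction, $\Imm_1(f_X)(U)$ is a \emph{full} subgroupoid of $K(U)$ and $f_{\Imm_1(f_X)}$ the corresponding inclusion, so stage-wise full faithfulness of $f_{\Imm_1(f_X)}$ is automatic and any $K(U)$-morphism between two objects of $\Imm_1(f_X)(U)$ is already a morphism of $\Imm_1(f_X)(U)$.

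\textbf{Step 1: $f_{\Imm_1(f_X)}:\Imm_1(f_X)\to K$ is an $S_{-1}$-local object.} By Lemma \ref{lem:S-1localobjects} it remains to check that $f_{\Imm_1(f_X)}$ is a fibration in the local model structure, for which I would invoke Proposition \ref{propo:localfibration}. Condition~1 (stage-wise fibration in $\Grpd$, cf.\ Theorem \ref{theo:groupoids}) holds because all groupoid morphisms are invertible: given an object $z$ of $\Imm_1(f_X)(U)$ and a morphism $h:z\to y$ in $K(U)$, the good cover, objects in $X$ and morphisms witnessing $z\in\Imm_1(f_X)(U)$ can be precomposed with the restrictions of $h^{-1}$, so $y\in\Imm_1(f_X)(U)$ and $h$ itself is the required lift. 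For Condition~2 I would compute the homotopy fibre product $K(U)\times^h_{\holim_{\Grpd} K(U_\bullet)}\holim_{\Grpd}\Imm_1(f_X)(U_\bullet)$ via Propositions \ref{propo:homfibprod} and \ref{propo:descentcat}, exactly as in the proof of Lemma \ref{lem:S-1classification}, and show the canonical morphism out of $\Imm_1(f_X)(U)$ is a weak equivalence. Full faithfulness is formal: a morphism $(h,\{h_i\})$ between images is forced to have $h_i=h|_{U_i}$, and $h$ is a morphism of $\Imm_1(f_X)(U)$ by fullness. Essential surjectivity is where the \emph{local} definition of $\Imm_1$ enters: given a tuple $\big(y,(\{z_i\},\{g_{ij}\}),\{k_i\}\big)$ over a good cover $\{U_i\subseteq U\}$, each $z_i$ is trivialised over a good cover $\{W_{i,a}\subseteq U_i\}$ by objects $x_{i,a}$ in $X(W_{i,a})$ and morphisms $z_i|_{W_{i,a}}\to f_X(x_{i,a})$; refining $\{W_{i,a}\}_{i,a}$ to a good open cover of $U$ and precomposing with the $k_i$ exhibits $y$ as an object of $\Imm_1(f_X)(U)$, and then $(\id_y,\{k_i\})$ is a morphism from the image of $y$ to the given tuple (the identity $g_{ij}\circ k_i|_{U_{ij}}=k_j|_{U_{ij}}$ being part of the data). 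Hence $f_{\Imm_1(f_X)}$ is a local fibration and thus an $S_{-1}$-local object.

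\textbf{Step 2: $\widetilde{f}_X$ is an $S_{-1}$-local equivalence.} By Proposition \ref{propo:S-1equivalences} it suffices that the $\HH$-morphism $\widetilde{f}_X:X\to\Imm_1(f_X)$ induce an epimorphism on the associated sheaves of $0$-th homotopy groups (cf.\ Definition \ref{def:sheavesofhomgroups}). By the local-surjectivity criterion for sheaf epimorphisms used in the proof of Proposition \ref{propo:S-1equivalences}, this means: for every object $U$ in $\CC$ and every object $z$ of $\Imm_1(f_X)(U)$ there are a good open cover $\{U_i\subseteq U\}$, objects $x_i$ in $X(U_i)$ and morphisms $z|_{U_i}\to\widetilde{f}_X(x_i)=f_X(x_i)$ in $\Imm_1(f_X)(U_i)$. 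This is nothing but the defining property of the objects of $\Imm_1(f_X)(U)$, so the condition holds tautologically, and \eqref{eqn:H/Kfibrep} is the claimed fibrant replacement.

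\textbf{Expected main obstacle.} The one genuinely non-formal point is the homotopy-pullback verification in Step~1: one must unravel the descent and homotopy-fibre-product descriptions and simultaneously juggle the cover from the fibrancy condition and the covers implicit in each membership $z_i\in\Imm_1(f_X)(U_i)$, passing to common good refinements. Everything else collapses either to elementary full-faithfulness bookkeeping or directly to the definition of $\Imm_1(f_X)$; in particular, no hypotheses on $X$, $K$ or $f_X$ (such as fibrancy) are needed.
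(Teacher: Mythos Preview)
Your proof is correct and follows essentially the same approach as the paper: both use Proposition~\ref{propo:S-1equivalences} to conclude that $\widetilde{f}_X$ is an $S_{-1}$-local equivalence directly from the definition of $\Imm_1(f_X)$, and both verify $S_{-1}$-locality of $f_{\Imm_1(f_X)}$ via Lemma~\ref{lem:S-1localobjects} and Proposition~\ref{propo:localfibration}, computing the homotopy fibre product explicitly and establishing essential surjectivity through the morphism $(\id_y,\{k_i\})$ together with a refinement-of-covers argument to show $y\in\Imm_1(f_X)(U)$. The only cosmetic difference is the order of the two steps and your slightly more explicit treatment of the stage-wise fibration check.
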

\begin{proof}
By construction of $\Imm_1(f_X)$, it is clear that $\widetilde{f}_X$ induces an
epimorphism on sheaves of $0$-th homotopy groups, hence $\widetilde{f}_X$ 
is an $S_{-1}$-local equivalence because of Proposition \ref{propo:S-1equivalences}.
We now prove that $f_{\Imm_1(f_X)} : \Imm_1(f_X) \to K$ is $S_{-1}$-local, 
cf.\ Lemma \ref{lem:S-1localobjects}. By construction, it is clear that
$f_{\Imm_1(f_X)}$ is stage-wise fully faithful and a stage-wise fibration. It remains to verify
item 2.\ of Proposition \ref{propo:localfibration}, i.e.\ that the canonical morphism
\begin{flalign}\label{tmp:Imm1canmap}
\Imm_1(f_X)(U) \longrightarrow K(U)\times^h_{\holim_{\Grpd}^{} K(U_\bullet)} \holim_{\Grpd}^{}\Imm_1(f_X)(U_\bullet)
\end{flalign}
is a weak equivalence in $\Grpd$, for all good open covers $\{U_i\subseteq U\}$.
Similarly to the proof of Lemma \ref{lem:S-1classification}, objects in the target groupoid
are tuples $(z,(\{z_i\},\{g_{ij}\}),\{k_i\})$, where $z$ is an object in $K(U)$,
$z_i$ are objects in $\Imm_1(f_X)(U_i)$, $g_{ij} : z_i\vert_{U_{ij}}^{}\to z_i\vert_{U_{ij}}^{}$ 
are  $K(U_{ij})$-morphisms and $k_i : z\vert_{U_i}^{}\to z_i$ are $K(U_i)$-morphisms.
This data has to satisfy $g_{ii} =\id$, for all $i$, $g_{jk}\vert_{U_{ijk}}^{}\circ g_{ij}\vert_{U_{ijk}}^{} =
g_{ik}\vert_{U_{ijk}}^{}$, for all $i,j,k$, and $g_{ij}\circ k_i\vert_{U_{ij}}^{}=k_j\vert_{U_{ij}}^{}$, for all $i,j$.
The morphisms are tuples 
$(h,\{h_i\}) : (z,(\{z_i\},\{g_{ij}\}),\{k_i\}) \to (z^\prime,(\{z^\prime_i\},\{g^\prime_{ij}\}),\{k^\prime_i\})$,
where $h:z\to z^\prime$ is a $K(U)$-morphism and $h_i : z_i\to z_i^\prime$ are $K(U_i)$-morphisms,
satisfying $g^\prime_{ij}\circ h_i\vert_{U_{ij}}^{} = h_{j}\vert_{U_{ij}}^{}\circ g_{ij}$, for all $i,j$,
and $k_i^\prime\circ h\vert_{U_i}^{} = h_i\circ k_i$, for all $i$.
We prove that the canonical morphism \eqref{tmp:Imm1canmap} is essentially surjective.
For each object $(z,(\{z_i\},\{g_{ij}\}),\{k_i\})$ of the target groupoid, there exists
a morphism $(\id_z,\{k_i\}): (z,(\{z\vert_{U_i}^{}\},\{\id\}),\{\id\}) \to (z,(\{z_i\},\{g_{ij}\}),\{k_i\})$.
Notice that $z$ is an object in $\Imm_1(f_X)(U)$: There are $K(U_i)$-morphisms 
$k_i : z\vert_{U_i}^{}\to z_i$ to objects $z_i$ in $\Imm_1(f_X)(U_i)$, for all $i$. By definition, 
there exists a good open cover $\{U_{i,\alpha_i}\subseteq U_i\}$,
objects $x_{i,\alpha_i}$ in $X(U_{i,\alpha_i})$ and $K(U_{i,\alpha_i})$-morphisms 
$h_{i,\alpha_i} : z_i\vert_{U_{i,\alpha_i}}^{} \to f_X(x_{i,\alpha_i})$, for all $i$.
Taking any good open cover of $U$ refining the (not necessarily good) 
open cover $\{U_{i,\alpha_i}\subseteq U\}$, one shows that $z$ is indeed 
an object in $\Imm_1(f_X)(U)$ and hence that \eqref{tmp:Imm1canmap} 
is essentially surjective. Full faithfulness of the canonical morphism 
\eqref{tmp:Imm1canmap} is easy to confirm, which completes our proof.
\end{proof}

The $1$-image  $f_{\Imm_1(f_X)} : \Imm_1(f_X)\to K$ 
of an object $f_X : X\to K$ in $\HH/K$, which according to Proposition
\ref{propo:fibrepH/K} is a fibrant replacement in the $(-1)$-truncation of $\HH/K$,
is not very convenient to work with practically. 
We conclude this appendix by showing that
the canonical morphism
\begin{flalign}\label{eqn:ImmtoImm1}
\xymatrix{
\ar[dr]_-{f_{\Imm(f_X)}} \Imm(f_X) \ar[rr] && \Imm_1(f_X)\ar[dl]^-{f_{\Imm_1(f_X)}}\\
&K&
}
\end{flalign}
from the full image of $f_X : X \to K$ to its $1$-image 
is a weak equivalence in the un-truncated model structure on $\HH/K$.
This allows us to use the simpler concept of 
full image instead of the $1$-image in the main body of this paper.
\begin{propo}\label{propo:easyfibrepH/K}
Let $f_X : X\to K$ be any object in $\HH/K$. Then the canonical $\HH/K$-morphism 
\eqref{eqn:ImmtoImm1} is a weak equivalence in the un-truncated model structure on $\HH/K$.
\end{propo}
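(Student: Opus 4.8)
The plan is to show that the canonical $\HH/K$-morphism in \eqref{eqn:ImmtoImm1} is a weak equivalence in the un-truncated model structure on $\HH/K$, which by definition means that the underlying $\HH$-morphism $\Imm(f_X) \to \Imm_1(f_X)$ is a weak equivalence in the local model structure on $\HH$. By Theorem \ref{theo:localH}(i), this amounts to showing that it induces an isomorphism on the associated sheaves of homotopy groups. A direct approach via Lemma \ref{lem:S-1classification}-style comparisons is tempting, but the cleanest route is to observe that both $\Imm(f_X)$ and $\Imm_1(f_X)$ are, stage-wise, \emph{full} subcategories of the same presheaf of groupoids $K$, and the morphism in question is just the inclusion induced by the evident containment of object-sets $\Imm(f_X)(U) \subseteq \Imm_1(f_X)(U)$ inside $K(U)$, for all $U$ in $\CC$.

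First I would record two elementary structural facts about the canonical $\HH$-morphism $\Imm(f_X) \to \Imm_1(f_X)$. Since both presheaves have all morphisms inherited from $K$ and the map is a full subcategory inclusion at each stage, it is \emph{stage-wise fully faithful}; in particular it is injective on $\pi_1$, and in fact an isomorphism on each $\pi_1(-,x)$ (the automorphism group of $x$ in $\Imm(f_X)(U)$ equals that of $x$ in $\Imm_1(f_X)(U)$ since both equal $\mathrm{Aut}_{K(U)}(x)$). So the induced map of presheaves of homotopy groups is already fully faithful, hence injective on $\pi_0$ and bijective on $\pi_1$ at the presheaf level. The only thing left to establish is that the map becomes an isomorphism on the sheafification of $\pi_0$, i.e. that it is a \emph{local epimorphism} on $\pi_0$.

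The heart of the matter is therefore the following: for every object $U$ in $\CC$ and every object $z$ in $\Imm_1(f_X)(U)$, there exists a good open cover $\{U_i \subseteq U\}$ such that each restriction $z\vert_{U_i}$ is isomorphic in $\Imm_1(f_X)(U_i)$ to some object lying in $\Imm(f_X)(U_i)$. But this is precisely the defining property of $\Imm_1(f_X)$: by construction of the $1$-image, $z$ in $\Imm_1(f_X)(U)$ means exactly that there is a good open cover $\{U_i\subseteq U\}$, objects $x_i$ in $X(U_i)$, and $K(U_i)$-morphisms $h_i : z\vert_{U_i} \to f_X(x_i)$. Each $f_X(x_i)$ lies in $\Imm(f_X)(U_i)$ by definition of the full image, and $h_i$ is an isomorphism in $K(U_i)$, hence in $\Imm_1(f_X)(U_i)$. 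This gives the required local lift, so $\pi_0(\Imm(f_X)) \to \pi_0(\Imm_1(f_X))$ is an epimorphism of sheaves after sheafification, while injectivity was noted above. Combined with the isomorphism on $\pi_1$, Theorem \ref{theo:localH}(i) yields that $\Imm(f_X) \to \Imm_1(f_X)$ is a weak equivalence in $\HH$, hence \eqref{eqn:ImmtoImm1} is a weak equivalence in the un-truncated model structure on $\HH/K$.

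I do not anticipate a serious obstacle here; the only mild subtlety is bookkeeping with the sheafification of $\pi_0$ (one must phrase the local-epimorphism condition correctly, as recalled in the proof of Proposition \ref{propo:S-1equivalences} via \cite[Section III.7]{MacLaneMoerdijk}), and checking that the $\pi_1$-comparison is genuinely an isomorphism and not merely a monomorphism of presheaves. Both points are essentially immediate from the fact that $\Imm(f_X)$, $\Imm_1(f_X)$, and $K$ share the same morphisms stage-wise. An alternative, slightly slicker packaging: one can instead invoke the already-proven facts that $f_{\Imm_1(f_X)} : \Imm_1(f_X)\to K$ is $S_{-1}$-local (Proposition \ref{propo:fibrepH/K}) and that, when $f_X$ is moreover $S_{-1}$-local, $f_{\Imm(f_X)}$ is $S_{-1}$-local with $X \to \Imm(f_X)$ a weak equivalence (Lemma \ref{lem:S-1classification}); but since the statement is required for arbitrary $f_X$, the direct sheaf-of-homotopy-groups argument above is the most economical.
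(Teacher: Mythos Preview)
Your proposal is correct and follows essentially the same approach as the paper: reduce to a weak equivalence in the local model structure via Theorem~\ref{theo:localH}(i), then use that $\Imm(f_X)\to\Imm_1(f_X)$ is stage-wise fully faithful together with the observation that it induces an epimorphism on sheaves of $\pi_0$ (which is immediate from the defining local-lift property of $\Imm_1$). The paper's proof is more terse---it packages the check into Hollander's local lifting criteria \cite[Definition~5.6 and Theorem~5.7]{Hollander}---whereas you unpack the sheaf-of-homotopy-groups argument directly, but the substance is identical.
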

\begin{proof}
One has to show that the corresponding 
$\HH$-morphism  $\Imm(f_X)  \to \Imm_1(f_X)$ is a weak equivalence in the local model structure
on $\HH$, i.e.\ that it induces an isomorphism on sheaves of homotopy groups (cf.\ Theorem \ref{theo:localH}).
This can be easily confirmed by verifying
the local lifting conditions in \cite[Definition 5.6 and Theorem 5.7]{Hollander},
making use of the facts that  $\Imm(f_X)  \to \Imm_1(f_X)$ is stage-wise fully faithful
and that it induces an epimorphism on sheaves of $0$-th homotopy groups.
\end{proof}

Let $g: K \to \widetilde{K}$ be a weak equivalence in $\HH$. 
We establish a relation induced by $g$ between our fibrant replacements 
in the $(-1)$-truncations of  $\HH/K$ and $\HH/\widetilde{K}$.
\begin{propo}\label{propo:basechange}
Let $g: K \to \widetilde{K}$ be a weak equivalence in $\HH$ and $f_X:X \to K$ an object in $\HH/K$.
Then there exist canonical weak equivalences $\Imm_1(f_X) \to \Imm_1(g \circ f_X)$ 
and $\Imm(f_X) \to \Imm(g \circ f_X)$ in $\HH$. 
\end{propo}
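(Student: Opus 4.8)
The plan is to construct the two canonical $\HH$-morphisms explicitly, prove the statement for $\Imm$ by reducing Hollander's local lifting criterion to the one for $g$, and then deduce the statement for $\Imm_1$ by a 2-out-of-3 argument.

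First I would write down the morphisms. Recall that $f_{\Imm(f_X)} : \Imm(f_X)\to K$ and $f_{\Imm(g\circ f_X)} : \Imm(g\circ f_X)\to\widetilde K$ are stage-wise full subcategory embeddings, that the objects of $\Imm(f_X)(U)$, resp. of $\Imm(g\circ f_X)(U)$, are exactly the $f_X(x)$, resp. the $(g\circ f_X)(x)=g(f_X(x))$, for $x$ an object of $X(U)$, and that the morphisms between two such objects are \emph{all} morphisms of $K(U)$, resp. of $\widetilde K(U)$, between them. Hence $g$ restricts to an $\HH$-morphism $\Imm(f_X)\to\Imm(g\circ f_X)$, sending $f_X(x)\mapsto g(f_X(x))$ and acting by $g$ on morphisms. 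For the $1$-images, if $z$ is an object of $\Imm_1(f_X)(U)$ witnessed by a good open cover $\{U_i\subseteq U\}$, objects $x_i$ in $X(U_i)$ and $K(U_i)$-morphisms $h_i : z\vert_{U_i}\to f_X(x_i)$, then $g(z)$ is an object of $\Imm_1(g\circ f_X)(U)$ witnessed by the same cover and objects together with the $\widetilde K(U_i)$-morphisms $g(h_i) : g(z)\vert_{U_i}\to (g\circ f_X)(x_i)$; thus $g$ also restricts to an $\HH$-morphism $\Imm_1(f_X)\to\Imm_1(g\circ f_X)$. The square formed by these two morphisms and the canonical maps $\Imm(f_X)\to\Imm_1(f_X)$ and $\Imm(g\circ f_X)\to\Imm_1(g\circ f_X)$ of \eqref{eqn:ImmtoImm1} visibly commutes, since all four arrows are given stage-wise by $g$ and/or by full subcategory inclusions.

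Next I would prove that $\Imm(f_X)\to\Imm(g\circ f_X)$ is a weak equivalence in the local model structure on $\HH$ by verifying the local lifting conditions of \cite[Definition 5.6 and Theorem 5.7]{Hollander}. Since $g$ is by hypothesis a weak equivalence it already satisfies those conditions, and each of them for our morphism reduces to the corresponding one for $g$ restricted to objects and morphisms of the form $f_X(x)$: local essential surjectivity is immediate, as every object $g(f_X(x))$ of $\Imm(g\circ f_X)(U)$ is already hit on the trivial good open cover $\{U\}$ by $f_X(x)$; local fullness amounts to the statement that a given morphism $\beta : g(f_X(x))\to g(f_X(x'))$ of $\widetilde K(U)$ is, on a good open cover, the image under $g$ of a morphism of $K$, which is precisely local fullness of $g$; and local faithfulness amounts to the statement that two morphisms of $K(U)$ between $f_X(x)$ and $f_X(x')$ with the same image under $g$ agree on a good open cover, which is local faithfulness of $g$. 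Throughout, one uses naturality of $f_X$, so that $f_X(x)\vert_{U_i}=f_X(x\vert_{U_i})$, to see that the locally constructed morphisms genuinely lie in the image subpresheaves. I expect this last bookkeeping — carrying the good open covers along and checking membership in the image subpresheaves — to be the main, albeit routine, technical point.

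Finally, with the commuting square in hand, the two vertical arrows $\Imm(f_X)\to\Imm_1(f_X)$ and $\Imm(g\circ f_X)\to\Imm_1(g\circ f_X)$ are weak equivalences in $\HH$ by Proposition \ref{propo:easyfibrepH/K} (applied over $K$ and over $\widetilde K$ respectively), and the top arrow $\Imm(f_X)\to\Imm(g\circ f_X)$ is a weak equivalence by the previous paragraph; the 2-out-of-3 property of weak equivalences then shows that the bottom arrow $\Imm_1(f_X)\to\Imm_1(g\circ f_X)$ is a weak equivalence as well. Alternatively, the local lifting conditions for $\Imm_1(f_X)\to\Imm_1(g\circ f_X)$ can be checked directly by the same reduction to $g$, but the 2-out-of-3 route is shorter.
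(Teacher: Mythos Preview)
Your proof is correct, but it runs in the opposite order from the paper's argument. The paper first proves that $\Imm_1(f_X)\to\Imm_1(g\circ f_X)$ is a weak equivalence by a diagram chase on sheaves of homotopy groups: in the commuting rectangle
\begin{flalign*}
\xymatrix@C=4pc{
X\ar[d]_-{\id_X}\ar[r]^-{\widetilde{f_X}} & \Imm_1(f_X)\ar@{-->}[d]\ar[r]^-{f_{\Imm_1(f_X)}} & K\ar[d]^-{g}\\
X\ar[r]_-{\widetilde{g\circ f_X}} & \Imm_1(g\circ f_X)\ar[r]_-{f_{\Imm_1(g\circ f_X)}} & \widetilde K
}
\end{flalign*}
the horizontal composites factor each $f_X$ and $g\circ f_X$ as (epi on $\pi_0$-sheaves) followed by (mono on $\pi_0$-sheaves, iso on $\pi_1$-sheaves), and since the outer vertical arrows are a weak equivalence and the identity, a five-lemma style chase gives the result for the dashed arrow. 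The paper then deduces the $\Imm$-statement from this by the same commuting square and 2-out-of-3 that you use.

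Your route instead attacks $\Imm$ directly via Hollander's local lifting criterion, exploiting that $\Imm(f_X)\to\Imm(g\circ f_X)$ is literally the restriction of $g$ to full subpresheaves; this makes local fullness and faithfulness transparent, and local essential surjectivity is trivial because the map is already surjective on objects at each stage. You then deduce the $\Imm_1$-statement by 2-out-of-3. Your approach is slightly more elementary and avoids the abstract epi--mono factorization on homotopy sheaves; the paper's approach is more conceptual and makes the role of the $1$-image as a homotopy image factorization explicit. Both are fine, and your observation that the square with the $\Imm\to\Imm_1$ maps commutes is the common ingredient.
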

\begin{proof}
Notice that there exists a unique dashed $\HH$-morphism such that the diagram
\begin{flalign}\label{tmp:KtotildeKdiagram}
\xymatrix@C=4pc{
\ar[d]_-{\id_X}X\ar[r]^-{\widetilde{f_X}} & \ar@{-->}[d]\Imm_1(f_X) \ar[r]^-{f_{\Imm_1(f_X)}}& K\ar[d]^-{g}\\
X\ar[r]_-{\widetilde{g\circ f_X}} & \Imm_1(g \circ f_X) \ar[r]_-{f_{\Imm_1(g \circ f_X)}} & \widetilde{K}
}
\end{flalign}
in $\HH$ commutes. Recall that $\widetilde{f_X}$ and $\widetilde{g \circ f_X}$ induce 
epimorphisms on sheaves of $0$-th homotopy groups. Moreover, $f_{\Imm_1(f_X)}$ 
and $f_{\Imm_1(g \circ f_X)}$ are stage-wise fully faithful, hence
they induce monomorphisms on sheaves of $0$-th homotopy groups and isomorphisms
on all sheaves of $1$-st homotopy groups. Taking sheaves of homotopy groups 
in \eqref{tmp:KtotildeKdiagram}, a diagram chasing argument shows that
the dashed morphism induces isomorphisms on sheaves of homotopy groups.
Hence, the canonical morphism  $\Imm_1(f_X) \to \Imm_1(g \circ f_X)$  
is a weak equivalence in $\HH$, cf.\ Theorem \ref{theo:localH}.
\sk

Replacing the $1$-image $\Imm_1$ in \eqref{tmp:KtotildeKdiagram}
by the full image $\Imm$, one can also show that there 
exists a unique $\HH$-morphism $\Imm(f_X) \to \Imm(g \circ f_X)$ such that the 
corresponding diagram commutes. Moreover, it is compatible with the
canonical morphism  $\Imm_1(f_X) \to \Imm_1(g \circ f_X)$, i.e.\ the diagram
\begin{flalign}
\xymatrix{
\Imm(f_X) \ar[r] \ar[d] & \Imm_1(f_X) \ar[d] \\
\Imm(g \circ f_X) \ar[r] & \Imm_1(g \circ f_X) 
}
\end{flalign}
in $\HH$ commutes. Because $\Imm_1(f_X) \to \Imm_1(g \circ f_X)$ and
the horizontal arrows are weak equivalences (cf.\ Proposition \ref{propo:easyfibrepH/K}),
it follows by the 2-out-of-3 property of weak equivalences that
$\Imm(f_X) \to \Imm(g \circ f_X)$ is a weak equivalence too. 
\end{proof}


\section{\label{app:concretification}Concretification}
For the sake of completeness, we compare 
our concretification prescription of Section \ref{subsec:concrete} 
with the original construction proposed by \cite{Fiorenza, Schreiber}. 
In particular, we highlight why the latter fails to produce the desired result,
i.e.\ the stack describing smoothly parametrized families of principal $G$-bundles
with connections, and how the former fixes this aspect. Since this issue arises already 
for manifolds $M$ diffeomorphic to $\bbR^n$, for simplicity here we restrict to this case. 
As any such manifold $M$ may be regarded as an object in $\CC$,
the corresponding object $\underline{M}$ in $\HH$ is representable
and thus already cofibrant. As a consequence, the mapping stack ${\BGcon}^{\widecheck{M}}$
is weakly equivalent to the internal hom-object ${\BGcon}^{\underline{M}}$ in $\HH$.
\sk

According to \cite{Fiorenza, Schreiber}, the concretification of ${\BGcon}^{\underline{M}}$ 
is obtained by a two-step construction: First, one forms the fibrant replacements 
of both $\zeta: {\BG}^{\underline{M}} \to \sharp ({\BG}^{\underline{M}})$ 
and $\zeta_\mathrm{con}: {\BGcon}^{\underline{M}} \to \sharp ({\BGcon}^{\underline{M}})$ 
in the respective $(-1)$-truncations of $\HH/\sharp ({\BG}^{\underline{M}})$ 
and $\HH/\sharp ({\BGcon}^{\underline{M}})$, cf.\ Appendix \ref{app:truncation}.
(In the language of \cite{Fiorenza, Schreiber} this is called ``$1$-image factorization".) 
Using the weakly equivalent model provided by Proposition \ref{propo:easyfibrepH/K}, 
we obtain the factorizations 
\begin{flalign}
\xymatrix@R1pc{
{\BG}^{\underline{M}} \ar[r]^-{\zeta_1} ~&~ \sharp_1 \big({\BG}^{\underline{M}}\big) := 
\Imm(\zeta) \ar[r] ~&~ \sharp \big({\BG}^{\underline{M}}\big)~,	\\
{\BGcon}^{\underline{M}} \ar[r] ~&~ \sharp_1 \big({\BGcon}^{\underline{M}}\big) := 
\Imm(\zeta_\mathrm{con}) \ar[r] ~&~ \sharp \big({\BGcon}^{\underline{M}}\big)~,
}
\end{flalign} 
where we introduce the notation $\sharp_1$ for ease of comparison with \cite{Fiorenza, Schreiber}. 
Explicitly, at each stage $U$ in $\CC$, the groupoid 
$\sharp_1 ({\BG}^{\underline{M}})(U) = \sharp ({\BG}^{\underline{M}})(U)$ has only one object $\ast$ 
and morphisms given by families $\{g_p \in C^\infty(M,G)\big\}$, where $p \in U$ runs over all points of $U$. 
The groupoid $\sharp_1 ({\BG}^{\underline{M}})(U)$ has as objects all smoothly 
$U$-parametrized gauge fields $A \in \Omega^{1,0}(M \times U,\g)$, 
where $\Omega^{1,0}$ denotes the vertical $1$-forms on $M \times U \to U$, 
and as morphisms from $A$ to $A^\prime$ all families $\{g_p \in C^\infty(M,G)\}$, 
where $p\in U$ runs over all points of $U$, 
such that $ (\id_M \times p)^\ast A^\prime = ((\id_M \times p)^\ast A )\ra g_p$,
for all points $p:\bbR^0\to U$.
Notice that $\mathrm{forget}^{\underline{M}}: {\BGcon}^{\underline{M}} \to {\BG}^{\underline{M}}$ 
induces an $\HH$-morphism 
\begin{subequations}\label{eqn:sharp1forget}
\begin{flalign}
\sharp_1 \big({\mathrm{forget}^{\underline{M}}}\big) \, :\, 
\sharp_1 \big({\BGcon}^{\underline{M}}\big) ~\longrightarrow~ \sharp_1 \big({\BG}^{\underline{M}}\big)~.
\end{flalign}
Explicitly, at stage $U$ in $\CC$, this is the groupoid morphism
\begin{flalign}\nn
\sharp_1 \big({\mathrm{forget}^{\underline{M}}}\big)~ :~ 
A & \longmapsto \ast~, \\ 
\{g_p\} & \longmapsto \{g_p\}~.
\end{flalign}
\end{subequations}
Let us stress that this is \textit{not} a fibration in $\HH$. Loosely speaking, this happens because 
acting with a non-smoothly parametrized gauge transformation $\{g_p\}$ 
on a smoothly parametrized gauge field in general results in a non-smoothly parametrized gauge field. 
\sk

Notice that $\sharp_1 ({\BGcon}^{\underline{M}})$ has the desired objects, 
i.e.\ smoothly parametrized families of gauge fields,
however the morphisms are general families of gauge transformations $\{g_p\}$
between smoothly parametrized gauge fields.
The construction of \cite{Fiorenza, Schreiber} claims to fix this issue 
in a second step, that consists of taking the homotopy fiber product 
in $\HH$ of the diagram 
\begin{flalign}\label{tmp:hofibprodapp}
\xymatrix{
\sharp_1 \big({\BGcon}_{~}^{\underline{M}}\big) \ar[rr]^-{\sharp_1 (\mathrm{forget}^{\underline{M}})} 
~&&~ \sharp_1 \big({\BG}_{~}^{\underline{M}}\big)~ &&~\ar[ll]_-{\zeta_1} {\BG}_{~}^{\underline{M}}
}~,
\end{flalign}
which we shall  denote  by $\widetilde{\GCon}(M)$.
Note that the ordinary fiber product would produce the desired result, 
i.e.\ smoothly parametrized gauge fields and gauge transformations.
However, none of the $\HH$-morphisms in this pullback diagram is a fibration,
which prevents us from computing the homotopy fiber product as the ordinary one 
(as opposed to the situation in Section \ref{subsec:concrete}). 
We may still compute $\widetilde{\GCon}(M)$ by using Proposition \ref{propo:homfibprod}. 
Explicitly, at stage $U$ in $\CC$, 
the objects of the groupoid $\widetilde{\GCon}(M)(U)$ are tuples 
\begin{flalign}
\Big( A \in \Omega^{1,0}(M \times U,\g), ~ \big\{h_p \in C^\infty(M,G)\big\} \Big)~,
\end{flalign}
where $p\in U$ runs over all points of $U$, and
the morphisms from $(A, \{h_p\})$ to $(A^\prime, \{h^\prime_p\})$ are tuples 
\begin{flalign}
\Big( \big\{g_p \in C^\infty(M,G)\big\}, ~ \widetilde{g} \in C^\infty(M \times U,G)  \Big)~,
\end{flalign}
where $p\in U$ runs over all points of $U$, satisfying
\begin{subequations}
\begin{flalign}
 (\id_M \times p)^\ast A^\prime ~&=~ ((\id_M \times p)^\ast A)  \ra g_p ~, \\
 g_p  ~ h^\prime_p ~ &= ~ h_p ~ (\id_M \times p)^\ast\, \widetilde{g} ~,
\end{flalign}
\end{subequations}
for all points $p: \bbR^0 \to U$. 
Note that, loosely speaking, $\widetilde{\GCon}(M)(U)$ contains ``too many objects''. 
More precisely, it is {\em not} true that for each object $(A, \{h_p\})$ there exists 
a morphism $(\{g_p\}, \widetilde{g})$ to an object of the form 
$(A^\prime, \{e\})$, where $e$ is the constant map to the identity of $G$. 
In fact, existence of such morphism would imply the identities
\begin{flalign}
g_p = h_p ~ (\id_M \times p)^\ast \, \widetilde{g}~,\qquad
(\id_M \times p)^\ast A^\prime = ((\id_M \times p)^\ast A) \ra g_p ~,
\end{flalign}
for all points  $p: \bbR^0 \to U$, which in general cannot be
satisfied because $\{h_p\}$ is a generic family, 
while both $A^\prime$ and $\widetilde{g}$ must be smoothly parametrized.
\sk

Our proposal is to fix this issue as follows: Consider the canonical $\HH$-morphism 
${\BGcon}_{~}^{\underline{M}} \to \widetilde{\GCon}(M)$ to the homotopy fiber product of \eqref{tmp:hofibprodapp}. 
Explicitly, it projects $1$-forms 
on $M \times U$ onto their vertical parts on $M \times U \to U$ (and attaches the family $\{e\}$).
Computing the fibrant replacement of ${\BGcon}_{~}^{\underline{M}} \to \widetilde{\GCon}(M)$ 
in the $(-1)$-truncation of $\HH/\widetilde{\GCon}(M)$ 
(cf.\ Proposition  \ref{propo:easyfibrepH/K} for a convenient weakly equivalent model) 
then yields a factorization
\begin{flalign}
\xymatrix{
{\BGcon}_{~}^{\underline{M}} \ar[r] ~&~ \GCon(M) \ar[r] ~&~ \widetilde{\GCon}(M)
}~,
\end{flalign}
where $\GCon(M)$  correctly describes
smoothly parametrized gauge fields and gauge transformations. 
Explicitly, at stage $U$ in $\CC$, the objects of $\GCon(M)(U)$ are
all $A \in \Omega^{1,0}(M \times U,\g)$ 
and the morphisms from $A$ to $A^\prime$ are specified by $g \in C^\infty(M \times U,G)$,
such that $A^\prime =  A \ra^{\ver} g$.
Here $A \ra^{\ver} g:= g^{-1}\, A\, g + g^{-1}\, \dd^{\ver} g$ 
is the vertical action of gauge transformations that is
defined by the vertical de Rham differential $\dd^{\ver}$ on $M \times U \to U$. 

\begin{rem}
In the main body of this paper, starting from Section \ref{subsec:concrete}, 
we use a simplified, but equivalent, version of our concretification prescription proposed above 
(cf.\ Definition \ref{defi:diffconcret}). 
This is based on the observation that one may skip $\sharp_1 ({\BGcon}^{\underline{M}})$
and the homotopy fiber product of \eqref{tmp:hofibprodapp}. 
(The only reason why we introduced
$\sharp_1 ({\BGcon}^{\underline{M}})$ here is
to compare with \cite{Fiorenza, Schreiber}.) Our simplified 
construction goes as follows: Instead of \eqref{tmp:hofibprodapp}, 
consider the pullback diagram 
\begin{flalign}
\xymatrix{
\sharp \big({\BGcon}_{~}^{\underline{M}}\big) \ar[rr]^-{\sharp (\mathrm{forget}^{\underline{M}})} 
~&&~ \sharp \big({\BG}_{~}^{\underline{M}}\big)~ &&~\ar[ll]_-{\zeta} {\BG}_{~}^{\underline{M}}
}
\end{flalign}
in $\HH$, whose homotopy fiber product is weakly equivalent in $\HH$ to the ordinary one (because
the right-pointing morphism is a fibration). Denoting the 
fiber product by $P$, the fibrant replacement of the canonical 
$\HH$-morphism ${\BGcon}_{~}^{\underline{M}} \to P$ 
in the $(-1)$-truncation of $\HH/P$ 
(use again Proposition  \ref{propo:easyfibrepH/K} for a convenient weakly equivalent model)
yields a factorization
\begin{flalign}
\xymatrix{
{\BGcon}_{~}^{\underline{M}} \ar[r] ~&~ \GCon(M) \ar[r] ~&~ P
}~,
\end{flalign}
where $\GCon(M)$ is the concretified mapping stack that was found above.
\end{rem}


\end{document}